\let\oldproofname=\proofname
\renewcommand{\proofname}{\rm{\oldproofname}}
\newcommand{\reals}{{\rm I\!R}}
\newcommand{\munu}{{\mu\nu}}
\newcommand\comma{\hspace{.2in},\hspace{.2in}}
\renewcommand{\flat}{{\text{\musFlat{}}}}
\renewcommand{\sharp}{{\text{\musSharp{}}}}
\newtheorem{theorem}{Theorem}[section]
\newtheorem{proposition}[theorem]{Proposition}
\newtheorem{Definition}{Definition}
\newtheorem{Lemma}[theorem]{Lemma}
\theoremstyle{definition}
\newtheorem{Remark}[theorem]{Remark}
\newcommand*{\defeq}{\mathrel{\rlap{%
                     \raisebox{0.3ex}{$\m@th\cdot$}}%
                     \raisebox{-0.3ex}{$\m@th\cdot$}}%
                     =}
\newcommand\cd[2]{\nabla_{\!#1}{#2}}
\newcommand\conn[2]{\overline{\nabla}_{\!#1}{[#2]}}
\begin{document}

\title{\Huge Exact Parallel Waves in General Relativity}
\author{Cian Roche$^{*,\dag}$, Amir Babak Aazami$^\ddag$, Carla Cederbaum$^\dag$}

\maketitle
\thispagestyle{empty}

\begin{abstract}
We conduct a review of the basic definitions and the principal results in the study of wavelike spacetimes, that is spacetimes whose metric models massless radiation moving at the speed of light, focusing in particular on those geometries \textit{with parallel rays}. In particular, we motivate and connect their various definitions, outline their coordinate descriptions and present some classical results in their study in a language more accessible to modern readers, including the existence of ``null coordinates" and the construction of Penrose limits. We also present a thorough summary of recent work on causality in pp-waves, and describe progress in addressing an open question in the field -- the Ehlers--Kundt conjecture. 
\end{abstract}

\vfill
{\footnotesize 
$^*$Department of Physics, Massachusetts Institute of Technology, Cambridge, MA, USA\\
$^\dag$Department of Mathematics, Eberhard Karls Universit\"at T\"ubingen, Germany\\
$^\ddag$Department of Mathematics, Clark University, Worcester, MA, USA\\
Corresponding author: Cian Roche, \href{mailto:roche@mit.edu}{roche@mit.edu}}

\newpage

\setcounter{page}{1}

\newpage

\section{Introduction}
The goal of this article is first to make explicit the definitions of wavelike spacetimes with parallel rays in general relativity, and as was done in the development of the theory, to motivate a number of these definitions with reference to the well-established theory of electromagnetism. This is the subject of Section \ref{sec:defining}. By ``wavelike spacetimes" we mean those spacetimes which themselves model wavelike behaviour, in contrast to the spacetimes which model objects that produce radiation\footnote{For such spacetimes, see e.g. the review~\cite{Holst}.}. We then examine the coordinate descriptions of the wavelike spacetimes in Section \ref{sec:coordDesc}, where the ``adapted" or ``Brinkmann" coordinates in which these metrics are typically written are derived. Section \ref{sec:properties} discusses the properties of the wavelike spacetimes, in particular the details of their so-called ``wavefronts", that such a spacetime appears as a limit of any spacetime via the ``Penrose limit", and their causal properties. In Section \ref{sec:EKConj} we discuss progress in addressing the ``Ehlers--Kundt conjecture", which is a statement about our expectations of the wavelike spacetimes based on physical intuition.\\

It should be noted that this article deals only with the wavelike spacetimes which possess \textit{parallel rays}, which is a subclass of all wavelike spacetimes in general relativity. The most general class are those geometries admitting shear-free, twist-free, geodesic
null congruences, which splits into the Kundt class (non-expanding congruence) and the Robinson--Trautman class (expanding congruence). The waves with parallel rays discussed in this article form a subclass of the Kundt class. Other noteworthy classes include the colliding plane waves, cylindrical gravitational waves, spacetimes with accelerated sources (C-metrics, more generally spacetimes with boost-rotation symmetries), solitonic gravitational waves, cosmological gravitational waves in de Sitter and anti-de Sitter spacetimes, exact gravitational waves in FLRW cosmologies, Bianchi cosmologies, and Gowdy universes. For reviews of these topics and modern results other than those presented in the remainder of this article, we direct the reader to the following articles: A summary of the historical development of the mathematics of wavelike exact solutions \cite{summary}, modern references on exact solutions in general relativity \cite{exactSolnsKramer, griffiths_podolský_2009}, works which deal with colliding plane waves and the physical interpretations of certain exact solutions \cite{colliding,Bonnor1994}, and other related reviews \cite{Hawking,MTWGrav,Carmeli,solitons}.

\subsection{Survey of Early Developments}
We begin by providing a brief historical perspective on the development of the theory of waves in general relativity (GR) as in \cite{gravwave} with some relevant additions. In particular, we outline here \textbf{only the early results} in the field in order to supplement the material of Sec. \ref{sec:defining}, and leave discussion of modern developments not covered elsewhere in this article to the references above.

\begin{longtable*}{rp{0.9\textwidth}}
1915 \quad &Albert Einstein establishes the field equation of general relativity\\
1916 \quad &Einstein demonstrates that the linearised vacuum field equation admits wavelike solutions which are rather similar to electromagnetic waves\\
1918 \quad &Einstein derives the quadrupole formula according to which gravitational waves are produced by a time-dependent mass quadrupole moment\\
1925 \quad &Hans Brinkmann finds a class of exact wavelike solutions to the vacuum field equation, later called \textbf{pp}-waves (``\textbf{p}lane-fronted waves with \textbf{p}arallel rays") by J\"urgen Ehlers and Wolfgang Kundt. Note that this was a purely mathematical work, and they were not yet understood as modelling massless radiation.\\
1926 \quad &Baldwin and Jeffery illuminate the interpretations of wavelike spacetimes when amplitudes are not assumed to be small \cite{baldwin}\\
1936 \quad &Einstein submits, together with Nathan Rosen, a manuscript to Physical Review in which they claim that gravitational waves do not exist\\
1937 \quad &After receiving a critical referee report from Howard
P. Robertson, Einstein withdraws the manuscript with the erroneous claim and publishes, together with Rosen, a strongly revised manuscript on wavelike solutions (Einstein-Rosen waves) in the Journal of the Franklin Institute\\
1957 \quad &Felix Pirani gives an invariant (i.e. coordinate-independent) characterisation of gravitational radiation, and Bondi independently writes down a metric for the plane wave which is singularity-free and carries energy \cite{Bondi1957}. This work was later developed by Asher Peres in 1959 \cite{Peres}\\
1958 \quad &Anderzej Trautman  reformulates Sommerfeld’s radiation boundary conditions for a general field theory, and applies this approach to relativity to find the boundary conditions to be imposed at infinity due to bounded sources in GR\\
1960 \quad &Ivor Robinson and Trautman discover a class of exact solutions to Einstein's vacuum field equation that describe outgoing gravitational radiation\\
1961 \quad &Wolfgang Kundt surveys the wavelike geometries as those admitting a twistfree and non-expanding null congruence, and characterizes their subclasses of different Petrov type by
geometrical properties \cite{Kundt1961,Kundt1962}\\
1962 \quad &Ehlers and Kundt conjecture that the gravitational pp-waves other than the plane wave cannot be complete\\
1962 \quad &Roger Penrose provides a geometric definition of asymptotic flatness, along with various new studies of the asymptotic properties of spacetimes including definitions and conservation laws for energy and momentum\\
1965 \quad &Penrose shows that the plane waves (gravitational or otherwise) are not globally hyperbolic\\
1976 \quad &Penrose demonstrates a limiting procedure by which any spacetime reduces to a plane wave, by ``blowing up" a neighbourhood of a null geodesic
\end{longtable*}

In the remainder of this article, we detail a subset of these results followed by a selection of advances of the theory that have taken place in the decades since. Again for details on modern advances not within the scope of this article, see \cite{summary,exactSolnsKramer,griffiths_podolský_2009,colliding,Bonnor1994,Hawking,MTWGrav,Carmeli,solitons}.



\subsection{Nomenclature}
The names used to refer to different classes of wavelike geometries in this article are not all standard in the literature, due to a degree of degeneracy in the usage of certain terms; eg. ``pp-wave'' can implicitly refer to a 4-dimensional geometry with \textit{planar} wavefront, or to an $n$-dimensional geometry with \textit{curved} wavefront. Also sometimes ambiguous is the local or global nature of coordinates used in the description of wavelike geometries. Due to the importance of dimension, global characteristics and wavefront geometry in determining the properties of the wave, the authors see it as necessary to fix one consistent language for the purposes of this article. To summarize these definitions and to facilitate comparison with the literature, we fix nomenclature in Table \ref{tab:definitions} below. In particular, note that the term ``parallel wave" has not previously been used, and instead the term ``pp-wave" is often used in the literature to refer to the same object with the understanding that the geometry in question need not have planar wavefront\footnote{The ``wavefront" of a parallel wave is defined precisely in Def. \ref{def:wavefront}.}.

\section{Defining Waves in General Relativity}\label{sec:defining}
Let us now set about attempting to define a wave in GR. This is not a simple task because of the inherent nonlinearity of GR, and so we take inspiration from the well-established \textit{linear} wave theory of electromagnetism. To this end, we will start by looking at linearised/``weak-field" GR, and demonstrate that in this linear regime one finds wavelike behaviour analogous to Maxwell's electromagnetism (EM), with some fundamental differences. Such differences have origin\footnote{In fact there is another major difference, which is that there are two signs of charge in electromagnetism and only one in gravitation. Such a property is very relevant in fields like cosmology (EM fields are screened but gravitational fields are not), but will not impact our attempts at defining wavelike behaviour.} in the fact that the relevant field object in EM is a 1-tensor (the vector potential $A^\mu$) whereas in GR the relevant object is a 2-tensor (the metric $g_\munu$).\\

Upon finding such behaviour in the linear regime, we will discuss how to extend the results to the general case. This will be accomplished by taking inspiration from the covariant properties of the linearised waves (those properties which do not depend on the coordinate system used), and showing that a general metric satisfying such properties exhibits similar wavelike behaviour.

\newgeometry{left=2cm, right=2cm, bottom=2cm, top=2cm}
\begin{landscape}
\begin{table}[]
    \small
    \renewcommand{\arraystretch}{1.3}
    \linespread{0.95}
    \begin{tikzpicture}
        \node[draw,ultra thick,inner sep=.4pt]{\begin{tabular}{p{0.13\linewidth} | p{0.67\linewidth} | p{0.18\linewidth}}
        In this article & Definition                                                                                                                                                                                                                                                                                                                                                                                                                                                                                                                              & Other used names                                                   \\ \hline
        Parallel wave        & A non-flat Lorentzian manifold which admits a global, covariantly constant null vector field  (Def. \ref{def:parallelRays}).                                                                                                                                                                                                                                                                                                                                                                                                                                          & pp-wave (understood \textbf{not} to refer to any planar character)                                                                      \\ \hline
        pp-wave              & A non-flat Lorentzian manifold $(M,g)$ which admits a global, covariantly constant vector field $Z$, for which the curvature tensor $R$ additionally satisfies $R\vert_{Z_\perp \wedge Z_\perp}=0$ where $Z_\perp \vcentcolon= \{X\in TM ~|~ g(X,Z)=0\}$. Informally, a parallel wave with flat wavefront (Def. \ref{def:pp}). \newline Typically, a local gauge freedom is exploited such that some off-diagonal (or the ``gyratonic") terms are omitted when the metric is written in Brinkmann coordinates.             & Brinkmann space, \newline plane-fronted wave                                                         \\ \hline
        Gyratonic pp-wave    & These are pp-waves which are explicitly written with off-diagonal terms present in Brinkmann coordinates, representing a ``spinning" character of the source (Sec. \ref{sec:gyratonic}). \newline Note that the gyratonic pp-waves can also be studied with non-flat wavefront, in which case a more appropriate name would be ``gyratonic parallel wave" to emphasize the gyratonic character.                                                                                                                                   &                                                                         \\ \hline
        Standard pp-wave     & A pp-wave for which the so-called ``Brinkmann coordinates" exist globally, and one can omit the ``gyratonic" terms without losing global information (Sec. \ref{sec:standardPp}).                                                                                                                                                                                                                                                                                                                                         & Classical pp-wave, \newline pp-wave                                              \\ \hline
        Classical pp-wave    & A standard pp-wave in 4 dimensions (Sec. \ref{sec:classicalPp}).                                                                                                                                                                                                                                                                                                                                                                                                                                                                                              & Standard pp-wave, \newline pp-wave                                               \\ \hline
        $(N,h)$p-wave        & Informally, this is a pp-wave for which the flat wavefront is replaced by a ``constant" Riemannian manifold, where constant refers to the fact that the components of the metric on the wavefront are independent of $u$ when written in Brinkmann coordinates (Sec. \ref{sec:Nhp}).                                                                                                                                                                                                                               & pp-wave,\newline N-fronted wave,\newline Generalized plane wave,\newline Plane-fronted wave \\ \hline
        Plane wave           & \begin{enumerate}[noitemsep, leftmargin=*, topsep=0pt, before={\vspace*{-0.5\baselineskip}}]\item A non-flat Lorentzian manifold which admits a 5-parameter group of isometries (Def. \ref{def:planeSymmetry}) \item A classical pp-wave for which the coefficient $H$ in Brinkmann coordinates is \textit{quadratic} in $x,y$ (Sec. \ref{sec:plane}) \item A classical pp-wave defined by a covariantly constant null vector field $Z$ which additionally satisfies $\nabla_X R = 0 ~\forall~ X \in Z_\perp$ where $R$ is the curvature tensor and $Z_\perp \vcentcolon= \{X\in TM~|~g(X,Z) = 0\}$ (Def. \ref{def:planecurvature}) \end{enumerate} &      \\ \hline
        Sandwich wave        & A plane wave with compactly supported curved region, such that the characteristic function $H$ in Brinkmann coordinates satisfies $H(u,x,y)=0$ unless $u \in (a,b) \subset \mathbb{R}$. In the limit of shrinking support $(a,b)$ one obtains the so-called ``impulsive waves''.                                                                                                                                                                                                                         &        \\ \hline                                                               
    \end{tabular}};
    \end{tikzpicture}
    \caption{\label{tab:definitions}Nomenclature summary, covering the definitions of this article and some terminology which has been used to refer to the same objects in the literature. Note that all references to ``Brinkmann coordinates'' refer to the coordinates of Eq. \ref{eq:generalPp}. For further classification of the Ricci-flat classical pp-waves see the table of \cite[pg. 79]{exactsolEK} (wherein the characteristic function $H$ in Brinkmann coordinates is defined with an additional factor of two relative to our notation).}
\end{table}
\end{landscape}
\restoregeometry


\subsection{Linearised Gravity}

Finding wavelike behaviour in the linear/weak-field regime is a very standard calculation, completed first in 1916 by Einstein \cite{einstein1916} but for a modern presentation see for example \cite{gravwave}, \cite{Flanagan_2005}, and \cite{carroll}. As a result, in this section we will only restate the results necessary to build intuition for the later definitions of wavelike behavior. Consider a perturbation $h_\munu$ to the Minkowski background $\eta_\munu$. That is, for the spacetime manifold $M = \reals^4$ we have the Lorentzian metric 
\begin{equation}
    g_\munu = \eta_\munu + h_\munu, ~~~~ |h_\munu| \ll 1
\end{equation}
where we have implicitly chosen local coordinates $x^\mu$, and in these coordinates the Minkowski metric $\eta$ takes the usual form $\text{diag}(-1,+1,+1,+1)$ and the perturbation $h_\munu$ is in some sense ``small". Here, ``smallness" is defined loosely by the fact that the terms quadratic in $h_\munu$ contribute insignificantly to the Einstein equations. We then wish to obtain the Einstein tensor for this metric to linear order in $h_\munu$. To this end, we may raise and lower indices of $h_\munu$ with the background metric $\eta$ since doing so with the full metric $g$ would lead to corrections of order higher than $1$  in $h_\munu$. This can also be viewed as treating the perturbation $h_\munu$ as a symmetric tensor propagating\footnote{One could instead derive the linearised Einstein equations as the equation of motion for $h_\munu$ via a Lagrangian density, however in this article we take the usual approach of calculating the Einstein tensor directly.} on a Minkowski background. For the details of this calculation on a \textit{curved} background, see \cite{gravwave}. \\

To simplify calculations, one chooses to work not with $h_\munu$ but rather with the \textit{trace-reversed} variable $\bar{h}_\munu$ defined as 
\[
    \bar{h}_\munu \vcentcolon= h_\munu -\frac{1}{2} h \eta_\munu
\]
called so because $\bar{h}^\mu{}_\mu = - h^\mu{}_\mu =\vcentcolon -h$ (note that the Einstein tensor is just the trace-reversed Ricci tensor). In electromagnetism one often works with the Lorenz\footnote{Note that this is not a full fixing of the gauge, as the theory remains invariant under transformations of the form $A^\mu \longrightarrow A^\mu + \partial^\mu f$ for a harmonic scalar field $f$. Also note that this gauge goes by many different names in the literature, including (erroneously) the \textit{Lorentz} gauge \cite[footnote pg. 6]{Flanagan_2005}.} gauge conditions $\partial_\mu A^{\mu}=0$ for the vector potential $A^\mu$. Since we are interested in describing radiation in general relativity, we will use the analogous condition 
\begin{equation}\label{eq:LorenzGauge}
    \partial{}^\mu \bar{h}_\munu = 0
\end{equation}
on the trace-reversed perturbation $\bar{h}_\munu$. As a result of these choices, the Einstein tensor is given (to linear order in the perturbation) by \begin{equation}
    G_\munu=-\frac{1}{2}\square \bar{h}_\munu
\end{equation}
where we have defined the D'Alembertian $\square \vcentcolon=\nabla^\mu \nabla_\mu$ which here is simply the flat space D'Alembertian $\square = -\partial_t^2 + \partial_x^2 + \partial_y^2 + \partial_z^2$ (the presence of which is an early sign of wavelike behaviour). Therefore the Einstein equation of linearised gravity reads 
\begin{equation}
    \square \bar{h}_\munu = -16\pi T_\munu
\end{equation}
in units where $c=G=1$ and it is understood that the energy-momentum tensor $T$ is also consistent with the ``weak field" regime. By this, we mean that the lowest nonvanishing order in $T_\munu$ is of the same order of magnitude as the perturbation $h_\munu$. The \textit{vacuum} Einstein equation is then simply a homogeneous wave equation for $\bar{h}_\munu$ and so one makes the plane wave ansatz 
\begin{equation}\label{eq:ansatz}
    \bar{h}_\munu = C_\munu (k) e^{ik_\sigma x^\sigma}
\end{equation}
for some complex, symmetric coefficients $C_\munu$ and $k = (\omega, \mathbf{k})$ a constant vector field on $M$ (constant in the usual sense, since the background metric is flat). As is standard when making a plane wave ansatz written in the complex form, it is understood that at the end of the day, one should take the real part of expressions to obtain physical results.\\

The Lorenz gauge condition Eq. \ref{eq:LorenzGauge} for such a perturbation yields 
\begin{equation}\label{eq:transverse}
k^\mu C_\munu = 0
\end{equation}
for all $\nu$, that is, the perturbation is \textit{orthogonal} to the wave vector. One may interpret this as the fact that a gravitational perturbation of this kind will be \textit{transverse} in a way analogous to the electric and magnetic fields of electromagnetism. The vacuum Einstein equations for such a plane wave perturbation yield
\begin{equation}\label{eq:likeFourier}
    0 = \square \bar{h}_\munu = -k_\sigma k^\sigma \bar{h}_\munu
\end{equation}
which is obtained by noting that $\partial_\sigma \bar{h}_\munu = ik_\sigma \bar{h}_\munu$. Since we are not interested in solutions for which $\bar{h}_\munu$ is identically zero, we instead have 
\begin{equation}
    k_\sigma k^\sigma = 0
\end{equation}
that is, the ``wave vector" $k$ of a plane wave solution to the linearised Einstein equations must be \textit{null}. This is the statement that in the linear theory, the metric exhibits a wavelike behaviour which propagates at the speed of light $c$. These facts served as an early hint that gravitational waves exist, and that they travel at $c$ .\\

One can utilize the remaining coordinate freedom (since the Lorenz gauge is only a \textit{partial} gauge fixing) to obtain illuminating expressions for the $C_\munu$ in the so-called \textit{transverse-traceless} gauge\footnote{For a simple way to convert quantities from an arbitrary gauge into the transverse-traceless gauge see \cite[Eq. 6.55-6.57]{carroll}.}, named so because in such a gauge the perturbation $h$ is traceless and thus $h=\bar{h}$. Reusing the labels $x^\mu$ for the coordinate system resulting from the full gauge fixing, one finds \cite[pg.~150]{carroll} that for a wave travelling in the $x^3$ direction\footnote{We use a coordinate system $\{x^0,x^1,x^2,x^3\}$ and label the first coordinate $x^0$ as ``$t$". We choose two of the spatial components of $k$ to be 0, and since the timelike component of $k$ is denoted as the frequency $\omega$, the null condition implies $(k^\mu) = (\omega, 0, 0, k^3) = (\omega, 0, 0, \pm \omega)$ for a future-directed wave vector.} the coefficients $C_\munu$  take the particularly simple form 
\begin{equation}\label{eq:polarisationCoeffs}
(C_\munu)=
\begin{pNiceMatrix}
        0       & 0         &   0                   &  0        \\
        0       & C_{+}         &   C_{\times}                   &  0        \\
        0       & C_{\times}         &   -C_{+}                   &  0        \\
        0       & 0         &   0                   &  0        

\end{pNiceMatrix}
\end{equation}

where the subscripts on the components are justified after computing the effect of such a perturbation on a ring of test particles, and noting that for only $C_+$ nonzero one finds the ring oscillates in a ``$+$" pattern, and for only $C_\times$ nonzero the ring oscillates in a ``$\times$" pattern \cite{Flanagan_2005,azadeh}. The same structure will be observed when we make the transition to the nonlinear theory and attempt to define an analogous ``plane wave" (Sec. \ref{sec:compare}). Note that our perturbation is now fully described by two functions $C_+$ and $C_\times$, suggesting that there exist two linearly independent polarisation states of gravitational radiation.\\

To convince oneself of the physicality of these results, one needs to examine the motion of test particles in such a spacetime. One finds that for non-relativistic test particles, the geodesic equations are solved by a particle whose coordinate location remains constant. In fact the coordinate system can be thought of as ``moving with" the particle, effectively hiding the dynamics from the perspective of our coordinates \cite[Sec. 1.4]{azadeh}. Instead, upon examining the \textit{relative} motion of test particles via the geodesic deviation equation, one finds a periodic oscillation of the test particles, supporting the physicality of such a wave in the weak-field regime.

\subsection{Wavelike Exact Solutions}
We now ask ourselves the natural question ``does the full nonlinear theory also admit wavelike solutions?". Furthermore, we wonder if such solutions reduce to those of the linear theory in the weak-field regime. In order to generalize the wave objects of the linearised theory, let us examine which of their properties are covariantly defined (that is, in a coordinate-independent manner). One easily recognizable covariant property is that the ``wave vector" $k$ should be null
\begin{equation}
    g(k,k) = k^\mu k_\mu = 0.
\end{equation}
Further scrutiny of the results of the previous section yields that $k^\mu$ is also an eigenvector of the Riemann tensor with eigenvalue 0, that is
\begin{equation}\label{eq:eigenvector}
    R_{\mu\nu\sigma\rho}k^\rho = 0
\end{equation}
for all $\mu,\nu,\sigma$. One could use these two properties as a starting point for a definition of a wave in general relativity, that is a Lorentzian manifold $(M,g)$ admitting a null vector field\footnote{We change notation from $k$ to $Z$, which is consistent with the notation of \cite{blau} and \cite{Blau_2003}, however many different symbols are used in the literature, such as $V$ \cite{compact,Globke_2016}, $l$ \cite{sippel} and indeed $k$ \cite{stephani_relativity,Bicak,Bicak1999}. 
} 
$Z$ which is an eigenvector of the Riemann tensor with eigenvalue 0. In fact such a spacetime \textit{does} exhibit wavelike behaviour \cite[Ch. 32.3 \& 34.1]{stephani_relativity}, but is rather cumbersome to work with, and is missing some characteristics of the waves in the linearised theory.\\

One such characteristic is as follows: When making the plane wave ansatz Eq. \ref{eq:ansatz}, we assumed the vector field $k$ to be \textbf{constant}. As a result, the rays of the corresponding wave were \textit{parallel} (in the usual Euclidean sense). In order to obtain the same qualitative behaviour, we should not demand that $Z$ be an eigenvector of the Riemann tensor with eigenvalue 0, but rather the stronger condition that $Z$ be \textit{covariantly constant} (which in some sense generalises the notion of ``constant") which is written $\nabla Z = 0$ for $\nabla$ the Levi-Civita connection of the geometry in question.
With this, we attempt the following covariant definition:
\begin{Definition}\label{def:parallelRays} \textup{Parallel Wave.}\\ 
    A parallel wave (wave with parallel rays) is a Lorentzian manifold $(M,g)$ which admits a global, covariantly constant, null vector field $Z$. 
\end{Definition}

The ``rays" of such a wave are the integral curves of the defining vector field $Z$, which are automatically (null) geodesics since $Z$ is covariantly constant. It is justified that we may call such objects ``rays" by the fact that null geodesics correspond to the paths of light rays.\\

\begin{Remark}
    If we had demanded that $Z$ was an eigenvector of the Riemann tensor with eigenvalue zero instead of being covariantly constant, we would obtain an example from a general class of solutions called the ``Degenerate gravitational fields" which contains the pp-waves as a subset (that is, $Z$ being covariantly constant implies that $Z$ is an eigenvector of the Reimann tensor with eigenvalue 0, but the converse is not true). These degenerate vacuum solutions are defined by the property that they admit (at least) one shear-free, geodesic null congruence. For details of this class and in particular the above mentioned example, see \cite[Ch. 32.3 \& 34.1]{stephani_relativity}. The family of geometries admitting at least one shear-free, twist-free, geodesic null congruence splits into the Kundt class (for a non-expanding congruence) and the Robinson-Trautman class (for an expanding congruence). For details of these classes see \cite{griffiths_podolský_2009}, but in this article we focus primarily on the pp-waves.
\end{Remark}

However, another feature of the plane waves in the linearised theory which we have not yet imposed is the \textit{planar} character. A plane wave has a planar wavefront (roughly, the spacelike codimension-2 hypersurface orthogonal to the wave vector), but in general these parallel waves can have curved wavefronts. Although to obtain wavelike behaviour it is not necessary to demand the wavefront be flat (and in fact we will reintroduce this curvature in Sec. \ref{sec:Nhp}), it is standard in the field to make this restriction. This is likely because when considering a curved wavefront, the geometric properties of the wavefront can ``obscure" those fundamental properties of the wave, such as the vanishing of the scalar curvature invariants (Sec. \ref{sec:invariants}). To demand the wavefront is flat, let us define precisely\footnote{In order to define a true ``direction of motion" of the wave and its wavefront, one must specify an observer (or really family of observers). For the details of how the observer can be used to define the wavefront in coordinates, see \cite[Eq. 4.2.1]{exactsolEK}.} the wavefront of a wave:

\begin{Definition}\label{def:wavefront} \textup{The Wavefront of a Parallel Wave.}\\
    If a parallel wave $(M,g)$ is defined by a covariantly constant, null vector field $Z$ (analogous to the ``wave vector" of a plane wave in the linear theory) then the wavefront of such a wave is defined as 
    \[
        Z_\perp / Z,
    \]
    where $Z_\perp \vcentcolon= \{X \in TM ~|~ g(X,Z) = 0\}$ and the quotient is defined by the equivalence relation $X\sim Y \iff Y = X + fZ$ for some smooth function $f$. 
\end{Definition}
We must quotient with the wave vector itself since $Z$ is null, thus $Z \in Z_\perp$ and the natural analogy to electromagnetism suggests that $Z$ itself should not be considered as part of the wavefront. This definition appears in \cite{compact} under the name ``screen bundle", where it is treated rigorously in the context of compact pp-waves. As the authors note, the ``wave" interpretation becomes less clear in the compact case. As will be discussed in Sec. \ref{sec:radiation}, the presence of radiation is characterized by the null \textit{asymptotics} of the spacetime, but a compact manifold does not admit the same notion of ``null infinity" as will be used to define the presence of radiation. With this in mind, we maintain the name ``wavefront" for simplicity. For details of the induced metric on the wavefront see Sec. \ref{sec:ppWavefront}.\\

If we wish to demand that the wavefront be flat, then this is most succinctly described (see \cite{Globke_2016}) by considering the Riemann tensor as a map on bivectors (antisymmetric 2-tensors) in $Z_\perp \wedge Z_\perp$, in which case the flatness condition for the wavefront becomes
\begin{equation}\label{eq:flatWavefront}
    R\rvert_{Z_\perp \wedge Z_\perp} = 0.
\end{equation}

With this, we arrive at the definition of the \textbf{p}lane-fronted waves with \textbf{p}arallel rays (pp-waves).
\begin{Definition}\label{def:pp} \textup{Plane-fronted Wave with Parallel Rays (pp-Wave).}\\
    A pp-wave is Lorentzian manifold $(M,g)$ which admits a global, covariantly constant, null vector field $Z$, in which the curvature tensor satisfies $R\rvert_{Z_\perp \wedge Z_\perp} = 0$.
\end{Definition}

Note that in the literature (for example \cite[Eq. 24.39]{exactSolnsKramer}) a pp-wave is often defined as a Lorentzian manifold admitting a covariantly constant, null vector field (that is, our definition of a \textit{parallel wave}), where it is understood that the name refers to no actual \textit{planar} character. Other works however also include also the curvature condition Eq. \ref{eq:flatWavefront} as is done here, eg. \cite{compact,Globke_2016,sippel}.

\subsubsection{Comparison with the Linearised Theory}\label{sec:compare}
We now set about comparing the features of these pp-waves with those of the waves found in the linear regime. Consider the metric of Minkowski space written in the so-called ``light-cone" coordinates\footnote{Such coordinates are usually written with ``$-2\mathrm{~d} u \mathrm{~d} v$" rather than the positive term in our metric. Simply transforming $v \longrightarrow -v$ yields our description.} 
\begin{equation}\label{eq:lightCone}
    \eta = 2dudv + dx^{2}+dy^{2},
\end{equation}
where the coordinates $u$ and $v$ are defined in terms of the standard $t,x,y,z$ coordinates as 
\begin{equation}
    u \vcentcolon= \frac{z-ct}{\sqrt{2}} ~~~~~~~~~~~~~~~~~~v \vcentcolon= \frac{z+ct}{\sqrt{2}}
\end{equation}
and where we briefly reintroduce the speed of light $c$ for transparency. As we will prove in Sec. \ref{sec:generalCoords}, a 4-dimensional pp-wave metric can locally be written as
\begin{equation}\label{eq:standardPpLin}
    g = 2dudv + H(u, x, y) du^{2} + dx^{2}+dy^{2},
\end{equation}
where the so-called ``characteristic function" $H$ is independent of the coordinate $v$, and where we have suggestively used the same coordinate labels as for the above flat metric. Here $H$ describes the wave (deviation from flat space) in the sense that when $H = 0$, we simply have the above flat metric Eq. \ref{eq:lightCone}. Note that this metric is a solution of the vacuum Einstein equations if and only if $H$ is harmonic in $(x,y)$, that is $(\partial_x^2 + \partial_y^2)H(u,x,y) = 0$. Here we already see a hint of wavelike behaviour. Treating $H$ as a perturbation on the Minkowski background (and thus inheriting the coordinate system of Eq. \ref{eq:lightCone}), we see that the perturbation depends on time only through the coordinate $u$, that is a time-dependence proportional to $z - ct$, as one would expect for a travelling wave.\\

Surprisingly, as in \cite[Above Eq. 29.46]{stephani_relativity}, one can show that the pp-wave metric  Eq. \ref{eq:standardPpLin} in fact \textit{solves the linearised field equations}. This is because even in the general theory, no expressions of quadratic order or higher in $H$ nor its derivatives appear in the field equations for such a spacetime. The primary difference with the linear theory is that $H$ need not be ``small". In this way, we see that the ``standard pp-waves" do in fact generalise the results of the linearised theory. \\

We will find in Sec. \ref{sec:plane} that the simplest pp-wave occurs when the characteristic function $H(u,x,y)$ is \textit{quadratic} in $(x,y)$ (with arbitrary $u$-dependence). Such a pp-wave is typically referred to as a ``plane wave". These spacetimes exhibit the same polarization states as those which can be derived in the linear regime, and this is one reason they are given the name ``plane waves" (shown in Sec. \ref{sec:plane}). For a detailed description of the ``planeness" of such spacetimes, see \cite[Sec. 3]{bondi}. In order to directly compare these simple pp-waves to the plane wave solutions in the linear regime, as in \cite{stephani_relativity}, one ``linearises" the exact solution by assuming the amplitude of the wave is small. The reasoning of Stephani \cite{stephani_relativity} is as follows:
\begin{itemize}
    \item The \textit{vacuum} plane wave metric of linearised gravity can be written as 
    \[
        g= 2dudv + \left(1+f(u)\right) \mathrm{d} x^{2} +\left(1-f(u)\right) \mathrm{d} y^{2},
    \]
    where $f(u)=A \cos (\frac{\omega}{c} (u+\varphi))$ for some frequency $\omega$, phase $\varphi$ and constant $A$. As usual on a Minkowski background, we interpret $u$ as $z-ct$.
    \item The linearised version of the vacuum plane wave metric (pp-wave with $H$ harmonic and quadratic in $(x,y)$) can be written
    \[
        g=2dudv + \left(1+\alpha(u)\right) \mathrm{d} x^{2} +\left(1-\alpha(u)\right) \mathrm{d} y^{2}
    \]
    with the $u$-dependence of $\alpha$ arbitrary, and $\alpha \ll 1$.
    \item The frequency $\omega$ of the linearised theory is fixed by the plane wave ansatz, but the profile functions $\alpha(u)$ of the second case have no predetermined frequency. Therefore the $\alpha(u)$ can be chosen for example as
    \[
        \sum_j A_j\cos \left(\frac{\omega_j}{c} (u+\varphi_j)\right)
    \]
    for small constants $A_j$, which corresponds to a superposition\footnote{For any $\alpha$ one may examine it's Fourier decomposition to obtain such an interpretation.} of waves of varying frequency. In this way, the exact solution plane waves are interpreted as a packet of plane waves of differing frequencies.
\end{itemize}

There is a more convincing reason why one would call such a pp-wave a ``plane wave" based on the algebraic and geometric symmetries of the spacetime, and we will discuss this in the following section.

\subsection{Spacetimes Containing Gravitational Radiation}\label{sec:radiation}
Let us now review two paths by which one can obtain definitions of the presence of wavelike behaviour/radiation in a spacetime, and the ways in which these approaches coincide with our existing definition of an exact solution describing only a wave.
\subsubsection{Algebraic Classification of the Weyl Tensor}
Felix Pirani and Hermann Bondi (independently) pioneered an attempt at defining gravitational waves as exact solutions of the Einstein field equations, using geometric and algebraic principles developed first by Petrov. Our presentation will follow closely that of \cite[pg. 8,9]{azadeh}. The key concept in this endeavour is the Weyl tensor, which is the trace-free part of the Riemann tensor. As such, the Riemann tensor reduces to the Weyl tensor in vacuum regions, where the Ricci tensor (the trace of the Riemann tensor) vanishes. 
\begin{equation}
    R_\munu = 0 \iff C_{\mu\nu\sigma\rho} = R_{\mu\nu\sigma\rho}
\end{equation}
for all $\mu,\nu$, where it is understood that a $C$ with four indices is the Weyl tensor, not to be confused with the (0,2)-tensor $C$ in the plane wave ansatz Eq. \ref{eq:ansatz} of the linearised theory. When looking in particular for gravitational waves (i.e. in vacuum), it is apparent that the relevant object for describing the wave is the Weyl tensor.\\

Pirani's intuition was that for gravitational waves, the Weyl tensor should exhibit special symmetries. The Weyl tensor of a spacetime $(M,g)$ is \textit{conformally invariant}, that is, it is invariant under conformal transformations of the metric:
\begin{align}
g_{\mu \nu} & \longrightarrow g_{\mu \nu}^{\prime}=\lambda^{2} g_{\mu \nu} \\
C_{\mu \nu \sigma}{}^{\rho} & \longrightarrow C^{\prime}_{\mu \nu \sigma}{}^{\rho} =C_{\mu \nu \sigma}{}^{\rho}
\end{align}
for some conformal factor $\lambda: M \mapsto \reals$. Intuitively, the Weyl tensor expresses the tidal forces that a free-falling body feels along a geodesic (see \cite{azadeh}). That the Weyl tensor describes tidal forces (roughly, the relative acceleration felt by two test masses separated by an infinitesimal distance) should sound familiar, as this was how we detected the physical effect of gravitational waves in the linearised theory. It should not be surprising then that the Weyl tensor is the object describing radiation in general relativity. The correspondence between tidal forces and exact gravitational waves has been the subject of much study (often from the perspective of the geodesic \textit{deviation} equation), details of which can be found in the following articles: \cite{tidal,Felice,compass,Bicak1999,Podolsky2012,Podolsky2013}.\\

In 1954, Petrov devised a classification of the algebraic symmetries of the Weyl tensor at each point in a 4-dimensional spacetime, and Pirani independently derived the same classification in 1957. They noted that the Weyl tensor preserves the antisymmetry of antisymmetric 2-tensors (or ``bivectors"), that is for $X_\munu = -X_{\nu\mu}$,
\begin{equation}
    X_\munu C^\munu{}_{\sigma\rho} = Y_{\sigma \rho}
\end{equation}
where $Y_{\munu}$ is also a bivector. By finding the eigenbivectors $X_\munu$ of the Weyl tensor, i.e. bivectors satisfying $X_\munu C^\munu{}_{\sigma\rho} = 2\lambda X_{\sigma \rho}$, one can classify 6 types of algebraic symmetry. The eigenbivectors for a given point $p$ in a spacetime are related to a set of null vectors in $T_p M$ called the ``principal null directions" (PNDs) at $p$, but the specifics of this correspondence are rather complicated. For details see for example \cite{exactSolnsKramer} or \cite[Sec. 7.2-7.4]{petrovBook}. \\

One may wonder why there are 6 symmetry types, but this is simply because the Weyl tensor can have at most 4 linearly independent eigenbivectors, and so the options are:

\begin{equation*}
    \text{Type I:}~~\uparrow\rightarrow\nwarrow\nearrow  ~~~~\text{Type II:}~~\uparrow\uparrow\nearrow\searrow ~~~~ \text{Type D:}~~ \uparrow\uparrow\rightarrow\rightarrow
\end{equation*}
\begin{equation*}
    \text{Type III:}~~\uparrow\uparrow\uparrow\rightarrow ~~~~~  \text{Type N:} ~~\uparrow\uparrow\uparrow\uparrow ~~~~~~  \text{Type O:}~~ C_{\mu\nu\sigma\rho} = 0
\end{equation*}
where aligned arrows represent linearly dependent PNDs. The \textit{Bel criteria} are the conditions on the Weyl tensor $C_{\mu\nu\sigma\rho}$ (in a special coordinate system) such that it is of one of the above types. The Bel criterion for a type N spacetime is that the metric admits a null vector field $k^\rho$
\begin{equation}
    C_{\mu\nu\sigma\rho} k^\rho = 0 
\end{equation}
This condition should again look very familiar, as it was one of the two covariantly defined properties of the \textit{wave vector} $k$ in the linear theory, where the Riemann tensor is replaced by only the Weyl tensor (which it indeed reduces to in a vacuum region). The four coinciding PNDs indeed correspond to the wave vector of the linear theory, but \textbf{also} to the covariantly constant, null vector field $Z$ in the definition of a pp-wave Eq. \ref{def:pp}. By this we mean that the pp-wave spacetime is everywhere algebraically special, and is of Petrov type N.\\

In this way, the Petrov type N represents the presence of wavelike behaviour in a spacetime. Note that the Petrov type can vary from region to region in a spacetime (though not all ``transitions" are possible, see \cite{azadeh}), and so the Weyl tensor of what we could reasonably consider a radiative spacetime should be of type N in the far-field (towards null infinity). Such a statement is made precise by the ``peeling theorem" \cite{wald, penrosePeeling,gerochPeeling}, which describes the asymptotic behaviour of the Weyl tensor as one approaches null infinity. For $r$ an affine parameter along a null geodesic $\gamma$ from a point $p$ to null infinity, as $r\rightarrow \infty$, the Weyl tensor can be written in a parallelly propagated frame along $\gamma$ as
\begin{equation}
    C_{\mu\nu\sigma\rho} = \frac{C_{\mu\nu\sigma\rho}^{\text{(N)}}}{r} + \frac{C_{\mu\nu\sigma\rho}^{\text{(III)}}}{r^2} + \frac{C_{\mu\nu\sigma\rho}^{\text{(II)}}}{r^3} + \frac{C_{\mu\nu\sigma\rho}^{\text{(I)}}}{r^4} + \dots
\end{equation}
where the superscript on each term on the right hand side represents the Petrov type of that tensor. Roughly\footnote{For the subtleties in such an interpretation we direct the reader to the afformentioned references \cite{penrosePeeling} and \cite{gerochPeeling}.}, towards null infinity one finds that the dominant behaviour comes from the type N component. This expansion bears a striking resemblance to the multipole expansion of the electromagnetic potentials, wherein again only the $\sim1/r$ term contributes to radiation.

\begin{Remark}\label{remark:PenroseAsymptotic}
    We pause to mention here the more geometric notion of asymptotic behavior at infinity due to Penrose \cite{Penrose-APFSP}, where infinity is regarded as a three-dimensional boundary corresponding to $\Omega = 0$ in the definition of the following conformal metric
    $$
    g = \Omega^2 \tilde{g},
    $$
    where $\tilde{g}$ is the original spacetime metric.  The key is that one can treat infinity as a 3-dimensional boundary while still studying those physical properties of the original spacetime metric $\tilde{g}$ that are conformally invariant.  For a comprehensive treatment of this notion of conformal infinity, consult \cite{Penrose-SSP}; for its more recent use in holography and the AdS-CFT correspondence, consult, e.g., \cite{BN-lightcone}.
\end{Remark}

\begin{Remark}
    It is worth now stating precisely what one means by gravitational \textit{radiation}. As in \cite{tidal}, gravitational radiation is the transfer of energy via gravitational waves to null infinity, that is gravitational radiation is present in the asymptotic regime of an isolated dynamical system in GR such as that in the Christodoulou-Klainerman spacetimes \cite{Christodoulou:1993uv}.
\end{Remark}

In 1957, Pirani attempted to define the presence of gravitational radiation as being modelled by a spacetime which was \textit{everywhere} algebraically special with certain type \cite{pirani1957}, but eventually published new work with Robertson and Bondi \cite[Sec. 4]{bondi} in which they claimed that such a definition was too restrictive and in fact only applies to pure radiation; it would not describe the radiation from a system of charges (gravitational or electromagnetic) at a finite distance. As such, they revised the definition of a spacetime containing gravitational radiation to a spacetime which is \textit{asymptotically} type N. One reason for this is that a plane wave is everywhere\footnote{Note that the ``sandwich waves" and ``impulsive waves" mentioned in Table \ref{tab:definitions} are in fact everywhere type \textit{O} (flat) except for a curved region in which they are type N. Additionally, for certain impulsive waves such as the Aichelburg-Sexl solution \cite{aichelburg}, the geometry is also asymptotically flat in the transverse directions ($x$ and $y$ here).}
 type N (again in the original classification of Petrov), and in the far-field, gravitational radiation should approximate the plane wave. The everywhere type $N$ spacetimes contain the ``pp-waves" defined above as a subclass, see \cite[Sec. 18.2]{griffiths_podolský_2009}.

\subsubsection{Groups of Motions (Symmetry)}
In an attempt at a purely geometric definition of gravitational waves, Bondi, Pirani and Robinson began by attempting to define covariantly the \textit{plane wave}. They do this by demanding that the gravitational plane wave of general relativity should ``possess an analogous degree of symmetry to that possessed by 
plane electromagnetic waves in flat space-time" \cite[Sec. 2]{bondi}. As mentioned in the original paper, this approach ensures that one avoids the so-called ``coordinate waves" which are apparent wavelike behaviours which are removed by a diffeomorphism (and thus, simply artifacts of the coordinates chosen).\\

\noindent Consider a plane wave in Minkowski space with wave vector in the positive $z$ direction\footnote{We use the standard coordinate system $\{t,x,y,z\}$.}. There is one clear symmetry of such a wave, and that is the planar wavefront. More precisely, translations in the $x$ and $y$ directions leave our description invariant. Another symmetry is due to the translation of the wavefronts themselves, i.e. the translation along the null 3-surfaces $z - t = \text{const}$ in units where $c = 1$. In fact, there are an additional 2 less obvious symmetries known as the ``null rotations", which are more difficult to see and visualise as their nature is inherently 4-dimensional. In total, we say there exists a 5-parameter group of motions (isometries) under which the plane wave is invariant. The corresponding Killing vector fields for these isometries are given explicitly in \cite[Table 24.5]{exactSolnsKramer} and \cite[Sec. 17.5]{griffiths_podolský_2009}. Using this as inspiration, the authors defined a gravitational plane wave as follows, where ``equivalent" is in reference to a spacetime with metric Eq. \ref{eq:standardPpLin} such that $H$ is quadratic in $(x,y)$ as was briefly mentioned in Sec. \ref{sec:compare}, and is made more explicit in Sec. \ref{sec:plane}. \\

\begin{Definition}\textup{Equivalent Definition: Plane Wave}\label{def:planeSymmetry}\\
    A plane wave is a 4-dimensional non-flat Lorentzian manifold $(M,g)$ which admits a 5-parameter group of isometries.
\end{Definition}
Note that in the original \cite{bondi}, the definition also involves ``Ricci-flat", but this would only correspond to the purely \textit{gravitational} plane waves. The other definitions of the plane wave presented here (via quadratic $H$ in Brinkmann coordinates and via the curvature condition of Definition \ref{def:planecurvature}) include also \textit{electromagnetic} plane wave components in general. Also note that we make no assumption about the structure of the symmetry group; in particular, we do not assume it to have the same group structure as that of a plane wave in electromagnetism. Remarkably, such a property appears as a consequence of our existing assumptions. Such symmetries can be viewed as generated by vector fields, and the explicit form of these generators is given in \cite[Eq. 2.12]{bondi}, for a wave constructed in such a way that it has a finite wave profile\footnote{Such waves have been named ``sandwich waves" since they exhibit a non-flat region (the wave packet) sandwiched between flat regions. Note also that in the limit of shrinking support of the curved region, one obtains the so-called ``impulsive waves'' \cite{podolskyImpulsive}.}. Note also that the gravitational plane wave of Definition \ref{def:planeSymmetry} above is in fact a special case of our pp-wave spacetimes (Definition \ref{def:pp}), and corresponds to the ``plane wave" mentioned in the comparison to the linear theory. These plane waves are described fully in Sec. \ref{sec:classicalPp}.\\

We can also define the plane wave in a covariant manner as in \cite{Globke_2016} as follows, where a ``classical pp-wave" is simply a 4-dimensional pp-wave with planar wavefront (see Sec.~\ref{sec:classicalPp}):

\begin{Definition} \textup{Equivalent Definition: Plane Wave}\label{def:planecurvature}\\
    A plane wave is a classical pp-wave defined via a covariantly constant, null vector field $Z$ which additionally satisfies
    \[
        \nabla_X R = 0 \quad \forall \quad X \in Z_{\perp},
    \]
    where $R$ is the curvature tensor and $Z_\perp \vcentcolon=\{X \in TM ~|~ g(X,Z) = 0\}$.
\end{Definition}

We prove the correspondence of such a definition with the other definitions of a plane wave in Sec. \ref{sec:plane}. For a full discussion of the properties of such waves, the fact that such a definition actually coincides with the algebraic definition of plane waves and the conceptual difficulties involved (e.g. ``to whom is such a gravitational plane wave \textit{planar}?"), see \cite{bondi}. For a succinct overview of the connection between the Petrov classification and the definition of the plane wave in terms of its symmetry group, see \cite[pg.~688]{summary}.\\

Note that all our definitions involve at least \textit{one} lightlike group of motions (symmetry), corresponding to the ``propagation" of the wave. 
There are conditions one may place on a wave such that the wavefront itself is of finite extent (which amount to conditions on the characteristic function $H$ in standard coordinates) and such conditions have relevance to determining the causal character of the wave, as we will see in Sec. \ref{sec:causal}. For a detailed table describing various special cases of gravitational \textit{pp-waves} and their symmetry properties/Killing vector fields, see \cite[pg. 79]{exactsolEK}. \\

The next step in defining the presence of radiation in a spacetime was provided by Trautman, by imposing boundary conditions at infinity in analogy to the Sommerfeld radiation conditions. He showed that in electromagnetism, his conditions restricted one to those solutions of Maxwell's equations with \textit{outgoing} radiative fields. Note that as in the case of the Petrov classification, it is the \textit{asymptotic} behaviour which is used to define the presence of waves. For a review of Trautman's definition in the context of the development of gravitational wave theory, see \cite{summary}, and for Penrose's contribution to the study of asymptotics and their relation to outgoing radiation, see \cite{Penrose-APFSP}.

\section{The Coordinate Description}\label{sec:coordDesc}
We have defined a parallel wave as a Lorentzian manifold admitting a covariantly constant, null vector field, and a pp-wave as a parallel wave with flat wavefront. In this section, we first derive the most general form of a Lorentzian metric satisfying these conditions, and then discuss the various simplifications which have been studied in the literature. These simplifications remain exact wavelike solutions to the Einstein equations, but have the benefit of being easier to understand and work with. The simplest and most widely known example we call the ``classical pp-wave", which is discussed in Section \ref{sec:classicalPp}.\\

\textbf{Notation:} Our goal is to develop a local coordinate system on a parallel wave of dimension $n$ which we will denote $\{u,v,\mathbf{x}\}$, where $\mathbf{x} = x^1, \ldots ,x^{n-2}$ are the so-called ``wavefront coordinates". This name is justified by examining the definition of a wavefront (Def. \ref{def:wavefront}) in the context of the coordinate description of a parallel wave metric Eq. \ref{eq:generalPW}. We will use Greek indices when referring to all coordinates $\{u,v,\mathbf{x}\}$, and Latin indices (other than the letters $u$ and $v$) when referring to only the wavefront coordinates. For example, the sum $g_{va}X^{a}$ for some vector field $X \in \mathfrak{X}(M)$ (the space of vector fields on $M$) will have $n-2$ terms ($a \neq u,v$), whereas the sum $g_{v\sigma}X^{\sigma}$ will have $n$ terms. To avoid confusion with the coordinates $u$ and $v$, we will not use the typical $\mu$ and $\nu$ Greek indices in this section, and instead we will favor $\sigma, \rho, \gamma$. For the Latin indices, we use $a,b,c$ and $i,j,k$. Additionally, when a coordinate is labelled $x^i$, we will denote its corresponding coordinate vector field by $\partial_{x^i} =\vcentcolon \partial_i$.\\

\subsection{General Parallel Waves and pp-Waves}\label{sec:generalCoords}
Consider the $n$-dimensional Lorentzian manifold $(M,g)$. Denote the covariantly constant null vector field on $M$ by $Z$, that is $\nabla Z = 0$  and $g(Z,Z) = 0$ for $\nabla$ the Levi-Civita connection on $(M,g)$ and $Z$ nontrivial.

\begin{theorem}\label{thm:adapted} \textup{Coordinates adapted to covariantly constant\footnote{Note that for this particular result, one may relax the condition that $Z$ be covariantly constant. For details see Sec. \ref{sec:penroseLimit}. In this context however, $Z$ is always assumed to be covariantly constant.}, null vector field.}\\
If a Lorentzian manifold $(M,g)$ admits a covariantly constant, null vector field $Z$, then in a neighbourhood $U$ of each $p \in M$ there exists a local coordinate chart $\varphi = \{u,v,\mathbf{x}\}$ on $U$ which is ``adapted to $Z$" such that 
 $$Z\rvert_U = \partial_v = \nabla u.$$
\end{theorem}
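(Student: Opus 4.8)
The plan is to build the chart in two movements. First I would use $\nabla Z = 0$ to produce the function $u$, together with the identity $Z = \nabla u$; then I would straighten the flow of $Z$ against a hypersurface on which $u$ restricts to a coordinate, obtaining $v$ and the wavefront coordinates $\mathbf{x}$.

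For the first movement: the $1$-form $Z^\flat \vcentcolon= g(Z,\cdot)$ is parallel, since $\nabla$ is metric and $\nabla Z = 0$. A parallel $1$-form is closed, because for a torsion-free connection $dZ^\flat(X,Y) = (\nabla_X Z^\flat)(Y) - (\nabla_Y Z^\flat)(X) = 0$. Shrinking $U$ to a ball, the Poincar\'e lemma provides $u \in C^\infty(U)$ with $Z^\flat = du$; raising the index yields $Z = \nabla u$, the metric gradient of $u$, which is half of the claimed identity. Moreover $Z(u) = du(Z) = g(Z,Z) = 0$, so $u$ is constant along the integral curves of $Z$ — this is exactly where nullity enters, and it is what will allow $u$ to coexist with a coordinate $v$ obeying $\partial_v = Z$. (Note the construction so far uses only that $Z^\flat$ is closed and $Z$ is null, consistent with the footnote's remark that $\nabla Z = 0$ can be relaxed here.)

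For the second movement: since $Z$ is parallel and nontrivial it is nowhere vanishing — a parallel field that vanishes at a point vanishes on the whole connected component — so $Z|_p \neq 0$, and $du|_p \neq 0$ as well, by nondegeneracy of $g$. Choose a hypersurface $N \subset U$ through $p$ transverse to $Z$, so that $T_pM = T_pN \oplus \mathbb{R}\,Z|_p$. Since $du$ annihilates $Z|_p$ but not all of $T_pM$, its restriction $du|_{T_pN}$ is nonzero, hence $u|_N$ is a submersion at $p$ and extends to a coordinate system $(u, x^1, \ldots, x^{n-2})$ on $N$ near $p$. Now let $\phi_s$ be the local flow of $Z$; the map $(s,q)\mapsto\phi_s(q)$ for $q\in N$ is a diffeomorphism from a neighbourhood of $(0,p)$ onto a neighbourhood of $p$ in $M$. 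Using $s$ as the coordinate $v$ and transporting $(u,\mathbf{x})$ along the flow produces a chart $\varphi = \{u,v,\mathbf{x}\}$ with $\partial_v = Z$; because $Z(u)=0$, the transported function is our original $u$, so $\partial_v = Z = \nabla u$. That $\{Z|_p,\partial_u|_p,\partial_{x^i}|_p\}$ is a basis of $T_pM$ confirms $\varphi$ is a genuine chart.

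The one real subtlety — the step I would be most careful about — is arranging that $u$ simultaneously plays two roles: a \emph{coordinate transverse to} the flow of $Z$ (requiring $du|_p \neq 0$ on the transverse slice $N$, so $u$ is an honest coordinate there) and the \emph{potential} with $Z^\flat = du$ (requiring $Z(u)=0$, so that $u$ is one of the slice coordinates rather than the flow parameter). Both requirements follow at once from $g(Z,Z)=0$ and nondegeneracy of $g$; everything beyond that is the standard flow-box construction, and I would not belabour it.
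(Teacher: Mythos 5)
Your proof is correct and rests on exactly the same two ingredients as the paper's: the Poincar\'e lemma applied to the closed $1$-form $Z^\flat$ to produce $u$ with $Z=\nabla u$, and the straightening/flow-box construction for the nowhere-vanishing field $Z$ to produce $v$, with nullity ($Z(u)=g(Z,Z)=0$) reconciling the two. The only difference is cosmetic: the paper straightens $Z$ first and then swaps $u$ into the resulting chart by checking a Jacobian, whereas you build $u$ first and flow out from a transverse slice on which $u$ is already a coordinate --- both assemblies are valid.
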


\begin{proof}
    The proof can be found in Appendix \ref{app:proof}
\end{proof}

The following proposition outlines the properties of the metric $g$ when it is written in these adapted coordinates.

\begin{proposition}\label{prop:metricProperties}
If a Lorentzian manifold $(M,g)$ admits a covariantly constant, null vector field $Z$, and $\{u,v,\mathbf{x}\}$ are the local coordinates adapted to $Z$ of theorem \ref{thm:adapted}, then the metric components in this coordinate system have the following properties on the domain of definition of the coordinates:
\begin{enumerate}[i), nosep]
    \item All metric components are independent of $v$, that is $\partial_v(g_{\mu \nu}) = 0$
    \item $g_{v\sigma} = \delta_{\sigma}^{u}$
    \item $(g_{ab})$ forms a positive-definite matrix, and therefore the embedded codimension-2 submanifolds defined by $u = \text{const}$, $v = \text{const}$ are \textit{Riemannian manifolds}. 
\end{enumerate}
\end{proposition}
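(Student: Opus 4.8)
The plan is to prove the three items in the order (ii), (i), (iii), because (ii) is an immediate consequence of Theorem~\ref{thm:adapted} and then feeds into the proofs of the other two. For (ii), Theorem~\ref{thm:adapted} gives $Z\rvert_U = \partial_v = \nabla u$. Lowering the index of $Z = \partial_v$ gives $Z_\sigma = g(\partial_v,\partial_\sigma) = g_{v\sigma}$, while lowering the index of $Z = \nabla u$ gives $Z_\sigma = (du)_\sigma = \partial_\sigma u = \delta^u_\sigma$, since $u$ is one of the coordinate functions of the chart. Equating the two expressions yields $g_{v\sigma} = \delta^u_\sigma$, which is exactly (ii); in particular $g_{vv} = 0$, $g_{uv} = 1$, and $g_{va} = 0$ for every wavefront index $a$.

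For (i), the key observation is that a covariantly constant vector field is automatically Killing, since $(\mathcal{L}_Z g)_{\sigma\rho} = \nabla_\sigma Z_\rho + \nabla_\rho Z_\sigma = 0$. Written out in the adapted chart, where $Z = \partial_v$, the Killing equation $\mathcal{L}_{\partial_v} g = 0$ says precisely $\partial_v g_{\sigma\rho} = 0$, which is (i). (Equivalently, one can argue from $\nabla \partial_v = 0$ that all Christoffel symbols $\Gamma^\lambda_{\sigma v}$ vanish, and then, using (ii) to annihilate the two terms involving derivatives of the $g_{\cdot\,v}$, the Koszul formula collapses to $0 = \Gamma^\lambda_{\sigma v} = \tfrac12 g^{\lambda\rho}\partial_v g_{\sigma\rho}$, whence $\partial_v g_{\sigma\rho} = 0$ by invertibility of $g^{\lambda\rho}$.)

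For (iii), using (i) and (ii) the metric in the basis $\{\partial_u,\partial_v,\partial_1,\dots,\partial_{n-2}\}$ satisfies $g(\partial_v,\partial_v) = 0$, $g(\partial_v,\partial_a) = 0$, $g(\partial_v,\partial_u) = 1$, with a $v$-independent block $(g_{ab})$ in the wavefront directions. I would then invoke the elementary fact that a Lorentzian vector space (signature $(-,+,\dots,+)$) has no negative-semidefinite subspace of dimension $\ge 2$: if $g\rvert_W \le 0$, then splitting off the radical of $g\rvert_W$ and using that a nonzero null vector cannot be orthogonal to a timelike vector forces $\dim W \le 1$. Now suppose $(g_{ab})$ fails to be positive definite, so $g_{ab}V^aV^b \le 0$ for some nonzero $V = (V^a)$. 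Then $W = \langle \partial_v,\, V^a\partial_a \rangle$ is two-dimensional (because $\partial_v$ is transverse to the wavefront coordinate vector fields), and the identities above show $g$ is negative semidefinite on $W$, a contradiction. Hence $(g_{ab})$ is positive definite at every point of the chart; since each slice $\{u = \mathrm{const},\ v = \mathrm{const}\}$ is an embedded codimension-$2$ submanifold with tangent spaces spanned by $\partial_1,\dots,\partial_{n-2}$, its induced metric has components $g_{ab}$ and is therefore Riemannian.

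I expect no serious obstacle: (i) and (ii) are essentially one-line consequences of Theorem~\ref{thm:adapted} combined with the Killing/Koszul identities. The one place that warrants care is the linear-algebra step in (iii) — one must check that the auxiliary plane $W$ is genuinely two-dimensional and state the signature lemma (no two-dimensional negative-semidefinite subspace in Lorentzian signature) cleanly, rather than merely asserting that ``Lorentzian signature forces $g_{ab}$ positive definite''.
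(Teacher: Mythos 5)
Your proofs of (i) and (ii) are essentially identical to the paper's: (ii) by lowering the index of $Z$ in the two descriptions $Z=\partial_v$ and $Z=\nabla u$, and (i) by observing that a parallel vector field is Killing and reading off $\partial_v g_{\sigma\rho}=0$ from $\mathcal{L}_{\partial_v}g=0$. For (iii) you take a genuinely different (though closely related) route. The paper argues that the level sets $u^{-1}(c)$ are null hypersurfaces containing $\partial_v$ and all the $\partial_i$, and then cites O'Neill's Lemma~28 (p.~142) to conclude that every tangent vector to a null hypersurface other than multiples of the null normal is spacelike; it then concludes positive-definiteness of $(g_{ab})$. You instead prove the needed linear-algebra fact from scratch: a Lorentzian inner-product space admits no negative-semidefinite subspace of dimension $\ge 2$, and the plane $W=\langle \partial_v,\,V^a\partial_a\rangle$ would be such a subspace if $g_{ab}V^aV^b\le 0$ for some nonzero $V$. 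This is correct (the plane is two-dimensional since $\partial_v$ and the $\partial_a$ are distinct members of a coordinate frame, and $g(\partial_v,\partial_v)=g(\partial_v,\partial_a)=0$ by (ii)), and it has the minor advantage of being self-contained and of quantifying over an arbitrary nonzero $V=(V^a)$ rather than only the diagonal entries $g_{ii}$ — the paper's closing phrase ``$g_{ii}>0$ for all $i$, that is $(g_{ab})$ is positive-definite'' is, as literally written, weaker than what positive-definiteness requires, though the cited lemma does supply the full statement. Your caution about checking $\dim W=2$ and stating the signature lemma cleanly is exactly the right place to be careful; no gap remains.
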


\begin{proof}
    $ $\\
    \begin{enumerate}[i), nosep]
    \item A covariantly constant vector field $Z$ is in particular a Killing vector field. By definition of a Killing vector field we have $\mathcal{L}_Z (g) = 0$, but since $Z = \partial_v$ we have $0 = \left[\mathcal{L}_Z (g)\right]_{\sigma \rho} = Z (g_{\sigma \rho}) = \partial_v (g_{\sigma \rho})$.
    \item First note that $Z^{\sigma} = \delta_v^{\sigma}$ and therefore $Z_{\sigma} = g_{v\sigma}$. Then since $Z  = \nabla u = du^\sharp$, we have $Z_{\sigma} = du_{\sigma} = \delta_{\sigma}^{u}$. Therefore $g_{v\sigma} = \delta_{\sigma}^{u}$.
    \item First, the hypersurfaces $\Sigma_c \vcentcolon= u^{-1}(c) = \left\{q \in U: \varphi(q)=\left(c, v(q), x^1(q),\dots, x^{n-2}(q)\right)\right\}$ are \textit{null} hypersurfaces since the normal to these surfaces is the null $grad(u) = Z$. Via the previous point, the normal $Z = \partial_v$ is orthogonal to $\partial_i$ for all $i\in\{1,\dots,n-2\}$ \textit{and to itself} and therefore all these coordinate vectors lie in the null hypersurfaces $\Sigma_c$. \\
    
    Via \cite[Lemma 28, p. 142]{oneill} we have that a null hypersurface can contain only one null vector (here, $Z = \partial_v$ itself) and so the remaining coordinate vector fields must be timelike or spacelike. Via point (2) of the same lemma, we have that there are no timelike vectors, and therefore the $\partial_i$ for all $i\in\{1,\dots,n-2\}$ are \textit{spacelike} and thus $g_{ii}>0$ for all $i$, that is $(g_{ab})$ is positive-definite.
\end{enumerate}
\end{proof}

Using the results of Theorem \ref{thm:adapted} and Proposition \ref{prop:metricProperties}, we can now write the explicit form of the metric $g$ in adapted coordinates for a general parallel wave:
\begin{equation}\label{eq:generalPpOld}
g = 2 d u d v+g_{u u}\left(u, \mathbf{x}\right) d u^{2}+ 2g_{a u}\left(u, \mathbf{x}\right) d x^{a} d u+g_{a b}\left(u, \mathbf{x}\right) d x^{a} d x^{b} 
\end{equation}
The functions $g_{u u}\left(u, \mathbf{x}\right)$ and $g_{a u}\left(u, \mathbf{x}\right)$ will be useful for the classification of parallel wave spacetimes, and we will therefore label them $H\left(u, \mathbf{x}\right)$ and $A_a \left(u, \mathbf{x}\right)$ respectively. We then have the metric of a general parallel wave in local adapted coordinates \cite{Brinkmann1925,coley,Podolský_2009,ortaggio},
\begin{equation}\label{eq:generalPW}
\boxed{g = 2 d u d v+H\left(u, \mathbf{x}\right) d u^{2}+2A_a\left(u, \mathbf{x}\right) d x^{a} d u+g_{a b}\left(u, \mathbf{x}\right) d x^{a} d x^{b}.}
\end{equation}
One could also write this metric in matrix notation as
\begin{equation}\label{eq:generalPpMatrix}
g=
\begin{pNiceMatrix}
        H       & 1         &   A_1                     &  \dots   & A_{n-2}        \\
        1       & 0         &   0                       &  \dots   & 0        \\
        A_1     & 0         & \Block{3-3}{(g_{ab})}  
  &           &         \\
        \vdots  & \vdots    &                           &           &          \\
        A_{n-2}     & 0         &                           &           & 
\end{pNiceMatrix}.
\end{equation}
In fact, this result can be viewed as a special case of a more general result by \cite{walker}, which derives this form of a metric admitting a parallel null \textit{plane} rather than a parallel null vector field. Conceptually the generalisation is simple, as a parallel null $r$-plane is pointwise a set of $r$ linearly independent vectors, such that the field of planes (replacing the vector field in the above example) is a parallel null $r$-dimensional section of the tangent bundle $TM$. In this case, the metric takes a form similar to Eq. \ref{eq:generalPpMatrix}, though with some individual elements replaced by matrix blocks.\\

If we then impose the curvature condition Eq. \ref{eq:flatWavefront} to obtain a pp-wave, as demonstrated in \cite[Appendix A]{Globke_2016} one finds the metric of a general pp-wave in local adapted coordinates
\begin{equation}\label{eq:generalPp}
\boxed{g = 2 d u d v+H\left(u, \mathbf{x}\right) d u^{2}+2A_a\left(u, \mathbf{x}\right) d x^{a} d u+\delta_{a b}\left(u, \mathbf{x}\right) d x^{a} d x^{b}.}
\end{equation}
Note that in the context of pp-waves, these coordinates are sometimes referred to as \textit{Brinkmann coordinates} due to their original discovery \cite{Brinkmann1925} in a primarily mathematical context.\\

The properties of this general metric and some of the various special cases are discussed in Section \ref{sec:properties}. The remainder of this section focuses on defining these special cases, which are obtained by making additional assumptions on $H, A_a, g_{ab}$, the topology of the manifold, or the dimension $n$.\\

\begin{Remark} Gauge Freedom:\\
The gauge freedoms of the parallel wave and pp-wave metrics have been studied carefully, for example by \cite[Sec. 24.5]{exactSolnsKramer} in the $n=4$ case, and \cite[Sec. 6.1]{Podolský_2009} in the $n>4$ case. In vacuum regions it is standard to utilize \textit{local} gauge freedoms to eliminate the cross terms $dx^adu$, though in certain cases one can ``lose" some global information about the nature of the wave source in doing so. Both the process of changing the coordinates to eliminate these terms and extensive detail about which global information is lost in performing such a transformation can be found in \cite{gyratonic}, and will be discussed again in Sec. \ref{sec:gyratonic}. Upon eliminating these terms, the metric locally takes the form 
\begin{equation}\label{eq:PpNoA}
g = 2 d u d v+H\left(u, \mathbf{x}\right) d u^{2}+g_{a b}\left(u, \mathbf{x}\right) d x^{a} d x^{b}
\end{equation}
which one can summarise as 
\begin{equation}
g = 2 d u d v+H\left(u, \mathbf{x}\right) d u^{2}+h(u),
\end{equation}
where $h$ is a $u$-dependent family of Riemannian metrics on the codimension-2 hypersurface $u =$ const, $v=$ const. The construction of this form of the metric can be found for the special case of \textit{classical pp-waves} (Eq. \ref{eq:classicalpp} below) in \cite[Theorem 4.1.3]{exactsolEK}.\\

For such a metric, it was shown in \cite{Globke_2016} that the coordinate changes which leave this form Eq. \ref{eq:PpNoA} invariant are
\begin{align}
\begin{split}
    v &\longrightarrow v^\prime = \frac{1}{a}v + f_1(u, \mathbf{x})\\
    u &\longrightarrow u^\prime = au + b\\
    \mathbf{x} & \longrightarrow \mathbf{x}^\prime = \mathbf{f}_2(u, \mathbf{x}),
\end{split}
\end{align}
where $a \neq 0$ and $b$ are constants and $f_1$, $\mathbf{f}_2$ are smooth functions independent of $v$ on the domain of the coordinate chart. In such coordinates, the metric would retain its form 
\begin{equation}
g = 2 d u^{\prime} d v^{\prime} + H^{\prime}\left(u^{\prime}, \mathbf{x}^{\prime}\right) d u^{\prime}{}^{2}+h^{\prime}(u^\prime).
\end{equation}
The authors showed that this fact may be used to transform to so-called \textit{normal} Brinkmann coordinates centred at $p$, in which it holds that $\varphi(p) = 0 \in \reals^n$ where $\varphi$ is the coordinate chart and 
\begin{equation}
    H(u,\mathbf{0}) = 0,~~~\frac{\partial H}{\partial x^i}(u,\mathbf{0}) = 0
\end{equation}
for all $u$ in an interval around $0$.
\end{Remark}

\subsection{Standard pp-Wave}\label{sec:standardPp}
The class of pp-wave most commonly studied in the physics literature has been referred to by \cite[Eq.~2]{compact} as a \textit{standard pp-wave}. The defining characteristics of a standard pp-wave metric when written in the coordinate chart $\{u,v,\mathbf{x}\}$ of Theorem \ref{thm:adapted} are:
\begin{enumerate}[i), nosep]
    \item The coordinates $\{u,v,\mathbf{x}\}$ exist globally
    \item The metric is written with no cross terms $dx^a du$, that is $A_a = 0$ for all $a$.
\end{enumerate}

and thus our metric takes the form
\begin{equation}\label{eq:standardPp}
g = 2 d u d v+H\left(u, \mathbf{x}\right) d u^{2}+\delta _{a b}d x^{a} d x^{b}.
\end{equation}

One can see that in coordinates, the codimension-2 hypersurface defined by $u =$ const, $v =$ const corresponds precisely to the \textit{wavefront} of Definition \ref{def:wavefront}. Unsurprisingly, for this $n$-dimensional standard pp-wave, the wavefront (or ``transverse space") is simply Euclidean $\reals^{n-2}$.
\\

By assuming the coordinates $u$ and $v$ exist globally, we are making assumptions on the properties of the spacetime manifold $M$. Certainly, that $M$ is simply-connected is a sufficient condition for the coordinate $u$ being global (since then the construction involving the Poincar\'e lemma would hold globally) but this is certainly \textbf{not} a necessary condition (for example the $(N,h)$p-waves of Sec. \ref{sec:Nhp} with any non-simply connected $N$ still admit a global $u$). In the case of the $v$ coordinate, one expects that the integral curves of $Z$ should be complete and non-closed\footnote{It appears the analysis of weakest conditions under which such coordinates exist globally is not present in the literature, and remains an open question.}. Typically physical research involving pp-wave spacetimes begins with the assumption of a Lorentzian manifold $(M = \reals^n,g)$ with a metric of the form above.

\subsection{Classical pp-Waves}\label{sec:classicalPp}
These are the pp-waves for which the wavefront is two-dimensional Euclidean space, that is they are standard pp-waves on $\reals ^4$ such that the metric takes the form
\begin{equation}\label{eq:classicalpp}
    g = 2 d u d v+ H(u, x,y) d u^{2}+ dx^2 + dy^2,
\end{equation}
where the usual adapted coordinates on the wavefront $(x^1,x^2)$ have been relabelled\footnote{In some expressions it will be useful to label these coordinates as the usual $x^i$, such that for example $\sum_{i\in\{1,2\}} x^i = x + y$} to $(x,y)$. This metric is the most widely-known and well-studied pp-wave metric, due to its relevance to physics, and its simplicity while still exhibiting the key features of a pp-wave. The most important types of classical waves are the \textit{plane waves}, whose properties will be discussed extensively in Sec. \ref{sec:causal} and Sec. \ref{sec:EKConj}.

\subsubsection{Plane Waves}\label{sec:plane}
A \textit{plane wave} is a classical pp-wave for which the characteristic function $H(u,x,y)$ is quadratic\footnote{Both in the literature and here, ``quadratic" means that $H$ is \textit{purely} quadratic, that is contains only quadratic terms in the variables $x$ and $y$ as in Eq. \ref{eq:plane} (e.g. $H(u,x,y) = f(u)x^2 + 3xy$ as an example without much physical meaning). This is because any linear or constant terms in $H$ can be removed via a coordinate transformation, as noted in \cite[Sec. 2.2]{blau}.} in $(x,y)$, i.e. the metric of Eq. \ref{eq:classicalpp} wherein
\begin{equation}\label{eq:plane}
    H(u,x,y) = \sum_{i, j=1}^{2} h_{i j}(u) x^{i} x^{j}
\end{equation}
for a symmetric $2\times2$ and $u$-dependent matrix $h_{ij}(u)$. The vacuum Einstein equations imply \cite{gravwave} that $h_{ij}$ should be trace-free, which means we can write 
\begin{equation}
    (h_{ij})(u) = 
    \begin{pNiceMatrix}
        f_+(u) & f_\times (u)\\
        f_\times (u) & -f_+ (u)
\end{pNiceMatrix}.
\end{equation}
Had we wanted to describe a purely \textit{electromagnetic} wave rather than a gravitational wave, one should have $(h_{ij}) = \text{diag}(f(u),f(u))$ for some arbitrary smooth $f$. A sandwich wave is obtained when the support of the profile functions is \textit{compact}; for details see \cite[Eq. 2.1]{ppgeometry} and \cite[Sec. 17.4]{griffiths_podolský_2009}. Note that the presence of two functions necessary to describe the wave, as in the linear regime, means that the gravitational wave described by such a metric possesses two linearly independent polarization states. Note that we have used the analogous subscripts as we had on the coefficients $C_{\munu}$, as the $f_+$ and $f_\times$ functions again describe the components of the wave in each polarisation state. If we had not imposed the vacuum condition, the plane wave would instead have described a coupled system of both gravitational and electromagnetic plane waves. Such plane waves were originally studied in \cite{baldwin} and then by \cite{brdicka}.\\

Let us now examine the affect of these polarisation states as in \cite[pg.~94]{gravwave}, where we skip some steps due to the similarity with the analysis of the linear regime. For a plane wave, the geodesic equation for $u$ is simply $\ddot{u} = 0$, that for $v$ is
\begin{equation}
\ddot{v}=\frac{1}{2} \left(f_{+}^{\prime}(u)(x^2 - y^2)+2 f_{\times}^{\prime}(u) xy\right) \dot{u}^{2} +\big(f_{+}(u)(x \dot{x}-y \dot{y})+f_{\times}(u)\left(x \dot{y}+y \dot{x}\right)\big) \dot{u}
\end{equation}
and for $x$ and $y$ we have
\begin{equation}
    \begin{pNiceMatrix}
        \ddot{x} \\
        \ddot{y}
    \end{pNiceMatrix}
    = \frac{1}{2}
    \begin{pNiceMatrix}
        f_+(u) & f_\times (u)\\
        f_\times (u) & -f_+ (u)
    \end{pNiceMatrix}
    \begin{pNiceMatrix}
        x \\
        y
    \end{pNiceMatrix}.
\end{equation}
Since $\ddot{u} = 0$, we have that $u(s) = as + b$ for curve parameter $s$ and $a,b \in \reals$. Therefore as the affine parameterisation along a geodesic is only unique up to a transformation of the form $s \mapsto cs + d$, $u$ itself can be used as an affine parameter and we may take $u(s) = s$.\\

For the ``$+$" mode, we have $f_\times = 0$, and one finds the geodesic equations reduce to 
\begin{equation}
    \begin{pNiceMatrix}
        \ddot{x}(s) \\
        \ddot{y}(s)
    \end{pNiceMatrix}
    = \frac{f_+(s)}{2}
    \begin{pNiceMatrix}
        x(s) \\
        -y(s)
    \end{pNiceMatrix}.
\end{equation}
That is, the motion decouples and takes place only in the transverse directions (as expected by analogy with the linear theory). This motion is such that where $f_+(s)$ is positive, there is a ``focusing" in the $x$ direction and a defocusing in the $y$ direction. Where $f_+$ is negative, one sees the converse effect.\\

By introducing coordinates $(w,z)$ rotated by $45^\circ$ relative to $(x,y)$, and taking the ``$\times$" polarisation mode $f_+ = 0$, one finds precisely the same equation of motion for the rotated variables
\begin{equation}
    \begin{pNiceMatrix}
        \ddot{w}(s) \\
        \ddot{z}(s)
    \end{pNiceMatrix}
    = \frac{f_\times(s)}{2}
    \begin{pNiceMatrix}
        w(s) \\
        -z(s)
    \end{pNiceMatrix},
\end{equation}
where
\[
    \begin{pNiceMatrix}
        w \\
        z
    \end{pNiceMatrix}
    = \frac{1}{\sqrt{2}}
    \begin{pNiceMatrix}
        1 & 1\\
        -1 & 1
    \end{pNiceMatrix}
    \begin{pNiceMatrix}
        x \\
        y
    \end{pNiceMatrix}.
\]
Thus the two polarization modes have precisely the same effect as in the linearised theory, but now there is no requirement that the separations be ``small". This is in line with the interpretation of the characteristic function $H$ as corresponding to the perturbation $h_{\munu}$ of the linear theory, but without the requirement that it be ``small" in some sense.\\

We now demonstrate that the above expression for the metric of a plane wave (Eq. \ref{eq:plane}) corresponds to our previous definitions of a plane wave. The correspondence between the dimension of the symmetry group and the form of the line element has already been succinctly and fully described by \cite[Table, pg 79]{exactsolEK}, and so we will not reproduce the calculation here. This establishes the connection with Definition \ref{def:planeSymmetry}, and we now illustrate the connection with Definition \ref{def:planecurvature}. 
\begin{Lemma}
    The plane wave of Definition \ref{def:planecurvature} corresponds to a classical pp-wave (Eq. \ref{eq:classicalpp}) for which the characteristic function $H$ in Brinkmann coordinates is \textit{quadratic} in $(x,y)$. That is, the condition 
    \[
        \nabla_X R = 0 \quad \forall \quad X \in Z_{\perp},
    \]
    where $Z=\partial_v$ in these coordinates, $R$ is the curvature tensor and $Z_\perp \vcentcolon=\{X \in TM ~|~ g(X,Z) = 0\}$ is equivalent to $H_{xxx}=H_{yxx}=H_{xyy}=H_{yyy}=0$ for classical pp-waves.
\end{Lemma}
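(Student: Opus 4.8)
The plan is to work in the Brinkmann coordinates $\{u,v,x,y\}$ of Eq.~\ref{eq:classicalpp} — which exist for a classical pp-wave by Theorem~\ref{thm:adapted} and the discussion of Sec.~\ref{sec:classicalPp}, and in which $Z=\partial_v=\nabla u$ — and to reduce the tensorial condition $\nabla_X R=0$ to an equation for the transverse partial derivatives of $H$. First I would compute the Levi-Civita connection of Eq.~\ref{eq:classicalpp}: a short calculation shows that the only nonvanishing Christoffel symbols (up to the symmetry in the lower pair) are
\[
\Gamma^{v}{}_{uu}=\tfrac12 H_u,\quad \Gamma^{x}{}_{uu}=-\tfrac12 H_x,\quad \Gamma^{y}{}_{uu}=-\tfrac12 H_y,\quad \Gamma^{v}{}_{ux}=\tfrac12 H_x,\quad \Gamma^{v}{}_{uy}=\tfrac12 H_y,
\]
all of them $v$-independent and linear in the first derivatives of $H$. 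From these one reads off that the curvature tensor has a single independent nonzero component, $R_{uiuj}=-\tfrac12 H_{ij}$ for $i,j\in\{x,y\}$ (the transverse Hessian of $H$), the rest being generated by the Riemann symmetries; moreover, since $Z$ is parallel, $R$ contracted with $Z=\partial_v$ in any slot vanishes.

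Next I would identify $Z_\perp$. Because $g(X,\partial_v)=X^u$, one has $Z_\perp=\{X\in TM:X^u=0\}=\mathrm{span}\{\partial_v,\partial_x,\partial_y\}$ (note $\partial_u\notin Z_\perp$), and since $\nabla_X R$ is $C^\infty$-linear, hence tensorial, in $X$, it suffices to test the condition on these three coordinate fields. For $X=\partial_v$ it holds automatically: all components of $R$ are $v$-independent and every Christoffel symbol with a lower $v$ index vanishes, so $\nabla_{\partial_v}R=0$. For $X=\partial_k$ with $k\in\{x,y\}$, the correction terms in $(\nabla_k R)_{\rho\sigma\mu\nu}$ involve only Christoffel symbols $\Gamma^{\lambda}{}_{k\alpha}$ carrying a lower $k$ index, and the only such nonzero symbol is $\Gamma^{v}{}_{ku}=\tfrac12 H_k$; inserting it always replaces an index by $v$ and therefore produces a component of $R$ that vanishes. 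Hence $(\nabla_k R)_{\rho\sigma\mu\nu}=\partial_k R_{\rho\sigma\mu\nu}$, and in particular $(\nabla_k R)_{uiuj}=-\tfrac12 H_{ijk}$.

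This yields the claimed equivalence at once: $\nabla_X R=0$ for all $X\in Z_\perp$ if and only if $H_{ijk}=0$ for all $i,j,k\in\{x,y\}$, i.e. $H_{xxx}=H_{yxx}=H_{xyy}=H_{yyy}=0$ (the remaining third transverse derivatives agreeing with these by equality of mixed partials). To make contact with the ``quadratic $H$'' formulation, I would then observe that these four conditions say precisely that $H(u,\cdot,\cdot)$ is a polynomial of degree at most two in $(x,y)$ with $u$-dependent coefficients, and invoke the coordinate change recalled in Sec.~\ref{sec:plane} (cf.~\cite[Sec.~2.2]{blau}), which removes the constant and linear terms while preserving the classical pp-wave form Eq.~\ref{eq:classicalpp}, leaving $H$ purely quadratic as in Eq.~\ref{eq:plane}. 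The argument is essentially a computation; the only steps that need care are the bookkeeping of the Riemann symmetries and the observation that every correction term in $\nabla R$ is annihilated by the parallelism of $Z$ — this is what collapses the covariant derivative to an ordinary transverse derivative of the Hessian and keeps the reduction clean.
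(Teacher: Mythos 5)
Your proposal is correct and follows essentially the same route as the paper: work in Brinkmann coordinates, use the same Christoffel symbols, observe that $Z_\perp$ is spanned by $\partial_v,\partial_x,\partial_y$ with $\partial_v$ handled by parallelism, and reduce $\nabla_X R=0$ to the vanishing of the third transverse derivatives of $H$. If anything, your observation that every correction term in $\nabla_k R$ is killed because it inserts a $v$ index into $R$ — so that $\nabla_k R$ collapses to $\partial_k R$ on all components at once — is a slightly more systematic packaging of the paper's component-by-component computation of $(\nabla_{\partial_x}R)(\partial_u,\partial_x,\partial_u)$ and its $y$-analogue.
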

\begin{proof}
    First note that $\partial_x$ and $\partial_y$ are elements of $Z_\perp$. Let us begin by examining $\nabla_{\partial_x}R$ which we assume to be 0, and we will see that this implies $H_{xxx}=H_{yxx}=0$. 
    \begin{eqnarray}
    0 &=& (\nabla_{\!\partial_x}R)(\partial_u,\partial_x,\partial_u)\\
    &=& \partial_x(R(\partial_u,\partial_x)\partial_u) - R(\cd{\partial_x}{\partial_u},\partial_x)\partial_u\nonumber - \cancelto{0}{R(\partial_u,\cd{\partial_x}{\partial_x})\partial_u} - R(\partial_u,\partial_x)\cd{\partial_x}{\partial_u}\nonumber\\
    &=&\!\! \partial_x(\nabla_{\!\partial_u}\nabla_{\!\partial_x}\partial_u - \nabla_{\!\partial_x}\nabla_{\!\partial_u}\partial_u) - \frac{H_x}{2}\cancelto{0}{R(\partial_v,\partial_x)\partial_u}\nonumber -\frac{H_x}{2}\cancelto{0}{R(\partial_u,\partial_x)\partial_v}\nonumber\\
    &=&\!\! \cancelto{0}{\frac{H_{xux}}{2}\partial_v - \frac{H_{uxx}}{2}\partial_v} + \frac{H_{xxx}}{2}\partial_x + \frac{H_{yxx}}{2}\partial_y,\nonumber
    \end{eqnarray}
    
    where we have used that the nonzero Christoffel symbols are given by
    \begin{eqnarray}
    \nabla_{\partial_{x}} \partial_{u} &=&\nabla_{\partial_{u}} \partial_{x}=\frac{H_{x}}{2} \partial_{v}, \\
    \nabla_{\partial_{y}} \partial_{u} &=&\nabla_{\partial_{u}} \partial_{y}=\frac{H_{y}}{2} \partial_{v}, \\
    \nabla_{\partial_{u}} \partial_{u} &=&\frac{H_{u}}{2} \partial_{v}-\frac{H_{x}}{2} \partial_{x}-\frac{H_{y}}{2} \partial_{y}.
    \end{eqnarray}
    
    Thus $H_{xxx}=H_{yxx}=0$, and the remainder of the proof then follows by considering $(\nabla_{\!\partial_y}R)(\partial_u,\partial_y,\partial_u)$, from which the result is obtained in precisely the same manner as for $\partial_x$. The reverse direction of the equivalence then follows from the fact that $Z_\perp$ is pointwise spanned by $\partial_x,\partial_y$ and $\partial_v$, and that $\partial_v$ is a Killing vector field.
\end{proof}

\subsection{Gyratonic pp-Waves}\label{sec:gyratonic}
The gyratonic pp-waves are those pp-waves with nonvanishing $A_a$, that is the general metric can be written as
\begin{equation}\label{eq:gyratonicPp}
g = 2 d u d v+H\left(u, \mathbf{x}\right) d u^{2}+2A_a\left(u, \mathbf{x}\right) d x^{a} d u+g_{a b} d x^{a} d x^{b}, 
\end{equation}

but note that the gyratonic pp-waves may also be studied with flat wavefront ($g_{ab} = \delta_{ab}$) as in \cite{gyratonic}. Such pp-waves have been studied extensively, for example in \cite{seminal}, and in \cite{gyratonic}, wherein work by \cite{bonnor} is used to conclude that in the Ricci-flat case, they correspond to the exterior vacuum field of spinning particles moving with the speed of light. In reference to the off-diagonal terms with coefficients $A_a$, the authors state:
\begin{quote}
    \noindent \textit{In vacuum regions it is a standard and common procedure to completely remove these functions by a gauge (coordinate) transformation. However, such a freedom is generally only local and completely ignores the global (topological) properties of the spacetimes. $\dots$ In particular the possible rotational character of the source of the gravitational waves (its internal spin/helicity) is obscured.}
\end{quote}
What one finds (\cite[Sec. 4]{gyratonic}) is that the physical characteristics one can define in a pp-wave spacetime can be obscured via the local gauge transformations which eliminate the $A_a$, and in general it may be necessary to keep such terms. Most notably, one should pay close attention to such terms when attempting to define the angular momentum density of pp-waves in an analogous manner to the linearised theory \cite{MTWGrav}. In the end, such a physical property depends manifestly on the $A_a$ via the contour integral (see \cite[Eq. 33]{gyratonic})
\begin{equation}
    \oint_{C} A_{a} dx^{a},
\end{equation}
where $C$ is a (not completely arbitrary) contour in the transverse space.\\

\subsection{$\mathbf{(N,h)}$p-Waves}\label{sec:Nhp}
These spacetimes are a subclass of the parallel waves which roughly correspond to a standard pp-wave with a Riemannian manifold replacing the planar wavefront of a pp-wave. That is, they are the parallel waves which the following conditions hold:
\begin{enumerate}[i), nosep]
    \item In the adapted coordinates of theorem \ref{thm:adapted}, the metric components of the wavefront $g_{ab}$ are independent of the coordinate $u$.
    \item The spacetime decomposes as $M = \reals^2 \times N$ where $(N,h)$ is a connected Riemannian manifold\footnote{Previously the metric components on the wavefront were written locally as $g_{a b}\left(\mathbf{x}\right) d x^{a} d x^{b}$ but here we use the label $h$ to refer to the global metric on the wavefront.}. Note that this implies the coordinates $u$ and $v$ are globally defined.
\end{enumerate}

This amounts to a general parallel wave metric Eq. \ref{eq:generalPp} with the additional constraint that the metric on the transverse space $h$ be independent of $u$. The name we suggest for such spacetimes is in analogy to the ``pp-wave" spacetimes (\textbf{p}lane-fronted waves with \textbf{p}arallel waves) as here we have a wavefront $(N,h)$ and the rays remain \textbf{p}arallel, as they are the integral curves of $Z$ and $Z$ remains, as always, covariantly constant. Such spacetimes have also been called ``generalised plane waves" \cite{candelaCompleteness} and ``PFWs" (plane-fronted waves) \cite{ppgeometry} \& \cite{Nfronted}, but the authors find this suggested naming scheme to be the most transparent and accurate. We may write the $(N,h)$p-wave metric as
\begin{equation}\label{eq:NPp}
g = 2 d u d v+H\left(u, \mathbf{x}\right) d u^{2}+2A_a\left(u, \mathbf{x}\right) d x^{a} d u+h.
\end{equation}
We can write this metric without referencing coordinates on $N$ if we instead consider $H$ as a map $H\colon \reals \to C^{\infty}(N)$. That is for each $u$, $H$ is a smooth function on $N$. Similarly for the mixed terms $dx^{a}du$, we define $A \colon \reals \to \Gamma(T^*N)$ where $\Gamma(T^*N)$ is the space of sections of the cotangent bundle of $N$. With these redefinitions (unique to this Section) we may write g as
\begin{equation}\label{eq:NPpInvariant}
    g = 2 d u d v+H(u) d u^{2}+2 A(u) du + h.
\end{equation}
Such spacetimes have been studied extensively in \cite{ppgeometry} \& \cite{Nfronted}, in which the geodesic completeness, geodesic connectedness and causality have been determined. 

\subsection{Rosen Coordinates of Plane Waves}\label{sec:rosen}
The coordinates for plane waves which make manifest the symmetries/Killing vector fields are called Rosen coordinates, after \cite{rosen}. The transformation between Brinkmann and Rosen coordinates is well-documented, for example see \cite[Appendix A]{Blau_2003} and \cite[Sec. 2.8]{blau}. We simply present the local form of a plane wave metric in Rosen coordinates, where we use capital letters for the coordinates $U$ and $V$ to emphasize that they are \textbf{not} the same coordinate functions as in Brinkmann coordinates.
\begin{equation}\label{eq:rosen}
    g = 2 d U d V+K_{ij}\left(U\right) dy^idy^j,
\end{equation}
where $K_{ij}$ is positive-definite on the domain of validity of these coordinates. Note that in such coordinates, the Minkowski metric could be represented as
\begin{equation}\label{eq:rosenMinkowski}
    \eta = 2 d U d V+ \delta_{ij}dy^idy^j
\end{equation}
which is simply the usual metric written in light-cone coordinates. Rosen coordinates can often exhibit (coordinate) singularities, and are therefore often avoided in favour of Brinkmann coordinates \cite[Sec. 2.9]{blau}.\\

Generically, the plane wave metric has $2n - 3$ linearly independent Killing vectors, which in a suitable basis generate the Heisenberg algebra \cite[Sec. 2.1]{Blau_2003}. In Rosen coordinates, half ($+1$) of the Killing vector fields are manifest (independent of $K_{ij}$) and the remaining symmetries can be obtained in terms of $K^{ij}$, the inverse of $K_{ij}$. The Killing vector fields are thus (as in \cite[Eq. 2.11]{rosenSymm})
\begin{equation}
    e_{+}=\frac{\partial}{\partial V}, \quad e_{i}=\frac{\partial}{\partial y^{i}}, \quad  e_{i}^{*}=y^{i} \frac{\partial}{\partial v}-\sum_{j} \int K^{i j}(U) d U \frac{\partial}{\partial y^{j}}.
\end{equation}

These correspond to the defining symmetry of the parallel wave $Z$ and the translations and rotations of the $y^j$. Note that the $e_{i}^{*}$ are the usual rotations when we have $K_{ij} = \delta_{ij}$, that is the Minkowski metric eq. \ref{eq:rosenMinkowski}.

\section{Properties}\label{sec:properties}
\subsection{Vanishing Scalar Invariants}\label{sec:invariants}
A well-known property of the pp-wave geometries is that all scalar curvature invariants (a scalar constructed from the metric, Riemann
tensor and covariant derivatives of the Riemann tensor) are zero\footnote{Of course in the context of a general parallel wave, the scalar invariants of the wavefront will be inherited by the full spacetime.} \cite{allVSI, ortaggio}. Here, we will present a proof that all curvature invariants of the \textit{plane waves} vanish, and for the case of the general pp-wave, we direct the reader to \cite{allVSI}. There are two approaches to prove this fact, the first by explicitly calculating the curvature tensor and the second by showing that each point $p$ in a plane wave spacetime is the fixed point of a homothety, and that any curvature invariant must be 0 at such a point. We will present the second such approach here, the proof of which is due to Schmidt \cite{schmidt}, where we follow closely the presentation in \cite{blau}. 
\begin{theorem}
All curvature invariants of a plane wave vanish.
\end{theorem}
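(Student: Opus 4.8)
The plan is to follow the argument due to Schmidt: a plane wave carries an unusually large family of \emph{homotheties}, and a scalar curvature invariant must vanish at every fixed point of a proper homothety, so it suffices to check that every point of the plane wave is such a fixed point. First I would produce the homothety explicitly. In the Brinkmann form Eq.~\ref{eq:classicalpp} the characteristic function is purely quadratic, $H(u,x,y)=h_{ij}(u)x^ix^j$ (Eq.~\ref{eq:plane}). A direct termwise check then shows that the global diffeomorphism
\[
  \Phi_\mu\colon (u,v,x,y)\longmapsto(u,\mu^2 v,\mu x,\mu y),\qquad \mu\in\reals\setminus\{0\},
\]
satisfies $\Phi_\mu^{\,*}g=\mu^2 g$: the pieces $2\,du\,dv$, $dx^2+dy^2$ and $H\,du^2$ each scale by $\mu^2$, the last one precisely because $H$ is homogeneous of degree two in $(x,y)$ (equivalently $\mathcal{L}_\xi g=2g$ for $\xi=2v\partial_v+x\partial_x+y\partial_y$, using Euler's relation $x^i\partial_iH=2H$). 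Each $\Phi_\mu$ fixes every point of the null line $\ell\vcentcolon=\{x=y=v=0\}$.

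Next I would promote this to the statement that \emph{every} point is fixed by some proper homothety. This uses the fact, already invoked in Sec.~\ref{sec:rosen}, that a plane wave admits a $(2n-3)$-dimensional isometry algebra: besides $\partial_v$, there are the transverse ``Heisenberg translations'' $X=p^i(u)\,\partial_i-\dot p^i(u)\,x_i\,\partial_v$, one for each element $p$ of a basis of solutions of the linear oscillator system governing transverse geodesic deviation. Since the Cauchy data $(p^i(u_0),\dot p^i(u_0))$ can be prescribed arbitrarily, these Killing fields together with $\partial_v$ span the full tangent space of each null hyperplane $\{u=u_0\}$ at every one of its points, so the isometry group acts transitively there. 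Hence, given any $p$, there is an isometry $\psi$ carrying $p$ onto $\ell$, and $\psi^{-1}\circ\Phi_\mu\circ\psi$ is then a homothety of factor $\mu^2$ fixing $p$.

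The last step is the scaling argument for invariants. A scalar curvature invariant $I$ is a complete metric contraction of a product $\nabla^{m_1}R\otimes\cdots\otimes\nabla^{m_k}R$ with $k\ge1$; a short weight count — under a constant rescaling $g\mapsto\lambda g$ the Levi-Civita connection and the $(1,3)$-Riemann tensor are unchanged, so each covariant factor $\nabla^{m_j}R$ picks up $\lambda^{+1}$ and each of the inverse-metric contractions picks up $\lambda^{-1}$ — shows that $I$ is homogeneous of some strictly negative degree $w$, i.e. $I[\lambda g]=\lambda^{w}I[g]$ (for instance $R\mapsto\lambda^{-1}R$). Since $I$ is also natural under diffeomorphisms, $I[\Phi^*g]=I[g]\circ\Phi$ for any diffeomorphism $\Phi$; applying this to a homothety $\Phi$ fixing $p$ gives $I[g]\circ\Phi=(\mu^2)^{w}I[g]$, hence $I[g](p)=(\mu^2)^{w}I[g](p)$. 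Choosing $\mu>0$ with $\mu\neq1$ makes $(\mu^2)^w\neq1$ because $w\neq0$, so $I[g](p)=0$; as $p$ was arbitrary, $I\equiv0$ on the plane wave.

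I expect the one genuinely delicate step to be the transitivity claim in the middle paragraph — that the isometry group can move an arbitrary point onto the homothety-fixed line $\ell$ — which depends on the explicit structure of the plane-wave Killing algebra; without it the homothety argument only delivers the vanishing of invariants along $\ell$. The Lie-derivative verification and the weight bookkeeping for $I$ are both routine.
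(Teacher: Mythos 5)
Your proof is correct and follows the same overall strategy as the paper's (Schmidt's homothety argument: every point of a plane wave is a fixed point of a proper homothety, and a scalar curvature invariant has nonzero conformal weight, hence vanishes at such a point). The one genuine difference is in how you establish that \emph{every} point is a homothety fixed point. The paper passes to Rosen coordinates, $g = 2\,du\,dv + g_{ij}(u)\,dy^i dy^j$, in which translations in $v$ and $y^j$ are manifest isometries, so the anisotropic scaling $(u,v,y^j)\mapsto(u,\lambda^2 v,\lambda y^j)$ immediately yields a homothety fixing an arbitrary (translated) point. You stay in Brinkmann coordinates, where the transverse translations are \emph{not} isometries, and therefore have to supply the extra lemma that the Heisenberg Killing fields $X = p^i(u)\partial_i - \dot p_i(u)x^i\partial_v$ (with $\ddot p_i = h_{ij}p^j$) together with $\partial_v$ act transitively on each null hyperplane $\{u = u_0\}$; that claim is correct --- the Cauchy data $(p^i(u_0),\dot p^i(u_0))$ are free, so the orbits are open in the connected hyperplane --- and your conjugation $\psi^{-1}\circ\Phi_\mu\circ\psi$ then does the job. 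What each route buys: the paper's is shorter but leans on the (only locally valid, possibly degenerate) Rosen form, which the paper itself warns about in Sec.~\ref{sec:rosen}; yours is longer but global and makes explicit which isometries are actually being used. Your weight bookkeeping for the invariants is also slightly more careful than the paper's (which asserts a weight $e^{m\lambda}$ with $m$ a natural number without tracking the sign); all that matters for the argument is $w\neq 0$, which both versions secure.
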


\begin{proof}
We will proceed via the following series of arguments:
\begin{enumerate}
    \item An elementary curvature invariant cannot be invariant under constant rescalings of the metric (called a \textit{homothety}).
    \item If there exists a coordinate transformation which induces a homothety, then due to the previous point, at the fixed points of the transformation (i.e. points which are invariant under the transformation) any elementary curvature invariant must be 0.
    \item Any point in a plane wave is the fixed point of a homothety
\end{enumerate}

These statements are proved as follows:
\begin{enumerate}
    \item A general curvature invariant of a manifold $(M,g)$ is constructed from the metric and elementary curvature invariants. An elementary curvature invariant is obtained by taking covariant derivatives of the Riemann tensor
    \[
        \nabla_{\mu_{1}} \ldots \nabla_{\mu_{p}} R_{\nu \lambda \rho}{}^{\mu}
    \]
    and ``tracing out" all free indices with the inverse metric $g^\munu$. The Levi-Civita connection $\nabla$ is invariant under a constant rescaling of the metric (homothety), which is conformal transformation, in which the conformal factor $\lambda$ is a nonzero constant
    \[
        g_\munu \longrightarrow \tilde{g}_\munu = e^{2\lambda}g_\munu
    \]
    That is, we have a second manifold $(M,\tilde{g
    })$ conformally related to $(M,g)$ (the homothety is in particular \textit{not} an isometry). Since $\nabla$ is invariant under such a transformation, so too is the Riemann tensor. Since the (certainly not invariant) inverse metric is required to make a scalar, the elementary invariants cannot be invariant under such a homothety. Rather, a curvature invariant $J$ will change as 
    \[
        J(x) \longrightarrow e^{m\lambda}J(x)
    \]
    for some $x \in M$ and some natural number $m$ which depends on the order of $J$ (number of covariant derivatives). 
    \item Assume there exists a coordinate transformation of the Lorentzian manifold $(M,g)$ which induces a homothety with $x$ a fixed point. Since $x$ is a fixed point of the homothety we have
    \[
        J(x) = e^{m\lambda}J(x)
    \]
    differing from above in the equals sign alone. Such an equality can only hold (for natural $m$ and constant nonzero $\lambda$) if $J(x) = 0$.
    \item We simply need to construct the coordinate change for plane waves which induces a nontrivial homothety. As in Sec. \ref{sec:rosen}, any plane wave metric can be written in the so-called ``Rosen coordinates" as
    \[
        g = 2dudv + g_{ij}(u)dy^idy^j.
    \]
    Such a form exhibits obvious translational symmetry in the $y^j$ and $v$ directions. Due to these symmetries, without loss of generality we can take a general point to be written as $x = (u_0,0,0)$, which is fixed point of the coordinate transformation
    \[
        (u,v,y^j) \longrightarrow (u,\lambda^2 v, \lambda y^j)
    \]
    for some constant $\lambda$. Such a coordinate transformation is in fact a homothety, and scales the metric as $g \longrightarrow \lambda^2 g$. Since we have shown that this is true for general $u_0$, the result holds for any point $(u,v,\mathbf{y})$ of a plane wave.
\end{enumerate}
\end{proof}
For further details of all classes of spacetimes in which the curvature invariants identically vanish, see \cite{allVSI}.

\subsection{pp-waves via their Wavefronts}\label{sec:ppWavefront}
As mentioned in Definition \ref{def:wavefront} above, a distinguishing feature of a null vector field $Z$ is, of course, that it lies in its own orthogonal complement, $Z_{\perp}$, leading to the Wavefront $Z_{\perp}/Z$, a vector bundle whose elements are equivalence classes ``$[X]$" of vector fields $X$ orthogonal to $Z$.  Because such vector fields are necessarily spacelike (see, e.g., \cite[Lemma~28, p.~142]{oneill}), $Z_{\perp}/Z$ will inherit a (positive-definite) inner product from the Lorentzian metric $g$.  It turns out that when $Z$ is also \emph{parallel}, as it is a for a pp-wave, then $Z_{\perp}/Z$ will also inherit a well defined linear connection, and this can be used to give an alternative\,---\,and very geometric\,---\,definition of a pp-wave.  This alternative formulation of a pp-wave, which we now provide, is well known; see, e.g., \cite{caja}, \cite[Proposition~3]{compact}. In the following, $\Gamma(E)$ represents the space of sections of the vector bundle $E$.
\begin{theorem}\label{thm:QB}
Let $(M,g)$ be a Lorentzian manifold and $Z$ a null, parallel vector field defined in an open subset $\mathscr{U} \subseteq M$, with orthogonal complement $Z_{\perp} \subset T\mathscr{U}$.  Then the wavefront $Z_{\perp}/Z$ admits a positive-definite inner product $\bar{g}$,
\[
\bar{g}([X],[Y]) \defeq g(X,Y)\hspace{.2in}\text{for all}\hspace{.2in}[X], [Y] \in \Gamma(Z_{\perp}/Z),
\]
and a corresponding linear connection $\overline{\nabla}\colon \mathfrak{X}(\mathscr{U})\times \Gamma(Z_{\perp}/Z) \rightarrow \Gamma(Z_{\perp}/Z)$,
$$
\conn{V}{Y} \defeq [\cd{V}{Y}] \hspace{.2in}\text{for all}\hspace{.2in} V \in \mathfrak{X}(\mathscr{U})\hspace{.2in}\text{and}\hspace{.2in}[Y] \in \Gamma(Z_{\perp}/Z).
$$
This connection is flat if and only if $(\mathscr{U},g|_{\mathscr{U}})$ is a pp-wave.
\end{theorem}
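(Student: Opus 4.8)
The plan is to reduce the theorem to a single computation — identifying the curvature of $\overline{\nabla}$ with the Riemann tensor $R$ of $g$ acting on $Z_{\perp}$ modulo $Z$ — and then to read off the equivalence with the pp-wave condition Eq.~\ref{eq:flatWavefront} from the linear algebra of a null vector. The well-definedness assertions in the statement are routine and already essentially contained in the remarks preceding the theorem: since $Z$ is parallel and $g(Y,Z)=0$ for $Y\in Z_{\perp}$, one has $g(\cd{V}{Y},Z)=Vg(Y,Z)-g(Y,\cd{V}{Z})=0$, so $\cd{V}{Y}\in\Gamma(Z_{\perp})$; replacing $Y$ by $Y+fZ$ changes $\cd{V}{Y}$ only by $(Vf)Z$, so the formula $\conn{V}{Y}=[\cd{V}{Y}]$ is independent of the representative; the Leibniz rule in $Y$ and $C^{\infty}$-linearity in $V$ are inherited from $\nabla$; and $\bar g$ descends to a positive-definite form because the radical of $g\rvert_{Z_{\perp}}$ is exactly $\mathrm{span}(Z)$ (\cite[Lemma~28]{oneill}).

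The heart of the argument is the curvature computation. Because $\cd{W}{Y}$ is again a section of $Z_{\perp}$, one may apply $\overline{\nabla}$ to it, and unwinding the definition of the curvature of the bundle connection $\overline{\nabla}$ gives
\[
\bar R(V,W)[Y]\;=\;\big[\,\cd{V}{\cd{W}{Y}}-\cd{W}{\cd{V}{Y}}-\cd{[V,W]}{Y}\,\big]\;=\;[\,R(V,W)Y\,].
\]
Hence $\overline{\nabla}$ is flat if and only if $R(V,W)Y\in\mathrm{span}(Z)$ at every point, for all vector fields $V,W$ and all $Y\in\Gamma(Z_{\perp})$.

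It remains to match this with Definition~\ref{def:pp}, i.e.\ with the condition that the curvature operator annihilate $\Lambda^{2}Z_{\perp}$, equivalently $R(X,Y)W=0$ whenever $X,Y\in Z_{\perp}$ and $W$ is arbitrary. Both implications use only the pair symmetry $R(A,B,C,D)=R(C,D,A,B)$ together with $Z_{\perp}^{\perp}=\mathrm{span}(Z)$ (valid since $Z$ is null) and $\mathrm{span}(Z)\subseteq Z_{\perp}$. If $(\mathscr{U},g)$ is a pp-wave, then for $Y\in Z_{\perp}$, arbitrary $V,W$, and any $T\in Z_{\perp}$,
\[
g(R(V,W)Y,T)=R(V,W,Y,T)=R(Y,T,V,W)=g(R(Y,T)V,W)=0,
\]
the last equality because $Y,T\in Z_{\perp}$; thus $R(V,W)Y\perp Z_{\perp}$, so $R(V,W)Y\in\mathrm{span}(Z)$ and $\overline{\nabla}$ is flat. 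Conversely, if $\overline{\nabla}$ is flat, then for $X,Y\in Z_{\perp}$, arbitrary $W$, and any $T$,
\[
g(R(X,Y)W,T)=R(X,Y,W,T)=R(W,T,X,Y)=g(R(W,T)X,Y)=0,
\]
since $R(W,T)X\in\mathrm{span}(Z)$ (flatness, with $X\in Z_{\perp}$ in the $Z_{\perp}$-slot) and $Y\in Z_{\perp}$; hence $R(X,Y)W=0$ for all $W$, i.e.\ $(\mathscr{U},g)$ is a pp-wave.

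I do not anticipate a genuine obstacle. The one point that must be handled with care is the reading of Eq.~\ref{eq:flatWavefront}: it has to be interpreted as the vanishing of the curvature \emph{bivector} $R(X,Y)$ for $X,Y\in Z_{\perp}$, not merely as the vanishing of the four-tensor $R(X,Y,W,V)$ with all arguments in $Z_{\perp}$, which is strictly weaker (it holds, for instance, for gyratonic metrics whose transverse field strength is non-constant, and for these $\overline{\nabla}$ is genuinely not flat). It is also worth emphasizing that flatness of $\overline{\nabla}$ is a statement about curvature, not holonomy, so it is compatible with the nontrivial global (gyratonic/topological) data a pp-wave may carry; what flatness of $\overline{\nabla}$ detects is precisely the local curvature obstruction of Eq.~\ref{eq:flatWavefront}.
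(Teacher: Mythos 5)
Your proof is correct and follows essentially the same route as the paper's: both identify the curvature of $\overline{\nabla}$ with $[R(V,W)Y]$, invoke the pair symmetry $\mathrm{Rm}(V,W,X,Y)=\mathrm{Rm}(X,Y,V,W)$, and use the fact that the radical of $g\rvert_{Z_\perp}$ is $\mathrm{span}(Z)$ to pass between flatness of $\overline{\nabla}$ and the condition $R(X,Y)V=0$ of Definition \ref{def:pp}. Your closing caveat about how Eq. \ref{eq:flatWavefront} must be read (annihilation of the curvature endomorphism on $Z_\perp\wedge Z_\perp$, not merely of the four-tensor with all slots in $Z_\perp$) is a sensible clarification and is consistent with the reading the paper itself adopts in its proof.
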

\begin{proof}
The metric $\bar{g}$ will be well defined, and positive definite, whenever $Z$ is null; indeed, every $X \in \Gamma(Z_{\perp})$ not proportional to $Z$ is necessarily spacelike, so that $\bar{g}$ is nondegenerate (and positive-definite), and if $[X] = [X']$ and $[Y] =[Y']$, so that  $X' = X +fZ$ and $Y' = Y+kZ$ for some smooth functions $f,k$, then
$$
\bar{g}([X'],[Y']) = g(X',Y') = g(X,Y) =  \bar{g}([X],[Y]).
$$
On the other hand, the connection $\overline{\nabla}$ requires $Z$ to be parallel or else it is not well defined: $\cd{V}{Y}  \in \Gamma(Z_{\perp})$ if and only if $Z$ is parallel, in which case
\[
\conn{V}{Y'} = [\cd{V}{Y'}] = [\cd{V}{Y}]+\cancelto{0}{[V(k)Z]} + \cancelto{0}{[k\cd{V}{Z}]}\, = \conn{V}{Y}.
\]
That $\overline{\nabla}$ is indeed a linear connection follows easily.  Now, if this connection is flat, then by definition its curvature endomorphism, which is the mapping
\begin{equation}
\overline{\text{R}}\colon \mathfrak{X}(\mathscr{U})  \times \mathfrak{X}(\mathscr{U}) \times \Gamma(Z_{\perp}/Z) \rightarrow \Gamma(Z_{\perp}/Z),\nonumber
\end{equation}
whose action is given by
\begin{equation}
\overline{\text{R}}(V,W)[X] \defeq \overline{\nabla}_{\!V}[\overline{\nabla}_{\!W}[X]] - \overline{\nabla}_{\!W}[\overline{\nabla}_{\!V}[X]] - \conn{[V,W]}{X},\nonumber
\end{equation}
will vanish, for any section $[X] \in \Gamma(Z_{\perp}/Z)$ and vector fields $V,W \in \mathfrak{X}(\mathscr{U})$. Using the metric $\bar{g}$, this flatness condition is equivalent to
\[
\bar{g}(\overline{\text{R}}(V,W)[X],[Y]) = 0\hspace{.1in}\text{for all}\hspace{.1in}V,W \in \mathfrak{X}(\mathscr{U})\hspace{.1in},\hspace{.1in}[X],[Y] \in \Gamma(Z_{\perp}/Z).
\]
But if we unpack the definitions of $\overline{\nabla}$ and $\bar{g}$, we see that
\begin{equation}
\label{eqn:RtoR}
\bar{g}(\overline{\text{R}}(V,W)[X],[Y]) = \text{Rm}(V,W,X,Y) = \text{Rm}(X,Y,V,W).
\end{equation}
It follows that $\overline{\text{R}} = 0$ if and only if $R(X,Y)V = 0$ for all $X,Y \in \Gamma(Z_{\perp})$ and $V \in \mathfrak{X}(\mathscr{U})$; by \eqref{eq:flatWavefront} and Definition \ref{def:pp}, this is precisely the condition to be a pp-wave.
\end{proof}

\subsection{Penrose Limits}\label{sec:penroseLimit}
We now outline the importance and prove the existence of the famous ``Penrose limit", which assigns a plane wave metric Eq. \ref{eq:plane} as a limit of any spacetime $(M,g)$ in a neighbourhood of a null geodesic $\gamma$. This is not a property of the parallel wave metrics, but rather a remarkable feature of all spacetimes. This fact was originally demonstrated by Penrose in 1976 \cite{penroseLimit}, where he described the limiting procedure as a null analogy to the procedure by which one obtains the tangent space (that is, ``zooming in" on a small neighbourhood and scaling those neighbourhoods up in a complementary manner).  It is worth pointing out that applications of Penrose's limit in physics continue to the present day, particularly in higher dimensions and in relation to string theory and the AdS/CFT correspondence; see, e.g., \cite{blau2002penrose,berenstein2002strings,blau2004penrose} and the references therein.
\\

 We adopt a different notation to that of Penrose's work to be consistent with the majority of modern literature regarding pp-waves, and in particular Theorem \ref{thm:adapted} of this article. We take inspiration from the discussion of \cite{phillip2006}, who is consistent in explicitly writing the appropriate pullbacks which appear only implicitly in the original work~\cite{penroseLimit}.

\begin{theorem}\label{thm:nullCoords}
    Consider an
    $n$-dimensional Lorentzian manifold $(M,g)$. In a neighborhood of a point on any conjugate-point free portion $\gamma'$ of a null geodesic $\gamma$, one can write the metric g in the so-called ``null coordinates" as
    \begin{equation}\label{eq:eq:nullCoords}
    g = 2 d u d v+H d u^{2}+2A_a d x^{a} d u+g_{a b} d x^{a} d x^{b},
    \end{equation}
    where $H$, $A_a$ and $g_{ab}$ (with $a,b \in 1,\dots,n-2$) are smooth functions of the coordinates and $(g_{ab})$ is a positive-definite matrix, i.e. a family of Riemannian metrics on the $(n-2)$-dimensional embedded submanifolds defined by $u=\text{const}, v=\text{const}$. One could represent this metric in matrix notation as
    \begin{equation}\label{eq:nullCoordsMatrix}
    g=
    \begin{pNiceMatrix}
        H       & 1         &   A_1                     &  \dots   & A_{n-2}        \\
        1       & 0         &   0                       &  \dots   & 0        \\
        A_1     & 0         & \Block{3-3}{(g_{ab})}  
                                                        &           &         \\
        \vdots  & \vdots    &                           &           &          \\
        A_{n-2}     & 0         &                           &           & 
    \end{pNiceMatrix}.
    \end{equation}
    Note also that in these coordinates, $\gamma$ is represented by the integral curve of $\partial/\partial_v$ which passes through the origin.
\end{theorem}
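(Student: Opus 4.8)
The plan is to deduce this from Theorem~\ref{thm:adapted}. As the footnote to that theorem already hints, its proof never uses the full strength of ``$Z$ is covariantly constant'': what it needs is that $Z$ is null and that the $1$-form $g(Z,\cdot)$ is closed, so that locally $Z=\nabla u$ and the level sets $\{u=\mathrm{const}\}$ are null hypersurfaces; covariant constancy ($\mathcal{L}_{Z}g=0$) enters only afterwards, to force the metric components to be independent of $v$ --- which is precisely the conclusion that Eq.~\eqref{eq:eq:nullCoords}, unlike Eq.~\eqref{eq:generalPW}, does \emph{not} assert. So the real content of the theorem is to manufacture, in a neighbourhood of a point of the conjugate-point free portion $\gamma'$, a null vector field $Z$ with $g(Z,\cdot)$ closed that restricts to the tangent $\dot\gamma$ along $\gamma$. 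Granting such a $Z$, one builds $v$ as the affine parameter along the integral curves of $Z$ (normalised so that $\partial_{v}=Z$) and the $x^{a}$ as functions constant along those integral curves which, together with $u,v$, form a chart, exactly as in Theorem~\ref{thm:adapted}; this yields the normal form~\eqref{eq:eq:nullCoords} with $H=g_{uu}$ and $A_{a}=g_{ua}$. Positive-definiteness of $(g_{ab})$ is then verbatim the argument of Proposition~\ref{prop:metricProperties}(iii): the coordinate fields $\partial_{a}$ are tangent to the null hypersurface $\{u=\mathrm{const}\}$ and orthogonal to the null generator $Z=\partial_{v}$, hence by \cite[Lemma~28, p.~142]{oneill} they must be spacelike.

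To construct $Z$, fix a point $p\in\gamma'$ and an $(n-2)$-dimensional spacelike submanifold $\Sigma\ni p$ transverse to $\dot\gamma$; the normal space of $\Sigma$ contains exactly two null directions at each point, and letting $\mathcal{N}$ be the hypersurface swept out by the null geodesics issuing from $\Sigma$ along the one that agrees with $\dot\gamma$ at $p$, one obtains a null hypersurface having $\gamma$ as one of its generators. Extending $\mathcal{N}$ to a foliation by null hypersurfaces $\{u=c\}$ amounts to solving the eikonal equation $g^{\sigma\rho}\,\partial_{\sigma}u\,\partial_{\rho}u=0$ with Cauchy data chosen so that $\{u=0\}=\mathcal{N}$ and $\nabla u|_{\gamma}=\dot\gamma$. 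Because the characteristic curves of this first-order PDE are exactly the affinely parametrised null geodesics, it is locally solvable by the method of characteristics, and I would take $Z=\nabla u$. Then $g(Z,Z)=0$ by construction; $Z$ is geodesic, since $\nabla_{\nabla u}\nabla u=\tfrac12\nabla(g(\nabla u,\nabla u))=0$; and $g(Z,\cdot)=du$ is exact, hence closed --- precisely the input the reduction above demands.

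The one genuinely delicate point --- and the sole reason the hypothesis asks for a \emph{conjugate-point free} portion --- is the regularity of this construction. The smooth map sending (affine parameter along a null generator, label of that generator, the transverse parameters selecting the leaf $\{u=c\}$) to the corresponding point of $M$ has a differential along $\gamma$ governed by the Jacobi equation, and it degenerates exactly at the conjugate points of $\gamma$, where the null geodesic congruence develops caustics and $u$ --- hence $Z$, hence the putative chart --- ceases to be single-valued and smooth. Restricting to a point of a conjugate-point free $\gamma'$, and to a sufficiently thin tube about the corresponding segment of $\gamma$, makes this map a diffeomorphism onto a neighbourhood of $p$, so that $(u,v,x^{a})$ is a genuine coordinate chart there. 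Choosing the transverse data so that $p$ sits at $0\in\reals^{n}$ and $x^{a}\equiv 0$ along $\gamma$ then identifies $\gamma$ with $\{u=0,\ x^{a}=0\}$, i.e.\ the integral curve of $\partial/\partial v$ through the origin, as claimed. I would relegate the Hamilton--Jacobi solvability and the Jacobi-field non-degeneracy estimate to the standard references~\cite{penroseLimit}, following the explicit pullback bookkeeping of~\cite{phillip2006}.
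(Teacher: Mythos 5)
Your proposal is correct and follows essentially the same route as the paper: solve the eikonal/Hamilton--Jacobi equation $g(\nabla u,\nabla u)=0$ with data matching $\dot\gamma$ along $\gamma'$, set $Z=\nabla u$ (null, geodesic, with $du$ closed), complete the chart via the straightening of $Z$, and invoke parts (ii) and (iii) of Proposition \ref{prop:metricProperties} for the metric form and positive-definiteness of $(g_{ab})$. Your added discussion of the transversal $\Sigma$, the null hypersurface $\mathcal{N}$, and the role of the conjugate-point-free hypothesis in keeping the characteristic map nondegenerate is a welcome elaboration of details the paper leaves implicit, but it does not change the argument.
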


\begin{proof}
    First, define a vector field $Z$ (suggestively labelled in analogy to Theorem \ref{thm:adapted}) such that along $\gamma$ we have $Z = \dot{\gamma '}$. Now we construct the coordinate $u$. The partial differential equation 
    \[
        g(\text{grad}(u),\text{grad}(u)) = 0
    \]

    with boundary condition $\text{grad}(u) = Z$ on $\gamma '$ is a Hamilton-Jacobi equation for $u$ which always admits local solutions (see \cite[585-588]{Lee}).\\

    As is suggested by the similarity of the result, we take inspiration from the proof of Theorem \ref{thm:adapted}, noting that we no longer assume that $Z$ be covariantly constant. The necessary adjustment to the proof is as follows: That $Z$ is nonzero in a neighbourhood of $\gamma '$ holds again by the fact that it is null, but also by the fact that $\gamma '$ is geodesic, that is $\nabla _{\dot{\gamma'}} \dot{\gamma'} = 0$. Thus the remainder of step 1 remains valid, and we may construct a coordinate system $\{\tilde{x}^0,v,\tilde{x}^1,\ldots,\tilde{x}^{n-2}\}$ with $\text{grad}(u)=Z=\tilde{\partial}_v$ via the straightening theorem. Step 2 is not necessary in this context, as $u$ has already been introduced by the above argument. Step 3 follows as before, yielding a coordinate system $\{u,v,\mathbf{x}\} \vcentcolon = \{u,v,x^1,\dots,x^{n-2}\}$ on an open set $U \subset M $ containing $\gamma '$. The form of the metric and the positive-definiteness of $(g_{ab})$ then follow from parts (ii) and (iii) of Proposition~\ref{prop:metricProperties}.
\end{proof}

Note that an alternative and succinct version of this proof was provided by \cite[Sec. 4.3]{blau}, but the reader should note that their ``$U,V$" is our ``$v,u$".\\

We now describe the limiting procedure by which one can ``zoom in" on a null geodesic (called the \textit{Penrose limit}) while simultaneously scaling up the metric, in a manner analogous to obtaining the tangent space of a Riemannian manifold. The primary difference however is that in the Riemannian case, the space obtained via this procedure is a \textit{flat} space, whereas in the Penrose limit we will obtain an intrinsically \textit{curved} space, which will turn out to be the plane wave Eq. \ref{eq:plane} written in the Rosen coordinates of Sec. \ref{sec:rosen}.\\

\subsubsection{Limiting Procedure: Penrose's Construction}
This section follows Penrose's original construction \cite{penroseLimit} but is presented in a more modern language, in a self-contained manner using the proofs of Section \ref{sec:coordDesc}, and explicitly generalised to arbitrary dimension. The procedure by which we will define the Penrose limit of a spacetime will be (schematically) as follows:
\begin{enumerate}
    \item Take a spacetime $(M,g)$ and write the metric in null coordinates in a neighbourhood of a null geodesic $\gamma$.
    \item Define a new coordinate system whose coordinate functions are those of the null coordinates divided by powers of a parameter $\Omega$ (which we will let go to 0 later, causing those coordinates to ``blow up") and write $g$ in these coordinates.
    \item Define another metric $h$ on $M$ conformal to $g$ with constant factor $h = \Omega^{-2} g$
    \item Show that in the limit $\Omega \to 0$, $h$ (that is, $\Omega^{-2} g$) is simply the metric of a plane wave. This is the ``Penrose limit" of $(M,g)$ in a neighbourhood of $\gamma$, and importantly, the construction was independent of the properties of the spacetime metric $g$. That is, \textit{all} spacetimes look like a plane wave when we simultaneously scale up the coordinates and scale up the metric near a null geodesic $\gamma$, which amounts to ``zooming in" on $\gamma$, or equivalently, blowing up a neighbourhood of $\gamma$ to cover the whole spacetime.
\end{enumerate}

To understand the complementary scaling of the coordinates and the metric, Penrose interprets this procedure as first scaling up the coordinates to ``blow up" the points of interest (just as one does when looking at the tangent space of any point), then, to account for the fact that a general curvature tensor will appear to blow up as the coordinates do, we must simultaneously scale \textit{up} the metric to scale \textit{down} the curvature tensor and obtain finite results. Physically, Penrose interprets this procedure as boosting an observer closer and closer to the speed of light, and a complementary re-calibration of their clocks in such a manner so as to keep the affine parameter $u$ along the null geodesic $\gamma$ invariant under the procedure. For details see the original work \cite{penroseLimit} and \cite[Sec. 4.4]{blau} for a more modern description.\\

We now begin the explicit construction. Consider an $n$-dimensional Lorentzian manifold $(M,g)$ and an open set $U \subset M$ (containing a conjugate point-free segment of a null geodesic $\gamma$) on which the null coordinates Eq. \ref{eq:nullCoordsMatrix} are defined, and label this null coordinate chart $\psi$. Then consider the map $\phi_\Omega \vcentcolon=\varphi_{\Omega} \circ \psi \vcentcolon U \to \mathbb{R}^n$ where
\begin{align*}
\varphi_\Omega ~ \vcentcolon ~ &\mathbb{R}^n \to \mathbb{R}^n\\
\vcentcolon ~ &\left(u,v, x^{1}, \ldots, x^{n-2}\right) \mapsto \underbrace{\left(\frac{u}{\Omega^{2}}, v, \frac{x^{1}}{\Omega}, \ldots, \frac{x^{n-2}}{\Omega} \right)}_{=\left(\tilde{u}, \tilde{v}, \tilde{x}^{1}, \ldots, \tilde{x}^{n-2}\right) }
\end{align*}

for $\Omega > 0$ a constant. The map $\phi_\Omega$ is then a diffeomorphism for $\Omega \neq 0$. Define a metric\footnote{The metric $h$ will turn out to be conformal to $(\phi_\Omega^{-1})^*g$, but we could also start from that fact and define $h \vcentcolon= \Omega^{-2} (\phi_\Omega^{-1})^*g$ and the calculate its explicit form, which will be Eq. \ref{eq:hmatrix}.} $h$ on $\phi_\Omega(U) \subset  \mathbb{R}^n$ whose representation in the tilde coordinates is 
\begin{equation}\label{eq:hmatrix}
    h=
    \begin{pNiceMatrix}
        \Omega^{2} \tilde{H}       & 1         &   \Omega \tilde{A}_1                     &  \dots   & \Omega \tilde{A}_{n-2}        \\
        1       & 0         &   0                       &  \dots   & 0        \\
        \Omega \tilde{A}_1     & 0         & \Block{3-3}{(\tilde{g}_{ab})}  
                                                        &           &         \\
        \vdots  & \vdots    &                           &           &          \\
        \Omega \tilde{A}_{n-2}     & 0         &                           &           & 
    \end{pNiceMatrix},
\end{equation}
where $\tilde{H}$, the $\tilde{A}_a$ and the $\tilde{g}_{ab}$ are implicitly functions of \textbf{all} the tilde coordinates defined (strategically) in the following manner
\begin{align}
\begin{split}\label{eq:tildeFunctions}
    \tilde{H} &\vcentcolon = H(\Omega^2 \tilde{u}, \tilde{v}, \Omega\tilde{x}^1, \dots,\Omega\tilde{x}^{n-2}) = H(u,v,x^1,\dots,x^{n-2}),\\
    \tilde{A}_a &\vcentcolon= A_a(\Omega^2 \tilde{u}, \tilde{v}, \Omega\tilde{x}^1, \dots,\Omega\tilde{x}^{n-2}) = A_a(u,v,x^1,\dots,x^{n-2}),\\
    \tilde{g}_{ab} &\vcentcolon= g_{ab}(\Omega^2 \tilde{u}, \tilde{v}, \Omega\tilde{x}^1, \dots,\Omega\tilde{x}^{n-2}) = g_{ab}(u,v,x^1,\dots,x^{n-2}).
\end{split}
\end{align}
The metric $h$ is conformal to $(\phi_\Omega^{-1})^*g$, which can be seen as follows: First, by definition of the tilde coordinate system and Eq. \ref{eq:tildeFunctions}, we relate the components of $g$ and $h$ as:
\[
\begin{array}{c @{{}={}} c @{{}={}} c @{{}={}} c}
    g_{uv} ~ du \otimes dv & du \otimes dv & \Omega^2 d\tilde{u} \otimes d\tilde{v} & \Omega^2 h_{\tilde{u}\tilde{v}}~d\tilde{u} \otimes d\tilde{v},\\
    g_{uu} ~ du \otimes du & H du \otimes du & \Omega^{4} H d\tilde{u} \otimes d\tilde{u} & \Omega^{2} h_{\tilde{u}\tilde{u}} ~ d\tilde{u} \otimes d\tilde{u},
\end{array}
\]
and one obtains a similar relationship for the remaining components:
\begin{equation}
    g_{\rho\sigma} ~ dx^\rho \otimes dx^\sigma = \Omega^2 h_{\rho\sigma}~d\tilde{x}^\rho \otimes d\tilde{x}^\sigma.
\end{equation}

Second, since $\phi_\Omega$ is a change of coordinates, it holds that 
\begin{equation}
    g_{\rho\sigma}~dx^\rho \otimes dx^\sigma = ((\phi_\Omega^{-1})^*g)_{\rho\sigma}~d\tilde{x}^\rho \otimes d\tilde{x}^\sigma
\end{equation}

and thus 
\begin{equation}
     ((\phi_\Omega^{-1})^*g)_{\rho\sigma}~d\tilde{x}^\rho \otimes d\tilde{x}^\sigma = \Omega^2 h_{\rho\sigma}~d\tilde{x}^\rho \otimes d\tilde{x}^\sigma,
\end{equation}
that is, $h$ and $(\phi_\Omega^{-1})^*g$ are \textit{homothetic} (conformal with constant conformal factor) as
\begin{equation}
    h = \frac{1}{\Omega^2}(\phi_\Omega^{-1})^*g.
\end{equation}
We now actually take the \textit{Penrose limit} of $(M,g,\gamma)$, which is a neighbourhood of $\gamma$ in the spacetime formed by $M$ equipped with the metric 
\begin{equation}
    \lim_{\Omega \to 0} \frac{1}{\Omega^2}(\phi_\Omega^{-1})^*g = \lim_{\Omega \to 0}h.
\end{equation}

In this limit in the tilde coordinates, $h$ reduces to
\begin{equation}
    \lim_{\Omega \to 0}h=
    \begin{pNiceMatrix}
        0      & 1         &   0                     &  \dots   & 0        \\
        1       & 0         &   0                       &  \dots   & 0        \\
        0     & 0         & \Block{3-3}{(\tilde{g}_{ab})}  
                                                        &           &         \\
        \vdots  & \vdots    &                           &           &          \\
        0     & 0         &                           &           & 
    \end{pNiceMatrix},
\end{equation}
where $(\tilde{g}_{ab})$ is now a function of $\tilde{v} = v$ \textbf{only} as $\tilde{g}_{ab} = g_{ab}(0,\tilde{v}, 0,\dots,0)$. This is precisely the \textit{Rosen} coordinate representation of the plane wave metric Eq. \ref{eq:rosen} (under an appropriate relabelling/reordering of the coordinates).\\

What we have demonstrated is that in an appropriate limit around a null geodesic $\gamma$, any spacetime approaches a plane wave in a manner analogous to how a Riemannian manifold locally approaches Euclidean space in an appropriate limit. A collection of the Penrose limits of common spacetimes and a comprehensive overview of the properties of Penrose limits has already been established by \cite{blau}, such as the \textit{hereditary} properties (those properties of the limit which are inherited from the original spacetime). A covariant description of the limiting procedure is also provided, making significantly clearer the connection between the original metric $g$ and the properties of the resulting plane wave limit, which are encoded in the wave profile $H$ when written in the ``Brinkmann coordinates" as in Eq. \ref{eq:plane}.\\

We close our discussion of Penrose's limit by illustrating a family of examples.  These examples are taken from \cite[Eqn.~(3.1)]{blau2004penrose}, wherein full derivations can be found; here we write down only the resulting plane wave limit itself, restricting out attention to dimension 4.  Indeed, for both the Scharzschild metric and the Friedmann--Lemaître--Robertson--Walker (FRW) cosmological models, their Penrose plane wave limits take the following form in Brinkmann coordinates
\[
ds^2 = 2dudv + \sum_{a,b=1}^2\frac{A_{ab}x^ax^b}{u^2}du^2 + dx^2+dy^2,
\]
where each $A_{ab}$ is a constant depending on the original metric, and where $x^1=x$ and $x^2=y$.

\subsection{Causality in Parallel Waves}\label{sec:causal}
We now review some basic results in the causal properties of parallel waves, starting with the well-known ``remarkable property of plane waves" proven by Penrose \cite{remarkable} which spurred on much of this research. 

\subsubsection{A Remarkable Property of Plane Waves}\label{sec:remarkable}
Roughly, Penrose showed that a (not necessarily purely gravitational) plane wave exhibits a ``focusing property" on the null cones (see Fig. \ref{fig:remarkable}), and as a consequence, there exists no Cauchy hypersurface sufficient for the specification of Cauchy data \cite{remarkable}. This is because the past null cone of any event is focused to a single point (anastygmatism) or line (astygmatism), and since a Cauchy hypersurface has the property that it
intersects any causal curve exactly once, it is concluded that this focusing property forces many causal curves to intersect any potential Cauchy hypersurface at least twice. In the following, we maintain consistency with the notation of the original work wherever possible.\\

To begin, let us first define the relevant objects. As in Sec. \ref{sec:plane} (with a small relabelling), a plane wave is defined as a 4-dimensional standard pp-wave in adapted coordinates $\{u,v,x^1,x^2\}$ for which the characteristic function $H(u,x^1,x^2)$ is \textit{quadratic} in $(x^1,x^2)$, that is the spacetime $(M =\reals^4,g)$ where
\begin{align*}
    &g = 2\mathrm{~d} u \mathrm{~d} v + H(u,x^1,x^2)~\mathrm{d}u^2 + (\mathrm{d} x^1)^{2}+(\mathrm{d} x^2)^{2}\\
    &H(u,x^1,x^2) = \sum_{i, j=1}^{2} h_{i j}(u) x^{i} x^{j}
\end{align*}
for some symmetric matrix formed by the $h_{ij}$. We also define the \textit{null cone}:

\begin{Definition} \textup{Null Cone}\\
    The null cone (denoted $\kappa_3$) at a point $Q \in M$ is defined as the set of points lying on all null geodesics through $Q$.
\end{Definition}

In this section, Penrose utilises the so-called ``sandwich waves", defined by the characteristic that the amplitudes $h_{ij}(u) = 0$ unless $u \in (a,b)\subset \reals$. One can visualise such a plane wave as in Fig. \ref{fig:remarkable}, in which it becomes clear that a sandwich wave is a plane wave for which the infinite extent in the $u$ direction is removed. \\

\begin{figure}
    \centering
    \makebox[0.5\textwidth][c]{%
    \begin{subfigure}{.5\textwidth}
      \centering
      \includegraphics[width=0.9\linewidth]{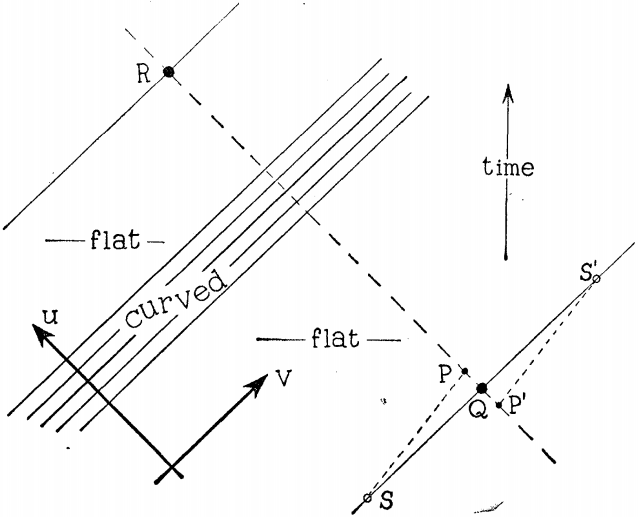}
    \end{subfigure}%
    \begin{subfigure}{.5\textwidth}
      \centering
      \includegraphics[width=1\linewidth]{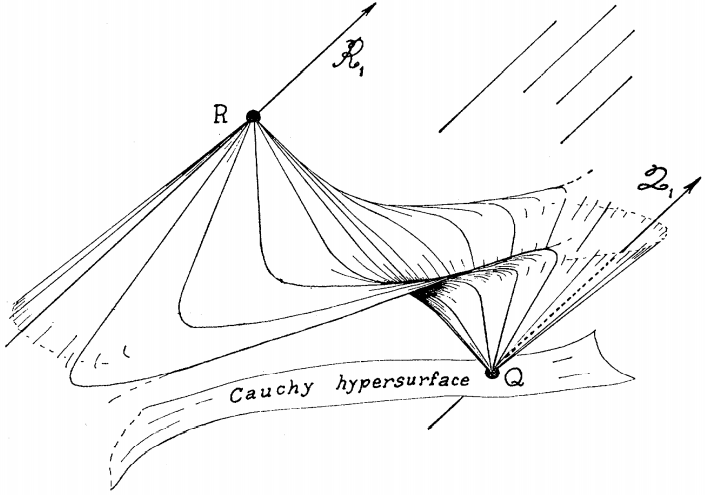}
    \end{subfigure}
    }%
    \caption{(Left) The wave profile of a sandwich plane wave, in which the $u$ coordinate range of the ``curved" region is $(a,b)$. (Right) The focusing effect of such a wave on the past null cone of a point R in the electromagnetic case. Figures from \cite[Fig. 1 \& 2]{remarkable}.}
    \label{fig:remarkable}
\end{figure}

We now outline the primary result of \cite{remarkable}, where some details are omitted and only the main steps of the proof are reproduced.
\begin{theorem}
    The past null cone of any point $Q$ in a plane wave $(M,g)$ with compactly supported profile (a ``sandwich wave") is focused to a single point for an electromagnetic sandwich wave, or to a line for a gravitational sandwich wave.
\end{theorem}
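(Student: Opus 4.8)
The plan is to parametrise the past-directed null geodesics through $Q$ by their initial transverse velocity, to recover each one's $v$-coordinate from the null constraint, and to reduce the focusing to the behaviour of a single matrix-valued Jacobi field; the point-versus-line dichotomy will then come from the profile matrix $(h_{ij})$ of an electromagnetic wave being a multiple of the identity, whereas that of a gravitational wave is trace-free. Along any null generator $\gamma$ with $\dot u\neq 0$ one may, after affine reparametrisation, use $u$ as the affine parameter; the geodesic equations of Section~\ref{sec:plane} then give (since $\ddot u=0$) the \emph{linear} system $\ddot x^i=\tfrac12\,\partial_{x^i}H$ for the transverse coordinates, while $v$ is fixed by the first integral $g(\dot\gamma,\dot\gamma)=0$, i.e. $\dot v=-\tfrac12(H+|\dot{\mathbf x}|^2)$. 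As $H$ is homogeneous of degree two in $\mathbf x$, Euler's identity gives $\mathbf x\cdot\ddot{\mathbf x}=\tfrac12\,\mathbf x\cdot\partial_{\mathbf x}H=H$, so $\tfrac{d}{du}(\mathbf x\cdot\dot{\mathbf x})=|\dot{\mathbf x}|^2+H$ and hence
\[
v(u)-v_Q=-\tfrac12\big(\mathbf x(u)\cdot\dot{\mathbf x}(u)-\mathbf x_Q\cdot\boldsymbol\eta\big),\qquad \boldsymbol\eta:=\dot{\mathbf x}(u_Q).
\]
(The unique generator with $\dot u=0$ is the integral curve of $-\partial_v$ through $Q$; it remains at $u=u_Q$ and is irrelevant here.)

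Write the solution of the transverse system as $\mathbf x(u)=A(u)\mathbf x_Q+E(u)\boldsymbol\eta$, where $A,E$ are the $2\times2$ matrix solutions with $A(u_Q)=I,\ \dot A(u_Q)=0$ and $E(u_Q)=0,\ \dot E(u_Q)=I$. The transverse cross-section of the past cone thus pinches exactly at the zeros of $\det E$, and that such a $u_*$ occurs beyond the support of the wave follows from the standard null focusing (Raychaudhuri) argument: the sandwich's curvature endows the congruence with convergence, which under the null energy condition in the electromagnetic case, or through the nonzero Weyl shear in the vacuum gravitational case, forces a conjugate point within finite affine length --- alternatively one checks it by hand, since $E$ is affine in $u$ in the flat region. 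The algebraic key is the constant ``Wronskian'' $A(u)^{T}\dot E(u)-\dot A(u)^{T}E(u)\equiv I$ (differentiate and use the symmetry of $(h_{ij})$; evaluate at $u_Q$), which at a focal parameter with $E(u_*)=0$ collapses to $A(u_*)^{T}\dot E(u_*)=I$.

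For a purely electromagnetic wave $(h_{ij})=f(u)\delta_{ij}$, so the transverse system is the single scalar Hill equation $\ddot e=f(u)e$ on each component; hence $A=a(u)I$, $E=e(u)I$, and $\det E(u_*)=0$ forces $E(u_*)=0$ (full rank drop). Then every past generator reaches the transverse point $a(u_*)\mathbf x_Q$ at $u_*$, and in the formula for $v$ above the $\boldsymbol\eta$-quadratic term $-\tfrac12\,\boldsymbol\eta^{T}E(u_*)^{T}\dot E(u_*)\boldsymbol\eta$ vanishes since $E(u_*)=0$, while the $\boldsymbol\eta$-linear term $-\tfrac12\,\mathbf x_Q^{T}\big(A(u_*)^{T}\dot E(u_*)-I\big)\boldsymbol\eta$ vanishes by the Wronskian relation; so $v(u_*)$ is independent of $\boldsymbol\eta$ too, and all past generators through $Q$ meet at the single point $Q'=\big(u_*,\ v_Q-\tfrac12 a(u_*)\dot a(u_*)|\mathbf x_Q|^2,\ a(u_*)\mathbf x_Q\big)$ --- anastigmatic focusing. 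For a gravitational wave $(h_{ij})$ is trace-free, so after rotating the wavefront coordinates $H=f_+(u)\big((x^1)^2-(x^2)^2\big)$ and the two transverse components obey the \emph{distinct} Hill equations $\ddot x^1=f_+ x^1$, $\ddot x^2=-f_+ x^2$; thus $A,E$ are diagonal and their two Jacobi scalars, solving different equations, do not share a zero for a generic profile. At the first zero $u_*$ of $\det E$ exactly one diagonal entry of $E$ vanishes, so the cone pinches in one transverse direction only: the surviving transverse parameter, together with the value of $v$ it forces through the above formula, sweeps out a one-dimensional locus --- the astigmatic focal line.

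The main obstacle I anticipate is the $v$-bookkeeping of the electromagnetic case --- confirming that the apparent $\boldsymbol\eta$-dependence of the reconvergence point cancels exactly; this uses the Euler identity for $v$, the vanishing of $E(u_*)$, and the Wronskian relation in combination, and demands care with the distinct roles of $A$ and $E$ and with signs. A secondary point is justifying that a focal parameter $u_*$ really lies to the past of --- and outside --- the wave's support: for a sandwich wave this comes from the focusing theorem once the congruence has acquired convergence, or from a direct analysis of the affine evolution of $E$ beyond the support, and one should keep in mind that only the generators that actually traverse the support refocus.
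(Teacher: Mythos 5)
Your proof is essentially correct in mechanism but follows a genuinely different route from the paper. The paper reproduces Penrose's original argument: it describes the past cone as a graph $v = f_{ij}(u)x^ix^j + v_0$, derives the matrix Riccati equation $\tfrac{d}{du}f_{ij} + f_{ik}f_{kj} + h_{ij} = 0$ for the cone's ``second fundamental form'', forces a finite-$u$ blow-up of $f_{ij}$ via the trace of that equation together with the Schwarz inequality, and then reads off the point-versus-line dichotomy from the eigenvalue structure of the constant symmetric matrix $q_{ij}$ appearing in the inverse $p_{ij}(u) = u\delta_{ij} - q_{ij}$ in the flat region beyond the wave. You instead work at the level of individual generators, packaging them into the linear propagator $\mathbf{x}(u) = A(u)\mathbf{x}_Q + E(u)\boldsymbol\eta$ and using the constant Wronskian $A^T\dot E - \dot A^T E \equiv I$ to kill the $\boldsymbol\eta$-dependence of $v$ at the focal parameter. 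The two pictures are dual ($f$ is essentially $\dot E E^{-1}$), but your Jacobi-propagator formulation makes the anastigmatic bookkeeping in the electromagnetic case cleaner and more explicit than the paper's, which simply asserts that the cone degenerates to a point at the degenerate eigenvalue. Your $v$-computation at $u_*$ is correct as written.

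Two soft spots remain. First, the existence of the focal parameter $u_*$ is the one step the paper actually carries out in detail (the trace of the Riccati equation plus the Schwarz inequality yields a concavity statement that forces blow-up at finite $u$ once the cone enters the wave as a null hyperplane), whereas you only gesture at Raychaudhuri/shear focusing; for the vacuum case the Ricci term vanishes and the argument genuinely needs the shear (Weyl) contribution, so this step should be spelled out rather than cited. Second, your gravitational case assumes the profile can be diagonalised by a single constant rotation ($H = f_+(u)((x^1)^2 - (x^2)^2)$), which holds only for fixed linear polarisation, and your conclusion that the two focal scalars do not share a zero is qualified by ``for a generic profile'' --- the theorem claims a line focus for every gravitational sandwich wave. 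The paper's route avoids both issues by classifying via the eigenvalues of the constant matrix $q_{ij}$ in the flat region, which exists regardless of the polarisation history; to close your version you would need to show that $E(u_*)$ cannot drop to rank zero when $h_{ij}$ is trace-free, i.e.\ that a vacuum sandwich wave is never anastigmatic, which is exactly the part of Penrose's argument the paper itself omits ``since the arguments used are very similar.''
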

\begin{proof}
    To begin, choose a point $Q$ in the flat region of $M$, such that the components of $Q$ are
    \[
        u = u_0 < a, \quad v = v_0, \quad x^i = 0,
    \]
    where $a$ is the lower bound of the interval on which $u$ is nonzero for the sandwich wave. Close to $Q$, the equation of the null cone $\kappa_3$ is $(u - u_0)(v - v_0) - x^ix^i = 0$ which can be written
    \begin{equation}\label{eq:fInit}
        v = f_{ij}(u) x^ix^j + v_0,
    \end{equation}
    where $f_{ij}(u) = (u - u_0)^{-1}\delta_{ij}$ near $Q$. We now wish to obtain a description of $\kappa_3$ valid away from $Q$, that is to find an appropriate $f_{ij}(u)$. If the surface is to remain \textit{null} even in the curved regions of $M$, then one can show that $f_{ij}$ should be both symmetric and satisfy\footnote{The original paper lists the condition as $\frac{d}{du}f_{ij} + f_{ik}f_{kl} + h_{ij} = 0$, the $l$ index likely being erroneous.}
    \begin{equation}\label{eq:diffF}
        \frac{d}{du}f_{ij} + f_{ik}f_{kj} + h_{ij} = 0.
    \end{equation}
    With ``initial condition" Eq. \ref{eq:fInit} one obtains an $f_{ij}$ which describes the null cone $\kappa_3$ even in the curved region of $(M,g)$. This extension is only valid while $f_{ij}$ is finite, and so we now examine if and when $f_{ij} \rightarrow \infty$. To do so, consider the trace of the above differential equation, noting that $h_{ij}$ is trace-free for a vacuum solution and in general $h_{ii} > 0$.
    \[
        \frac{d}{du}f_{ii} + \frac{1}{2}f_{ii}f_{jj} = -\frac{1}{2}\left(f_{i k} f_{i k} \delta_{j l} \delta_{j l}-f_{i k} \delta_{i k} f_{j l} \delta_{j l}\right)-h_{i i} \leq 0
    \]
    via Schwarz' inequality. Defining $\rho(u) \vcentcolon=\frac{1}{2} \int_{u_0}^u f_{ii}(\bar{u})d\bar{u}$, one finds the integro-differential inequality on the trace of $f$
    \begin{equation}\label{eq:integro}
        \frac{d^2}{du^2} \rho(u) \leq 0,
    \end{equation}
    where the inequality is sharp for at least some values of $u$. Since our choice of $u_0$ in $Q$ was arbitrary, consider the limit $u_0\longrightarrow -\infty$. Then from the definition of $f_{ij}$ near $Q$, we see that $f_{ij} = 0 ~\forall~ u < a$. Then via Eq. \ref{eq:fInit}, we see that $\kappa_3$ is described by the equation $v = v_0$, that is the null cone is a null hyperplane in the flat region. When $f_{ij} = 0$ then in particular $\rho' = 0$ in the flat region (prime meaning $u$-derivative), and therefore by Eq. \ref{eq:integro} we have that a $\rho$ which is positive in the flat region near $Q$ will become 0 for finite $u$. If $\rho =0 $ then some component of $f_{ij}$ must become singular\footnote{This is related to the fact that there are often coordinate singularities when writing a pp-wave metric in Rosen coordinates, which originally lead to the belief that there did not exist non-singular plane wave solutions of the full Einstein equations. See \cite[footnotes 11,12]{remarkable} for details.}. Denote the $u$ at which $f_{ij}$ exhibits singularity by $u_1 > a$ (since for $u_1 \leq a$ we have $f_{ij} \equiv 0$).\\
    
    If this singularity occurs outside the curved region, i.e. $u_1 > b$ then the null cone $\kappa_3$ encounters a singularity on the ``past" side of the sandwich wave. In fact, one needs to consider large and negative $u_0$ as opposed to the $-\infty$ limit, but this does not affect the relevant equations here.\\
    
    Now consider the flat region containing this singularity. In this region Eq. \ref{eq:diffF} may be written as $p_{ij}' = \delta_{ij}$ where $p_{ij}$ is the inverse\footnote{In the original reference this is written as $p_{ij}f_{jk} = \delta_{ij}$, again likely to be erroneous.} matrix to $f_{ij}$, i.e. $p_{ij}f_{jk} = \delta_{ik}$. The solution of this differential equation for $p_{ij}$ is
    \[
        p_{ij}(u) = u\delta_{ij} - q_{ij}
    \]
    for constant and symmetric $q_{ij}$ (since $f$ is symmetric). Therefore $f_{ij}$ has a singularity whenever $u$ is an eigenvalue for $q_{ij}$. Either these eigenvalues are distinct or they are degenerate, in which case $q_{ij} = u_1\delta_{ij}$. In this degenerate case, $p_{ij}$ has the form $(u - u_1)\delta_{ij}$, and $\kappa_3$ has two vertices, namely $P$ and the point $R \vcentcolon = (u_1,v_0,\mathbf{0})$. This is because the equation of $\kappa_3$ reduces to a single point at both $P$ and $R$, as in fig.\ref{fig:remarkable}. In fact, that $\kappa_3$ is focused to a single \textit{point} (anastygmatic) is specific to the purely electromagnetic case in which $h_{ij}$ is purely diagonal. For the gravitational case, one finds that $\kappa_3$ is focused onto a line. Since the arguments used are very similar, we omit this proof here. See \cite{remarkable} for details.
\end{proof}
To explain why this result shows that plane waves are not globally hyperbolic, consider a candidate for a Cauchy hypersurface. Such a hypersurface would have to
intersect the $v$-line through $R$. But then some of the other past-oriented lightlike
geodesics from $R$ to $Q$ have to be intersected twice. Looking to Fig. \ref{fig:remarkable}, a connected spacelike hypersurface such as the proposed Cauchy hypersurface containing $Q$ must initially lie entirely in the past of (drawn as ``below" on the diagram) the future null cone of $Q$. A Cauchy hypersurface can never meet the null line $\mathscr{R}_1$, as if it were to do so then it would intersect the null geodesics through $Q$ twice (since they are all focused onto $\mathscr{R}_1$). As a result, the proposed Cauchy hypersurface must ``bend downwards" to avoid $\mathscr{R}_1$, and can never extend through it while remaining everywhere spacelike, and as in \cite{remarkable}: ``Cauchy data on such a hypersurface could thus give no information for specifying amplitudes for a parallel wave\footnote{Curiously, this appears to be one of the first uses of the name ``parallel wave". Note however that this phrasing is not consistent with the definitions of this article, and the object in question is more accurately referred to as a ``plane wave".} which might lie beyond $\mathscr{R}_1$".

\subsubsection{Generic Position on the Causal Ladder}
After Penrose showed that the plane waves are not globally hyperbolic, interest was spurred in discovering the exact position of both the plane waves and pp-waves on the causal ladder. This question has been categorically answered for the plane waves by \cite{causalLadder4Waves}, and then for the $(N,h)$p-waves by \cite{causalLadderPFWs}. Note that the causality properties of the more general class of parallel waves does not appear to have been studied. Let us first recall the causal ladder for Lorentzian manifolds: 
\begin{center}
    Globally hyperbolic ($\exists$ a Cauchy surface)\\
    $\Downarrow$\\
    Causally simple (pasts and futures are closed + causality)\\
    $\Downarrow$\\
    Causally continuous (``continuity" of pasts and futures + distinguishing)\\
    $\Downarrow$\\
    Stably causal ($\exists$ a global time function)\\
    $\Downarrow$\\
    Strongly causal ($\nexists$ closed or ``almost closed" causal curves)\\
    $\Downarrow$\\
    Distinguishing ($\nexists$ points with same pasts and futures)\\
    $\Downarrow$\\
    Causal ($\nexists$ closed causal curves)\\
    $\Downarrow$\\
    Chronological ($\nexists$ closed timelike curves)\\
    $\Downarrow$\\
    Non–totally vicious ($\exists$ points $p\in M$ with $p\not\ll p$)
\end{center}
from \cite[Sec. 3]{causalLadderSanchez} and \cite{floresTalk}. Note that ``stably causal" was first understood as the causality being a stable property under perturbations, but Hawking showed \cite{Hawking1969cosmic} that this is equivalent to the existence of a global time function. Also note that $x \ll y$ means that $x$ \textit{chronologically precedes} $y$, that is there exists a future-directed chronological (timelike) curve from $x$ to $y$.\\

To make explicit our conventions, and to align with the conventions of \cite{causalLadderSanchez} we choose the signature of our spacetimes $(M,g)$ to be $(-,+,\dots,+)$, i.e., a non-zero vector field $X \in TM$ is 
\begin{itemize}\itemsep0em
    \item timelike $\iff$ $g(X,X) < 0$,
    \item lightlike $\iff$ $g(X,X) = 0$,
    \item spacelike $\iff$ $g(X,X) > 0$,
\end{itemize}
and we take the zero vector to be spacelike. We also use ``causal" to mean lightlike or timelike when referring to a vector field. Also to remain consistent with \cite{causalLadderPFWs}, when dealing with parallel waves we will fix our time-orientation such that $\partial_v$ is past-directed. We now examine the causal classification of the parallel waves, starting with the relatively simple result:
\begin{proposition}
    All $(N,h)$p-waves are chronological.
\end{proposition}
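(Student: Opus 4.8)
The plan is to use the coordinate function $u$ as a ``null time function'' and to show that it is strictly increasing along future-directed timelike curves, which immediately forbids closed timelike curves. By the definition of an $(N,h)$p-wave the manifold splits globally as $M = \reals^2 \times N$ with $u$ and $v$ globally defined, so $u\colon M \to \reals$ is a globally defined smooth function. By Theorem \ref{thm:adapted} and Proposition \ref{prop:metricProperties}, the coordinates $\{u,v,\mathbf{x}\}$ are adapted to the covariantly constant null field $Z$, and in them $\nabla u = \partial_v = Z$; in particular $g(\nabla u, \nabla u) = g(Z,Z) = 0$ everywhere, and the only metric datum entering $g(\partial_v, \cdot)$ is the $v$-row $(1,0,\dots,0)$, so neither $H$, nor the $A_a$, nor $h$ play any role.

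Next I would invoke the time-orientation convention fixed earlier, under which $\partial_v$ is past-directed. Let $X \in T_pM$ be future-directed timelike. Then $\partial_v|_p$ is a nonzero past-directed causal vector, so $-\partial_v|_p$ is future-directed causal, and in signature $(-,+,\dots,+)$ the inner product of a future-directed timelike vector with a nonzero future-directed causal vector is strictly negative; hence $g(X,-\partial_v) < 0$, i.e.\ $X(u) = g(X,\nabla u) = g(X,\partial_v) > 0$.

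Now let $\gamma$ be any closed timelike curve. Its tangent is timelike everywhere, and since the future and past timelike cones are disjoint, $\gamma$ is (after reversing the parametrization if necessary) future-directed, so $t \mapsto u(\gamma(t))$ is strictly increasing by the previous step. But $\gamma$ being closed forces $u(\gamma)$ to return to its initial value, a contradiction. Hence $(M,g)$ contains no closed timelike curve, i.e.\ it is chronological.

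I do not expect a genuine obstacle here: the only points requiring care are that the product structure $M = \reals^2 \times N$ truly promotes $u$ to a \emph{global} function (so the argument is not merely local), and keeping the sign conventions straight --- in particular that the stipulated time orientation makes $\partial_v$ past-directed, which is exactly what makes $du$ positive on the future timelike cone. An equivalent phrasing avoids mentioning $u$ as a function and instead notes that $du$ is a globally defined closed one-form that is positive on every future-directed timelike vector; with a little extra care (using that the integral curves of $\partial_v$ are the $v$-lines, hence diffeomorphic to $\reals$ and non-closed) this even upgrades the conclusion to causality, but for the stated proposition the function version suffices.
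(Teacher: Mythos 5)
Your proof is correct and follows essentially the same route as the paper's: both use that $\nabla u = \partial_v = Z$ in the adapted coordinates, that the time orientation making $\partial_v$ past-directed forces $\dot u = g(\dot\gamma,\partial_v) > 0$ along future-directed timelike curves, and that this strict monotonicity of the globally defined $u$ rules out closed timelike curves. Your extra care about the global existence of $u$ and the sign conventions is sound but does not change the argument.
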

\begin{proof}
    For a parallel wave defined by a covariantly constant, null vector field $Z$, in the adapted coordinates of Theorem \ref{thm:adapted}, we have $Z = \nabla u = \partial_v$. For any future-directed causal curve $\gamma(s) = (u(s),v(s),\mathbf{x}(s))$ it holds that
    \[
        \dot{u}(s) = g(\dot{\gamma}(s),\partial_v) \geq 0
    \]
    where the inequality is sharp for $\gamma(s)$ timelike. Such an inequality prevents the existence of closed timelike curves, and thus the spacetime is \textit{chronological}.
\end{proof}
Being one of the ``lower rungs" of the causal ladder, being chronological is not a relatively strong restriction. We can however show that a generic $(N,h)$p-wave lies one step higher on the ladder:

\begin{theorem}\label{thm:causal}
    All $(N,h)$p-waves are causal. 
\end{theorem}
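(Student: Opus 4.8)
The plan is to upgrade the previous proposition (chronology) by ruling out closed \emph{causal} curves, not merely closed timelike ones, using the same first-order control on the coordinate $u$ and then analysing the equality case carefully. First I would record that, exactly as in the proof that $(N,h)$p-waves are chronological, along any future-directed causal curve $\gamma(s) = (u(s), v(s), \mathbf{x}(s))$ one has $\dot u(s) = g(\dot\gamma(s), \partial_v) \geq 0$, since $\partial_v = Z = \nabla u$ is null and (by our time-orientation convention) past-directed. Now suppose, for contradiction, that $\gamma\colon [0,L] \to M$ is a closed causal curve. Then $u$ must return to its initial value, so $\int_0^L \dot u\, ds = 0$; combined with $\dot u \geq 0$ this forces $\dot u \equiv 0$ along $\gamma$.

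Next I would show that $\dot u \equiv 0$ pins the curve down completely. The condition $\dot u = g(\dot\gamma, Z) = 0$ says $\dot\gamma(s) \in Z_\perp$ for all $s$. By \cite[Lemma~28, p.~142]{oneill} (equivalently, because the transverse metric in Eq.~\ref{eq:NPp} is Riemannian, so the induced quadratic form on $Z_\perp$ is positive-semidefinite with kernel exactly $\mathrm{span}(Z)$), the only causal vectors lying in $Z_\perp$ are the multiples of $Z$. Since a causal curve has nowhere-vanishing velocity, this gives $\dot\gamma(s) = \lambda(s)\,\partial_v$ with $\lambda$ nowhere zero; so $\gamma$ is, up to reparametrisation, an integral curve of $\partial_v$.

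Finally I would invoke the global product structure. For an $(N,h)$p-wave the manifold is $M = \reals^2 \times N$ with $Z = \partial_v$, so the integral curves of $Z$ are precisely the lines obtained by fixing $u$ and a point of $N$ and letting $v$ range over all of $\reals$; these are properly embedded copies of $\reals$, hence never closed. (Equivalently: as $\gamma$ is future-directed and $\partial_v$ past-directed one has $\lambda < 0$, so $v(s)$ is strictly decreasing and cannot be periodic.) This contradicts the closedness of $\gamma$, so no closed causal curve exists and the spacetime is causal.

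I expect the genuinely load-bearing step — and the reason the hypothesis cannot be dropped — to be the last one: it is essential that the $v$-direction is a true $\reals$-factor, since on a Lorentzian manifold whose $Z$-orbits are allowed to close (certain compact pp-wave quotients being the standard cautionary example) one does obtain closed null geodesics tangent to $Z$, and then the conclusion fails. The rest of the argument is the elementary ``$\int \dot u = 0 \Rightarrow \dot u \equiv 0$'' observation together with the standard fact that $Z_\perp$ carries no causal directions other than $Z$ itself. If one wants the statement for merely piecewise-smooth or locally Lipschitz causal curves, a small amount of extra care is needed to justify ``$\dot u \equiv 0$ almost everywhere'' and that an almost-everywhere-tangent-to-$Z$ curve is (a reparametrisation of) an integral curve of $Z$, but this is routine.
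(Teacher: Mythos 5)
Your proof is correct, but it takes a more direct route than the paper. The paper proceeds through the quasi-time function formalism of \cite{causalLadder4Waves}: it first establishes Proposition \ref{prop:quasi} (any spacetime admitting a quasi-time function is causal), whose proof relies on the nontrivial causality-theory fact that a chronological but non-causal spacetime must contain a closed null \emph{geodesic} segment, and then verifies that the coordinate function $u$ satisfies Definition \ref{def:quasi} because the only null geodesics confined to a level set $u^{-1}(u_0)$ are the injective $v$-lines. You bypass the reduction to closed null geodesics entirely: from $\dot u \geq 0$ and closedness you get $\dot u \equiv 0$, then use \cite[Lemma~28, p.~142]{oneill} to conclude that any causal velocity in $Z_\perp$ is a multiple of $Z$, so the putative closed causal curve is forced to be a (reparametrised) integral curve of $\partial_v$, along which $v$ is strictly monotone. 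Both arguments ultimately rest on the same two geometric facts --- monotonicity of $u$ along future causal curves, and the degeneracy of $Z_\perp$ --- but yours is more elementary and self-contained, handling arbitrary closed causal curves without quoting the closed-null-geodesic scholium, while the paper's packaging of $u$ as a quasi-time function buys reusability: the same lemma feeds directly into the stronger causality results of \cite{causalLadder4Waves,causalLadderPFWs} discussed later in the section. Your closing remarks are also apt: the global $\reals$-factor in $M = \reals^2 \times N$ is exactly what rules out closed $Z$-orbits (and hence closed null curves tangent to $Z$), and the extension to merely locally Lipschitz causal curves is routine as you say.
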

We will prove this theorem below using Proposition \ref{prop:quasi}. The proof of this result follows from \cite[Scholium 4.11]{causalLadder4Waves}, which we will reproduce here. To do so, we first introduce the concept of a \textit{quasi-time function}.
\begin{Definition}\label{def:quasi} \textup{Quasi-time function.}\\
    On a Lorentzian manifold $(M,g)$ a smooth function $f: M \mapsto \reals$ is called a \textit{quasi-time function} for $(M,g)$ if
    \begin{enumerate}[(i)]
        \item $\nabla f$ is everywhere nonzero, causal and past-directed, and if
        \item every null geodesic segment $\gamma$ such that $f \circ \gamma$ is constant,  is injective.
    \end{enumerate}
\end{Definition}
Now we may reproduce the afformentioned \cite[Scholium 4.11]{causalLadder4Waves} for completeness, which is stated as:
\begin{proposition}\label{prop:quasi}
    Any spacetime admitting a quasi-time function is causal.
\end{proposition}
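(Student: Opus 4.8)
The plan is to argue by contradiction: suppose $(M,g)$ admits a quasi-time function $f$ but is \emph{not} causal, so that there exists a closed causal curve $\gamma$. Since $f$ has everywhere nonzero, causal, past-directed gradient, I would first observe that along any future-directed causal curve $\sigma$ the function $f\circ\sigma$ is nondecreasing, because $\tfrac{d}{ds}(f\circ\sigma) = g(\nabla f,\dot\sigma) \ge 0$ (the inner product of two causal vectors with compatible time orientation, one of them past-directed and the curve future-directed, so the sign works out nonnegative — exactly the kind of computation used in the proof that $(N,h)$p-waves are chronological). Applying this to the closed curve $\gamma$, which returns to its starting point, forces $f\circ\gamma$ to be \emph{constant}. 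Moreover, the same inequality shows that at each parameter value the product $g(\nabla f,\dot\gamma)$ must vanish identically, and since $\nabla f$ is causal and nonzero this is only possible if $\dot\gamma$ is lightlike and, in fact, proportional to $\nabla f$ at every point (two causal vectors are orthogonal only when both are null and parallel). Hence $\gamma$ is, up to reparametrization, a null \emph{geodesic} — or at least can be taken to be one after reparametrizing, using that an orthogonality-to-a-fixed-causal-field condition pins down the direction.

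The next step is to upgrade ``closed causal curve'' to ``closed null geodesic.'' Here I would be slightly careful: the curve $\gamma$ we started with need only be a causal curve, but the argument above shows its velocity is everywhere null and everywhere parallel to $\nabla f$; reparametrizing by the flow of $\nabla f$ (or arguing that a $C^1$ causal curve whose tangent is null and lies in a prescribed line field must be a pregeodesic) produces a closed null geodesic segment $\tilde\gamma$ on which $f\circ\tilde\gamma$ is constant. Now condition (ii) in the definition of a quasi-time function applies directly: every null geodesic segment along which $f$ is constant is injective. But a closed geodesic is the archetypal non-injective geodesic segment — it revisits its starting point — giving the contradiction. Therefore no closed causal curve can exist, and $(M,g)$ is causal.

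The main obstacle, and the place I would spend the most care, is the passage from ``closed causal curve'' to ``non-injective null geodesic segment to which hypothesis (ii) applies.'' Two subtleties lurk there. First, causal curves in the causal-ladder literature are typically only assumed piecewise smooth (or even just continuous and causal), so I must ensure the tangent-is-null-and-parallel-to-$\nabla f$ conclusion still forces the geodesic property; the clean way is to note that $f\circ\gamma$ constant plus $\nabla f$ past-directed causal forces $\dot\gamma \in \mathbb{R}\,\nabla f$ wherever $\dot\gamma$ is defined, and then reparametrize by an integral curve of the (nowhere-zero) field $\nabla f$, which is automatically a pregeodesic — this is essentially the standard fact that the gradient of a null function, when the function is also ``null'' in the sense $g(\nabla f,\nabla f)=0$, has geodesic integral curves; but here we only know $\nabla f$ is causal, not null, so instead I reparametrize $\gamma$ itself and invoke uniqueness of the line field. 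Second, one must make sure the resulting closed null geodesic is a legitimate ``geodesic segment'' in the sense of hypothesis (ii), i.e. has an interval as domain; taking one full period (or slightly more) of the closed geodesic gives a segment $[0,L]\to M$ with $\tilde\gamma(0)=\tilde\gamma(L)$, which is manifestly not injective, and $f$ is constant on it, so (ii) is contradicted. Everything else — the sign bookkeeping for time orientation, the orthogonality-of-causal-vectors lemma — is routine and can be cited from the conventions fixed just before the statement (with $\partial_v$, hence here $\nabla f$, past-directed).
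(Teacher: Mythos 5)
Your overall strategy is sound, and the first half of your argument is correct: monotonicity of $f$ along future-directed causal curves gives $f\circ\gamma$ constant on a closed causal curve $\gamma$, hence $g(\nabla f,\dot\gamma)\equiv 0$, and the orthogonality lemma for causal vectors then forces $\dot\gamma$ and $\nabla f$ to be null and mutually proportional along $\gamma$. The genuine gap is exactly the step you flag: passing from ``$\dot\gamma$ is everywhere proportional to the null vector $\nabla f$'' to ``$\gamma$ is a (pre)geodesic.'' Invoking ``uniqueness of the line field'' does not accomplish this --- a curve tangent to a null line field is not in general a pregeodesic (integral curves of a generic null vector field have acceleration transverse to themselves), and the standard fact you allude to requires $g(\nabla f,\nabla f)=0$ on a neighbourhood, which you do not have. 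The gap is fillable, but only with an observation you did not make: since $\nabla f$ is causal everywhere, $g(\nabla f,\nabla f)\le 0$ on all of $M$ while it equals $0$ at every point of $\gamma$, so every point of $\gamma$ is a maximum of the smooth function $g(\nabla f,\nabla f)$; hence $d\bigl(g(\nabla f,\nabla f)\bigr)=0$ along $\gamma$ and therefore $\nabla_{\nabla f}\nabla f=\tfrac{1}{2}\,\mathrm{grad}\bigl(g(\nabla f,\nabla f)\bigr)=0$ at every point of $\gamma$. Combined with $\dot\gamma\in\mathbb{R}\,\nabla f$ this gives $\nabla_{\dot\gamma}\dot\gamma\parallel\dot\gamma$, i.e.\ $\gamma$ is a pregeodesic, and after affine reparametrisation one full period is a non-injective null geodesic segment on which $f$ is constant, contradicting condition (ii). (The usual regularity caveat also applies: causal curves are only piecewise smooth, so this must be run on each smooth piece, with the matching of the null directions at the corners handled separately.)

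For comparison, the paper's proof sidesteps all of this by citing a standard causality-theory result (Scholium 4.10 of the reference on the causal ladder of plane waves): a chronological but non-causal spacetime contains a smooth, closed, future-directed null geodesic segment with matching endpoint velocities --- the underlying point being that a closed causal curve which is not a null pregeodesic can be deformed into a closed \emph{timelike} curve, contradicting chronology, which already follows from condition (i) alone. The quasi-time function is then used only twice: once to establish chronology, and once to conclude that $f$ is constant along the periodically extended geodesic, contradicting (ii). Your route instead tries to extract the geodesic property from the quasi-time function itself rather than from chronology; once the maximum-principle observation above is supplied this becomes a legitimate, self-contained alternative, but as written the key step is unjustified.
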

\begin{proof}
    Assume $f$ is a quasi-time function as in Definition \ref{def:quasi}, then due to \textit{(i)} we have that $f$ is strictly increasing along all future-directed timelike curves in $M$, and hence $(M,g)$ is chronological. We now prove causality by contradiction.\\
    
    Assume $(M,g)$ is not causal, then $M$ would
    contain \cite[Scholium 4.10]{causalLadder4Waves} a non-trivial, smooth, future-directed null geodesic segment $\tilde{\gamma}:[0,1] \rightarrow M$ with $\tilde{\gamma}(0)=\tilde{\gamma}(1)$ and $\tilde{\gamma}^{\prime}(0)=\tilde{\gamma}^{\prime}(1)$.\\
    
    Furthermore $\tilde{\gamma}$ may be extended to an
    inextendible geodesic $\gamma: \reals \rightarrow M$ by letting $\gamma(s)=\tilde{\gamma}(s \bmod 1) .$ Again because of $(i)$ and by continuity of all the relevant properties, $f$ is non-decreasing along $\gamma$; hence $f \circ \gamma(s)=\lambda_{0}$ for all $s \in \reals$, constant $\lambda_0 \in \reals$, which would contradict $(ii)$, since $\gamma(0)=\gamma(1)$. Thus, $(M,g)$ must be causal.
\end{proof}

We now return to the proof of Theorem \ref{thm:causal}, armed with the knowledge of the above proposition. 
\begin{proof}\textbf{Proof of Theorem \ref{thm:causal}}\\
All that we require is that any $(N,h)$p-wave admits a quasi-time function. This is proven in \cite[Lemma 4.1]{causalLadder4Waves} and again is reproduced here. The claim is as follows:\\

\textbf{Claim:}
    When an $(N,h)$p-wave is written in the adapted coordinates of Theorem \ref{thm:adapted}, the coordinate function $u$ is a quasi-time function as in Definition \ref{def:quasi}.\\

To prove this, note that by definition we have a covariantly constant, null vector field $Z$ such that $Z = \nabla u = \partial_v$. Thus $\nabla u$ is causal by definition. Since $Z = \nabla u$ is nontrivial and covariantly constant, we have that $\nabla u$ is everywhere nonzero. Furthermore $\nabla u$ is past-directed since $\nabla u = \partial_v$ and the time-orientation on $(M,g)$ can be determined by the condition that $\partial_v$ be past-directed. Therefore point $(i)$ in the definition of a quasi-time function is satisfied.\\

Next, note that since the restriction of $g$ (Eq. \ref{eq:NPp}) to the null hypersurface $\Pi_{u_0} \vcentcolon = u^{-1}(u_0)$ for some $u_0 \in \reals$ is independent of the characteristic function $H$ and the wavefront is spacelike, the null geodesic segments will be of the form
\[
    \gamma: v\in \reals \mapsto (u_0,v,\mathbf{x}_0) \in \Pi_{u_0}.
\]
Such a map is injective, and thus point $(ii)$ in the definition of a quasi-time function also holds.
\end{proof}

\subsubsection{Conditions for Stronger Causal Character}
We now shift our focus to finding the conditions under which an $(N,h)$p-wave exhibits stronger causality properties. This was the subject of \cite{causalLadderSanchez}, in which is was shown that the criterion for determining causal character is the spatial asymptotic behaviour of the characteristic function $H$ (when the parallel wave is written in adapted coordinates), and in some cases the completeness of the Riemannian manifold corresponding to the wavefront. A summary of the results of this work \cite[Sec. 7]{causalLadderPFWs} is given in Table \ref{tab:PFWCausal}, where one uses $-H$ to classify asymptotic behaviour as opposed to $H$ to be consistent with work which will be presented in Sec. \ref{sec:EKConj}. A precise definition of the asymptotic behaviour of $H$ follows from:
\begin{Definition} \textup{Subquadratic Growth.}\\
    We say that $-H(u,\mathbf{x})$ behaves \textit{subquadratically at spatial infinity} if there exists some $\mathbf{x}_0\in N$ (where $N$ is the wavefront) and continuous functions $R_1(u), R_2(u)(\geq 0), p(u) < 2$ such that:
    \[
    -H(\mathbf{x}, u) \leq R_{1}(u) d^{p(u)}(\mathbf{x}, \mathbf{x}_0)+R_{2}(u) \quad \forall ~(u,\mathbf{x}) \in \reals \times N,
    \]
where $d$ is the distance canonically associated to the Riemannian metric on $N$. When $p(u) \equiv 2$, then we say $-H(u,\mathbf{x})$ behaves (at most) quadratically at spatial infinity\footnote{For the sake of completeness, one would similarly define \textit{superquadratic growth} via $-H(\mathbf{x}, u) > R_{1}(u) d^{p(u)}(\mathbf{x}, \mathbf{x}_0)+R_{2}(u) \quad \forall ~(u,\mathbf{x}) \in \reals \times N$.}.
\end{Definition}
\begin{table}[]
\begin{tabular}{llll}
\hline
Condition on $H$                                                                                                 &            & Causal Character    & $\exists$ Examples                                                                                                     \\ \hline
$-H$ Superquadratic                                                                                              & $\implies$ & Causal              & \begin{tabular}[c]{@{}l@{}}Non-distinguishing\\ \& globally hyperbolic\end{tabular}      \\ \hline
$-H$ Quadratic                                                                                                   & $\implies$ & Strongly Causal     & \begin{tabular}[c]{@{}l@{}}Globally hyperbolic\\ \& non-globally hyperbolic\end{tabular} \\ \hline
\begin{tabular}[c]{@{}l@{}}$-H$ Subquadratic \\ \& wavefront complete\end{tabular} & $\implies$ & Globally Hyperbolic &                                                                                                                    \\ \hline
\end{tabular}
\caption{\label{tab:PFWCausal}Causal properties of an $(N,h)$p-wave under certain conditions on the characteristic function $H$. The rightmost column lists the (non-generic) causal character of certain examples with the corresponding asymptotic behaviour of $-H$. Results as in \cite{causalLadderPFWs}.}
\end{table}

In light of Table \ref{tab:PFWCausal} we can identify $H$  being quadratic as critical for the causal behaviour, in the sense that small perturbations either in the superquadratic or in the subquadratic direction may
introduce significative qualitative differences in the causal character.\\

\section{The Ehlers--Kundt Conjecture}\label{sec:EKConj}
The Ehlers--Kundt conjecture is a statement about the role of gravitational plane waves (Eq. \ref{eq:plane}) in the mathematical description of gravitational waves. Roughly, it claims that the plane waves act as a mathematical idealisation of gravitational waves, and was originally stated as follows:
\begin{quotation}
    \textit{``Prove the plane waves to be the only complete pp-waves."\footnote{The Ehlers--Kundt conjecture originally contained the addendum ``no matter which topology one chooses", but as discussed in \cite{EKConj} the extension of the conjecture to manifolds of general topology is nontrivial. This extension was provided by \cite{compact}, which reduces to the statement above under the appropriate conditions.}}
\end{quotation}
The conjecture can be stated in a more modern language as follows, where the terms ``plane wave" and ``classical pp-wave" are defined consistently with the nomenclature of \textit{this} article (see Table \ref{tab:definitions}):
\begin{quotation}
    \textit{``Prove the plane waves to be the only geodesically complete, Ricci-flat classical pp-waves."}
\end{quotation}

    The conjecture stems from the idea that gravitational radiation should not arise in a spacetime in which there is no source to create it. If a spacetime is complete and Ricci-flat\footnote{For clarity, Ricci-flat = purely gravitational = vacuum = no matter present.} but the metric describes a propagating wave, then that wave would be produced independent of any source. Since complete spacetimes are inextendible, that is they are not part of some larger spacetime, we can be sure that we are not just ``missing" the part of the spacetime containing a source. If a vacuum spacetime contains a wave but is not complete, it is certainly possible that we are missing the source in our description.\\
    
    An analogy would be a room with light coming from behind a curtain. In this analogy light is the pp-wave, ``vacuum" means we cant see any lightbulbs (sources), and completeness equates to removing the curtain, so we can see everywhere in the room. If the curtain is present and we see light in the room, it is reasonable to say there must be a source behind the curtain. However it seems impossible that there is light in the room, we \textit{can} see everywhere, \textit{and} there is no lightbulb. To translate back to our terminology, it seems it should be impossible that our spacetime contains a wave,  is complete, and is also Ricci-flat.\\
    
    \begin{figure}
    \centering
    \makebox[0.6\textwidth][c]{%
    \begin{subfigure}{.6\textwidth}
      \centering
      \includegraphics[width=1\linewidth]{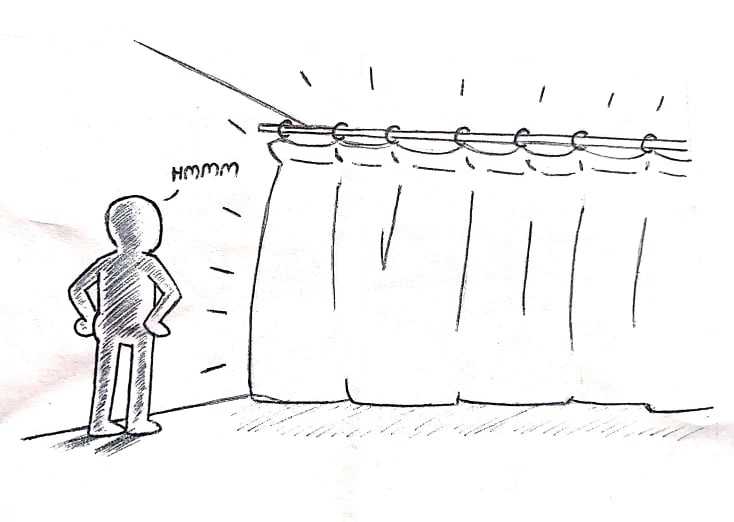}
    \end{subfigure}%
    \begin{subfigure}{.5\textwidth}
      \centering
      \includegraphics[width=1\linewidth]{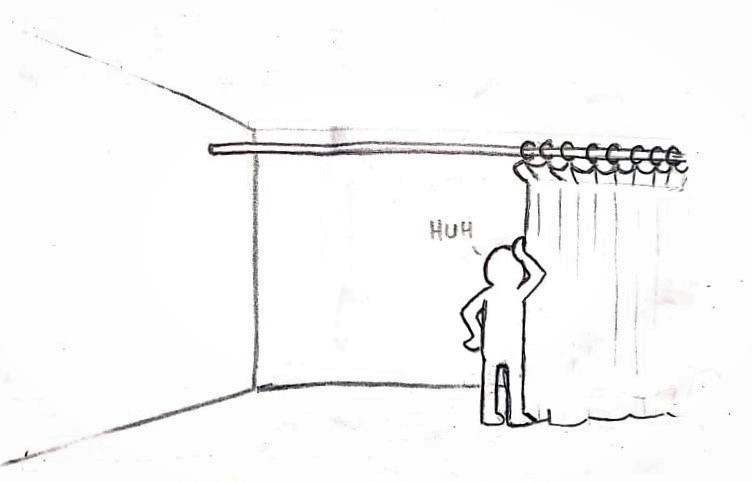}
    \end{subfigure}
    }%
    \caption{An analogy for the Ehlers--Kundt conjecture. Art courtesy of Christopher Martin.}
    \label{fig:EKChris}
\end{figure}

    Ehlers and Kundt \cite{exactsolEK} showed that the plane waves are always complete, even in the vacuum case. That is they correspond to the apparently unphysical case of a lit room with no curtain and no lightbulb. The Ehlers--Kundt conjecture assigns the plane waves the role of mathematical idealisations, and claims that any other pp-wave (\ref{eq:generalPp}) must be incomplete, so that the source which ``must have" created the waves is simply not part of our description. This is strongly related to the fact proven by \cite{remarkable}, wherein Penrose shows that the 
    plane waves are \textit{not globally hyperbolic}, as discussed in Sec.~\ref{sec:remarkable}.
    
    Spacetimes which are both\footnote{It is necessary to have both completeness and non-global hyperbolicity to claim the spacetime is unphysical. This is because by removing a point from a globally hyperbolic spacetime, one ``destroys" that global hyperbolicity. If that spacetime can be extended (here, by adding that point back) to a globally hyperbolic one, we should not consider it necessarily unphysical. However, a complete spacetime is inextendible, meaning there is no possibility to ``get back" the global hyperbolicity. For this reason, if a non-globally hyperbolic spacetime is complete, we can safely consider it unphysical.} complete and not globally hyperbolic are generally considered unphysical, since the development of the spacetime from arbitrary initial data in the initial value formulation of the Einstein equations is not unique in this case. %
    This construction is outlined in section \ref{sec:remarkable}. The EK-conjecture for gravitational pp-waves can be summarised as ``spacetime is complete" $\iff$ it is a plane wave. However since the $\impliedby$ direction was already proven by \cite{exactsolEK}, the conjecture in fact only refers to the $\implies$ direction.\\

    Although there is no known counterexample (i.e. a complete classical pp-wave other than the plane wave), the conjecture remains an open question. Significant progress has been made in addressing it however, and the remainder of this section will outline that progress. To begin, let us formulate the conjecture in more precise mathematical terms, and focus our attention on the classical pp-waves on $M = \reals^4$ so that our metric takes the form
\begin{equation}
    g = 2 d u d v - V(u, x,y) d u^{2}+ dx^2 + dy^2,
\end{equation}
where to be Ricci-flat/vaccum we must have that $V \vcentcolon = -H$ is harmonic in $(x,y)$. That is, $V_{xx} + V_{yy}=0$. 
The Ehlers--Kundt conjecture in this case states: if $(M,g)$ is geodesically complete, then $V(u,x,y)$ must be quadratic in $(x,y)$. We may replace the ``complete" in the original statement with ``geodesically complete" and study the geodesic equations of $(M,g)$. Upon calculating the geodesic equations, one finds
\begin{align}
        \ddot{u} &=0 \\
        \ddot{v} &=\frac{\dot{u}}{2}\left(\dot{u} V_{u}(u, x, y)+2 \dot{x} V_{x}(u,     x, y)+2 \dot{y} V_{y}(u, x, y)\right) \\
        \ddot{x} &=-\frac{\dot{u}^{2}}{2} V_{x}(u, x, y) \\
        \ddot{y} &=-\frac{\dot{u}^{2}}{2} V_{y}(u, x, y),
\end{align}
where a dot represents the derivative with respect to an affine parameter\footnote{Note that since the solution for $u(t)$ is $at+b$ for constants $a$ and $b$, then $u$ can be used as an affine parameter along the geodesic. This fact extends also to $n$ dimensions and does not depend on the properties of $H$.} $t$. Since the boundary conditions determine $u$ entirely, and the completeness of $v(t)$ evidently depends only on the completeness of $x(t)$ and $y(t)$, in studying the completeness the geodesic equations reduce to
\begin{equation}
\begin{split}
\ddot x(u) &= - V_x(u,x,y),\\
\ddot y(u) &= - V_y(u,x,y).
\end{split}
\end{equation}

These equations can be recast as a Hamiltonian system by defining $q(u) = (x(u),y(u))$, $p = \dot{q}$, and $\nabla$ the Euclidean gradient on $\reals^2$, such that we have 
\begin{equation}\label{eq:dynamicalSys}
    \dot{p} = -\nabla V(u,q).
\end{equation}

In this section we will use only $V$ as opposed to $H$, in order to maintain the interpretation as the potential of a dynamical system in classical mechanics. The Ehlers--Kundt conjecture can be restated in this language as: Prove that for $V(u,x,y)$ harmonic in $(x,y)$, if the Hamiltonian system $\dot{p} = -\nabla V(u,q)$ admits global solutions for all initial data, then the $u$-constant function $V(u,\cdot)$ is an at most quadratic polynomial in $(x,y)$. As mentioned above, this statement has not been proven in general. Before moving on to examine the special cases in which the conjecture have been proven, beginning with the so-called \textit{polynomial EK-conjecture}, we pause to mention a beautiful connection this conjecture has with complex dynamics, an observation due to G. Cox (private communication).

\subsection{Relation to Complex Dynamics}
In what follows, assume that $V$ is independent of $u$ (``autonomous"), and consider the complex-valued function $f\colon \mathbb{C} \rightarrow \mathbb{C}$ constructed from the partial derivatives $V_x,V_y$ of $V$:
\begin{equation}
\label{eq:Hforst}
z = x+iy \hspace{.2in},\hspace{.2in}f(z) = -V_x(x,y) + i V_y(x,y).
\end{equation}
The Cauchy-Riemann equations are
\[
-V_{xx} = V_{yy} \hspace{.2in},\hspace{.2in} -V_{xy} = -V_{yx},
\]
and observe that, while the second equation holds trivially, the first equation is satisfied precisely when $V(x,y)$ is harmonic (this is also the case for $f(z) = V_y + i V_x$).  It was shown in \cite[Corollary 7.4]{forstneric} that, given any entire function $f(z)$ (i.e., a function holomorphic on the entire complex plane $\mathbb{C}$), the complex-valued ODE
\[
\ddot{z} = f(z)
\]
admits global solutions for all initial data if and only if $f(z)$ is affine linear.  If we apply this result to Eq. \ref{eq:Hforst}, one finds
\begin{equation}
\label{eq:Hforst2}
\ddot{x} + i\ddot{y} = \ddot{z} = f(z) = -V_x + iV_y,
\end{equation}
then \cite[Corollary 7.4]{forstneric} yields that this system is complete if and only if $V_{xxx} = V_{yyy} = 0$; i.e., if and only if $V$ is quadratic in $x,y$.  This is not quite a proof of the EK-conjecture, however, since the pair of real ODEs to which Eq. \ref{eq:Hforst2} gives rise is not the usual Hamiltonian system Eq. \ref{eq:dynamicalSys}, but rather the following variation of it:
\[
\ddot{x} = -V_x \hspace{.2in},\hspace{.2in} \ddot{x} = V_y.
\]
Indeed, to obtain the usual Hamiltonian ODEs we should have chosen instead the function
\[
f(z) = -V_x - iV_y.
\]
(See also Eq. \ref{eq:autonomousV} in Remark 5.1 below.) Unfortunately, this function is holomorphic if and only if the harmonic function $V$ is linear; indeed, owing to Eq. \ref{eq:Hforst}, this choice of $f(z)$ is precisely \emph{anti}-holomorphic (i.e., its complex-conjugate is holomorphic).  We therefore come to the beautiful realization that the EK conjecture is the anti-holomorphic analogue of \cite[Corollary 7.4]{forstneric} and, as such, forms a bridge connecting general relativity to complex dynamics.  The main ingredient in the proof of \cite[Corollary 7.4]{forstneric} is a classification of the complete complex orbits of $\ddot{z} = f(z)$ which shows that they must be isomorphic to certain Riemann surfaces \cite[Proposition 3.2]{forstneric}; it is an intriguing question to see if the complete orbits of Eq. \ref{eq:dynamicalSys}, in the case when $V$ is harmonic, can be similarly classified.

\subsection{Polynomial EK-Conjecture}
In this section we will outline some of the work done by Flores and S\'anchez in \cite{EKConj}, who studied the EK-conjecture in the case that the potential $V$ is polynomially bounded. We refer to the case when $V$ does not depend on $u$ as the ``autonomous case", that is $V = V(x,y)$. The $u$-dependence of $V$ is not restricted by any of the previous discussion, and so it is natural to first consider the autonomous case. To make statements about the completeness of trajectories, the authors make use of \textit{confinement} properties of the relevant ODEs, and so we begin by developing some intuition for this:

\subsubsection{Motivation for Proof}
As a point of entry into thinking about the Ehlers--Kundt conjecture, consider for a moment the case when $V$ is an autonomous harmonic polynomial that is \emph{even} in $y$, namely, $V(x,-y) = V(x,y)$; e.g.,
\begin{equation}
\label{eqn:pos}
V(x,y) = -x^3+3xy^2\hspace{.1in}\text{and}\hspace{.1in}V(x,y) = -x^4+6x^2y^2-y^4
\end{equation}
are two such examples. The virtue of this class of harmonic polynomials is that, since the partial derivative $V_y$ is necessarily \emph{odd} in $y$, we must have $V_y(x,0) = 0$.  As a consequence, the ODE
\[
\ddot{y} = -V_y(x(t),y(t))
\]
admits the trivial solution $y(t) = 0$, for which choice the remaining ODE in $x$ takes the form
\begin{equation}
\label{eqn:HH2}
\ddot{x} = -V_x(x(t),0).
\end{equation}
Any solution $x(t)$ to Eq. \ref{eqn:HH2} then yields a solution $(x(t),0)$ of our original two-dimensional ODE\,---\,and the advantage to this approach is that Eq. \ref{eqn:HH2} permits a much easier blow-up analysis.  Indeed, consider any autonomous harmonic polynomial that is not even in $y$, but, like the examples in Eq. \ref{eqn:pos}, has \emph{negative} leading term in $x$:\footnote{In fact any harmonic polynomial that is even in $y$ can be put in such a form by a rotation of the $xy$-plane, where we note that rotations are isometries of the pp-wave metric, and that they also preserve the property of being harmonic.}
\[
V(x,0) = - (a_dx^d + a_{d-1}x^{d-1} + \cdots + a_1x + a_0)\comma a_d > 0\ ,\ d \geq 3.
\]
Then, since $a_d > 0$, we can, by a translation $x \mapsto x + a$ if necessary (which is an isometry of the standard pp-wave metric), assume that each $a_i \geq 0$ as well.  But now with ``every term negative", it follows easily that the solution $x(t)$ to Eq. \ref{eqn:HH2} satisfying $x(0) = 1$ and $\dot{x}(0) = \sqrt{2a_d}$ must be bounded \emph{above} (i.e. bounded \textit{below} in absolute value) by the corresponding solution to 
\[
\bar{V}(x) = -a_dx^d  \comma \ddot{x} = \bar{V}'(x(t)) = -da_dx(t)^{d-1}.
\]
since $V < \bar{V}$. This latter, bounding solution is
\[
\bar{x}(t) = \frac{b}{(c-t)^{\frac{2}{d-2}}} \comma b \defeq \Big[\underbrace{\frac{2}{a_d(d-2)^2}}_{>\,0}\Big]^{\!\frac{1}{d-2}} \comma c \defeq b^{\frac{d-2}{2}},
\]
which blows up in finite time.  Thus, since $|x(t)|$ is bounded below by a function that blows up in finite time, it follows that the solution $(x(t),0)$ also blows up in finite time.\\

What made this approach work?  It was the property of being even in $y$ that allowed us to find geodesics that \emph{stay in a confined region of the $xy$-plane\,---\,namely, the $x$-axis\,---\,which confinement simplified the resulting ODEs to the point where their behavior was dominated by the leading term of just one polynomial}.  This is an effective means of symplifying the analysis, but, of course, not every harmonic polynomial is even in $y$.  The questions remains, therefore, as to whether this technique of ``concentrating in a particular region of the plane" can work in general. Indeed it was demonstrated in \cite{EKConj} that this technique \emph{does} work in full generality, thereby resolving the polynomial case of the Ehlers--Kundt conjecture.\\

\begin{Remark}
In their work on the polynomial case of the EK-conjecture \cite{EKConj} the authors use a complex variable approach, wherein $z \vcentcolon = x + iy$ takes the place of the vector $q$ and similarly $\dot{z} = p$. There is a good reason that we should consider the polynomial case in the complex numbers $\mathbb{C}$ as opposed to the real numbers. As explained in \cite[pg. 5]{EKConj}, in the autonomous 
case $V:\reals^2 \rightarrow \reals$ we may identify $\mathbb{C}$ with $\reals^2$. The completeness of the trajectories of a potential $V$ is equivalent to the completeness of a corresponding vector field $X$ 
on the tangent bundle, and there exists a well-established
theory about completeness of holomorphic
vector fields $X$ on $\mathbb{C}^2$ \textit{in the case that they are polynomial}. The more general case where $V$ is not polynomially bounded does not admit an obvious advantage in the complex language. In this notation, the geodesic equations take the form
\begin{equation}\label{eq:autonomousV}
    \dot{p} = -\nabla V(q) \implies \ddot{z} = -V_x(x,y) - iV_y(x,y)
\end{equation}
For the purposes of this review, we will continue to explicitly write $x$ and $y$ in place of $z$.
\end{Remark}

We now ask ourselves if the above ODE Eq. \ref{eq:autonomousV} admits global solutions for $V$ harmonic in $(x,y)$, that is we wonder if the corresponding spacetime manifold in the original statement of the EK conjecture is geodesically complete. In fact, this is an open question in general. The following partial result by \cite{candelaCompleteness} became an important motivation for the so-called \textit{polynomial EK-conjecture}:

\begin{theorem}(Candela, Romero \& S\'anchez ’13)\\
For $V: \reals^2 \rightarrow \reals$ harmonic in $q \vcentcolon = (x,y) \in \reals^2$, if there is a constant $b\in \reals$ such that $V(q) \geq - b|q|^2$ for all $q \in \reals^2$, then the ODE $\ddot{q} = -\nabla V(q)$ admits global solutions for all initial data.
\end{theorem}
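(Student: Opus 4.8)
The plan is to use the fact that, because $V$ is autonomous, the system $\ddot q = -\nabla V(q)$ is conservative: the energy
\[
E \defeq \tfrac{1}{2}\,|\dot q(t)|^2 + V(q(t))
\]
is constant along every solution, as one checks by differentiating and substituting the equation of motion. This single conserved quantity, together with the hypothesis $V(q) \ge -b|q|^2$, already suffices; in particular harmonicity of $V$ plays no role in the completeness argument itself, and enters only through the broader Ehlers--Kundt context. I would first observe that there is no loss in assuming $b > 0$, since the bound, if it holds for some $b_0$, holds for every $b \ge b_0$.

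Next I would turn energy conservation into an a priori bound on the speed in terms of the position. From $\tfrac12|\dot q(t)|^2 = E - V(q(t)) \le \max(E,0) + b|q(t)|^2$ together with $\sqrt{a+c}\le\sqrt a+\sqrt c$ for $a,c\ge 0$, one gets
\[
|\dot q(t)| \;\le\; \sqrt{2\max(E,0)} + \sqrt{2b}\,|q(t)|.
\]
Thus the velocity grows at most \emph{linearly} in the position; this is exactly the place where the fact that a quadratic lower bound on $-(-V)$ is the borderline growth gets used.

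I would then convert this into a Gronwall estimate for $|q(t)|$. Using the integral form $|q(t)| \le |q(0)| + \int_0^t |\dot q(s)|\,ds$ (to sidestep the non-differentiability of $|\cdot|$ at the origin) and inserting the speed bound gives
\[
|q(t)| \;\le\; |q(0)| + \sqrt{2\max(E,0)}\,t + \sqrt{2b}\int_0^t |q(s)|\,ds,
\]
whence Gronwall's lemma yields $|q(t)| \le \bigl(|q(0)| + \sqrt{2\max(E,0)}\,T\bigr)\,e^{\sqrt{2b}\,T}$ on any interval $[0,T]$ on which the maximal solution is defined. Hence $|q|$, and then by the speed bound also $|\dot q|$, remains bounded on every finite time interval, so the maximal solution stays in a compact subset of phase space over any bounded time span. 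By the standard escape lemma for ODEs with smooth right-hand side, such a solution cannot have a finite endpoint, so it is defined for all $t \ge 0$; the same argument run backwards gives existence for all $t \in \reals$. Since the initial data were arbitrary, the system is complete.

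There is no serious obstacle here; the only points needing care are the reduction to $b>0$ and the bookkeeping of the sign of the energy level, the use of the integral (rather than differential) form of $|q(t)|$ near the origin, and invoking the escape-to-the-boundary criterion in the correct form. It is worth noting that the hypothesis is sharp: a strictly superquadratic downward bound such as $V(x,y) = -(x^2+y^2)^2$ already produces trajectories that blow up in finite time, in the spirit of the ``negative leading term'' blow-up examples discussed earlier.
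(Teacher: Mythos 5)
Your proposal is correct and follows essentially the same route as the paper's proof (which the paper attributes to G.~Cox): energy conservation, a linear-in-$|q|$ bound on the speed obtained from the quadratic lower bound on $V$, the integral form of Gr\"onwall's inequality to bound $|q(t)|$ on compact time intervals, and the standard extension/escape criterion to conclude global existence. The only cosmetic difference is that you package the two energy cases via $\max(E,0)$ where the paper treats $E<0$ and $E\ge 0$ separately.
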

In other words, this is the statement that the Ehlers--Kundt conjecture holds in the case that $H = -V$ is \textit{subquadratic}. We reproduce now a short version of the proof which is originally due to G. Cox (private communication):
\begin{proof}
It is sufficient to assume $b > 0$. Since we have translated the original conjecture to the realm of Newtonian dynamics, we may apply simple energy conservation
\begin{equation}
    \frac{1}{2}|p|^2 + V(q) = E \Rightarrow |p|^2 \leq 2(E + b|q|^2).
\end{equation}
We then bound $|p|$ by $|q|$ in the cases of negative and non-negative energy: 
\begin{equation*}
\begin{aligned}
E<0 & \Rightarrow|p|^{2} \leq 2 b|q|^{2}, \\
E \geq 0 & \Rightarrow 2\left(E+b|q|^{2}\right)=\underbrace{2(\sqrt{E}+\sqrt{b}|q|)^{2}-4 \sqrt{E b}|q|}_{\leq 2(\sqrt{E}+\sqrt{b}|q|)^{2}}. \\
\end{aligned}
\end{equation*}
Such that in both cases we have the bound
\begin{equation}
    |p| \leq a+c|q| \quad, \quad a \geq 0, c>0.
\end{equation}
We can then bound $|q(t)|$ using $|q(0)|$ as follows:
\begin{equation*}
    \begin{aligned}
    \underbrace{\left|\int_{0}^{t} p(s) d s\right|}_{|q(t)|-|q(0)| \leq} & \leq \int_{0}^{t}|p(s)| d s \leq \underbrace{\int_{0}^{t}(a+c|q(s)|) d s}_{a t+c \int_{0}^{t}|q(s)| d s} \\
    & \Rightarrow|q(t)| \leq(|q(0)|+a t)+c \int_{0}^{t}|q(s)| d s \\
    & \Rightarrow|q(t)| \leq \underbrace{(|q(0)|+a t) e^{c t}}_{\text {bounded on compact int. }}
    \end{aligned}
\end{equation*}
where in the final step we have used the integral form of Gr\"onwall's inequality. The result then follows by Picard-Lindel\"of. 
\end{proof}

This result was proven in \cite{candelaCompleteness} even in the case that $V$ is \textit{non-autonomous} and where $|\cdot|$ is replaced by a general distance function $d_g(\cdot~,\cdot)$ associated to a Riemannian metric $g$. Therefore the previous result also holds true for a gravitational $(N,h)$-fronted wave \ref{eq:NPp}. That the EK-conjecture is true for a harmonic and \textit{subquadratic} $H = -V$ motivates one to ask if the same is true for harmonic and \textit{polynomially bounded} $H$. This question was answered by \cite{EKConj}, but before stating the theorem let us first make precise the idea of a \textit{polynomially bounded} $H$.
\begin{Remark}
Following the terminology of \cite{EKConj}, a function $H:\reals \times \reals^2 \rightarrow \reals$ is called ``polynomially $u$-bounded"  (meaning polynomially upper bounded along finite $u$-times) when for each $u_{0} \in \mathbb{R}$, there exists $\epsilon_{0}>0$ and a polynomial $P_{0}:\reals^{2} \rightarrow \reals$ such that $H(u,q) \leq P_{0}(q)$ for all $(u,q) \in \left(u_{0}-\epsilon_{0}, u_{0}+\epsilon_{0}\right) \times \mathbb{R}^{2}$.
\end{Remark}
Note that we say H is quadratically polynomially $u$-bounded when $P_0$ can be chosen of degree 2 for all $u_0 \in \reals$.

\subsubsection{Outline of Proof}
The Polynomial EK-conjecture is stated as follows:

\begin{theorem}\label{thm:polyEK}(Flores \& S\'anchez ’19)\\
Let $V:\reals \times \reals^2 \rightarrow \reals$ be a polynomially $u$-bounded $C^1$-potential
which is also $C^2$ and harmonic in the pair of variables q = (x, y). Then: all
the solutions to the dynamical system Eq. \ref{eq:dynamicalSys} are complete if and only if the function
$V (u,\cdot)$ is an at most quadratic polynomial for each $u \in \reals$.
\end{theorem}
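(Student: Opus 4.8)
The plan is to prove the two implications of the equivalence separately; the ``if'' direction is classical and the ``only if'' direction (completeness forces $V(u,\cdot)$ to be quadratic) is the substantive one. For the \emph{``if'' direction}, suppose $V(u,\cdot)$ is an at most quadratic polynomial for every $u$. Then $\nabla V(u,q) = A(u)q + b(u)$ is affine in $q$, with $A(u),b(u)$ depending on $u$ at least continuously (hence locally bounded), so \eqref{eq:dynamicalSys} is a linear non-autonomous ODE; Grönwall's inequality gives an a priori bound $|q(u)|+|p(u)| \le (|q(u_0)|+|p(u_0)|+C)e^{C|u-u_0|}$ on every bounded $u$-interval, and Picard--Lindelöf then yields completeness. (Equivalently, this is the case of the Candela--Romero--Sánchez theorem in which the quadratic bound on $H=-V$ is attained, and it recovers the classical completeness of plane waves.)

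For the \emph{``only if'' direction} I argue by contraposition: assuming $V(u_0,\cdot)$ is not an at most quadratic polynomial for some $u_0$, I build a geodesic that leaves every compact set in finite affine parameter. Since $V$ is harmonic in $q$ and polynomially $u$-bounded, a Liouville-type theorem for harmonic functions forces $V(u,\cdot)$ to be a polynomial in $q$ for each $u$, of degree bounded uniformly as $u$ runs over compacts, with coefficients inheriting the $C^1$-dependence on $u$; hence $d \vcentcolon= \deg V(u_0,\cdot) \ge 3$. Its top-degree homogeneous part is a harmonic homogeneous polynomial of degree $d$, so equals $\operatorname{Re}(cz^d)$ for $z=x+iy$ and some $c\neq 0$; rotating the $(x,y)$-plane (an isometry of the pp-wave metric preserving harmonicity) we may take it to be $-\tfrac{a_d}{d}\operatorname{Re}(z^d)$ with $a_d>0$, so that $V(u_0,x,0) = -\tfrac{a_d}{d}x^d + O(x^{d-1})$, in the spirit of the ``every term negative'' case of the motivation above. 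The rotation is chosen precisely so that the \emph{leading}-order force along the positive $x$-axis is purely radial: $\partial_y\!\operatorname{Re}(z^d)$ vanishes on $\{y=0\}$, whence $V_y(u_0,x,0) = O(x^{d-2})$, a full power weaker than $-V_x(u_0,x,0) = a_d x^{d-1} + O(x^{d-2})$. By continuity of the coefficients there is a short interval $I \ni u_0$, on which the polynomial $u$-bound holds, where the leading coefficient stays $\ge \alpha>0$ and all lower ones stay $\le M$.

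The heart of the argument is then a \emph{confinement lemma}: there are $R_0$ large and $\delta>0$ small so that the solution of \eqref{eq:dynamicalSys} with $q(u_0) = (R,0)$ and $p(u_0)$ an outward radial velocity of order $R^{d/2}$ (tuned to the self-similar blow-up profile) stays, for every $R\ge R_0$, in the region $\mathscr{C} = \{x>R/2,\ |y|<\delta x\}$ for as long as it exists within $I$. Granting this, inside $\mathscr{C}$ the coefficient bounds give $\ddot x = -V_x(u,x,y) \ge \beta x^{d-1}$ for a fixed $\beta>0$ (take $\delta$ small and $R_0$ large so the radial leading term dominates the transverse cross-terms, the lower-order terms, and the $u$-variation over $I$), and comparison with the explicit zero-energy solution of $\ddot\xi = \beta\xi^{d-1}$, $\xi(u_0)=R$, $\dot\xi(u_0)=\sqrt{2\beta/d}\,R^{d/2}$, shows $x(u)\to\infty$ at a finite $u_*$ with $u_*-u_0 = O(R^{1-d/2}) \to 0$; choosing $R$ large puts $u_*\in I$, so the blow-up happens while confinement is still in force, and the corresponding geodesic of $(M,g)$ is incomplete. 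The main obstacle is the confinement lemma itself — showing the transverse deviation $y(u)$ stays $<\delta x(u)$ through the finite blow-up window, uniformly against the $u$-dependence, via a Grönwall bootstrap that exploits $y$ starting at rest at $0$ and being forced only at order $x^{d-2}$ while $x$ accelerates at order $x^{d-1}$. This is exactly the step that trivializes when $V$ is even in $y$ (the $x$-axis is then flow-invariant, $y\equiv 0$) but needs real work in general; in \cite{EKConj} it is carried out by estimates in the complex variable $z$ organized around the anti-holomorphic equation $\ddot z = -\overline{W'(z)}$ with $V=\operatorname{Re}W$ (compare \eqref{eq:autonomousV}), the anti-holomorphic counterpart of the orbit analysis behind \cite[Prop.~3.2, Cor.~7.4]{forstneric}.
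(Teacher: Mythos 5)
Your proposal follows essentially the same route as the paper's own outline of the Flores--S\'anchez argument: a Liouville-type theorem reduces $V(u,\cdot)$ to a harmonic polynomial, the leading homogeneous part $\mathrm{Re}(cz^d)$ singles out ``channel'' directions (your rotated positive $x$-axis is the paper's $\hat{\theta}_k$), a confinement lemma traps suitably launched trajectories in a sector around one channel, and a one-dimensional comparison $\ddot{x}\geq\beta x^{d-1}$ against the explicit self-similar solution forces blow-up in time $O(R^{1-d/2})$, short enough to fit inside the interval where the non-autonomous bounds hold. Like the paper, you state rather than prove the confinement lemma (the genuinely hard step, deferred to \cite{EKConj}), so the two accounts sit at the same level of completeness; your only slight overstatement is in suggesting that \cite{EKConj} proceeds by an anti-holomorphic analogue of the Forstneri\v{c} orbit classification, which the paper explicitly flags as an open question rather than the method actually used.
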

We will present here only a rough outline of the arguments behind the proof, following loosely \cite[Sec. 2.3]{EKConj}. The proof of Theorem \ref{thm:polyEK} goes as follows: 
\begin{enumerate}[(i)]
    \item It is first shown that if a harmonic function $V$ is upper bounded by a polynomial of degree $n$, that is if $V (x, y) \leq A(x^2+y^2)^{n/2}$ for some $n \in \mathbb{N}, A > 0$ at large $(x,y)$, then $V$ must itself be a harmonic polynomial of degree $\leq n$. 
    \item The homogeneous, harmonic polynomials of degree $m>0$ on $\reals^2$ form a two-dimensional vector space. In the standard polar coordinates of $\reals^2$, such polynomials take the form
    \begin{equation}\label{eq:harmonicPoly}
        p_m(\rho,\theta) = \lambda_m \rho^m \cos(m(\theta + \alpha_m))
    \end{equation}    
    for $\lambda_m > 0$ and $\alpha_m \in (-\pi,\pi]$. Therefore any harmonic polynomial $P$ on $\reals^2$ of degree $n \in \mathbb{N}$ can be written as 
    \begin{equation}\label{eq:harmonicPolyFull}
    P(\rho,\theta) = \sum_{m=0}^np_m(\rho,\theta)
    \end{equation}
    for some $p_0 \in \reals$. In particular, the autonomous potential $V(q)$ of Eq. \ref{eq:autonomousV} can be written as such a sum\footnote{The extension to the non-autonomous case contains some subtleties which are explained in detail in \cite[Sec. 2.1]{EKConj}. Loosely, for a polynomially $u$-bounded and non-autonomous potential $V$, the $\lambda_m$ and $\alpha_m$ of Eq. \ref{eq:harmonicPoly} (and therefore Eq. \ref{eq:harmonicPolyFull}) become continuous functions of $u$.}. For simplicity in this summary, let us take the simple case of a homogeneous degree $n>2$ polynomial $V_n$ with $\lambda_n=-1$ and $\alpha_n=0$, that is $V_n(\rho,\theta) = -\rho^n\cos(n\theta)$. In the homogeneous case one can always obtain this via rotations, scaling or adding a real number to $V$, none of which affect the completeness or harmonic characters necessary for our discussion.
    \item Consider the radial curves in polar coordinates $\gamma_k(t) = (\rho(t),\hat{\theta}_k)$, $k \in \{0,\dots,n-1\}$ where $\hat{\theta}_k \vcentcolon= 2\pi k/n$ ($n$ is the degree of the potential $V$ being considered).
    Such curves are solutions of $\ddot{q} = -\nabla V_n(q)$ if and only if the radial component $\rho(t)$ satisfies $\ddot{\rho}(t) = n\rho^{n-1}(t)$.
    \item It is then proved that for any real number $n>2$ and $C^1$ function $\lambda:[0,\infty)\rightarrow\reals$, the solutions of the differential inequality
    \begin{equation}\label{eq:diffIneq}
    \ddot{\rho}(t) \geq n\lambda \rho^{n-1}(t)
    \end{equation}
    with initial conditions $\rho(0)>0$ and $\dot{\rho}(0) > 0$ are \textit{incomplete} under the following conditions:
    \begin{enumerate}
        \item The solutions are \textit{incomplete} if there exists some $\lambda_0 >0$ such that $\lambda \geq \lambda_0$.
        \item If $\lambda(0)>0$ then there exists some $k>0$ such that such that all solutions with initial conditions $\rho(0)>k$ or $\dot{\rho}(0)>k$ are \textit{incomplete}.
    \end{enumerate}
    The first of these points tells us immediately that the solutions $\gamma_k$ satisfying $\ddot{\rho}(t) = n\rho^{n-1}(t)$ are incomplete, as in this case $\lambda$ is the constant function equal to one, such that any $0<\lambda_0<1$ provides the necessary bound. In fact a confinement property is shown, whereby there exists regions ``around" the $\gamma_k$ labelled $D_k[\rho_0,\pi/(2n)]$ such that trajectories starting in $D_k[\rho_0,\pi/(2n)]$ (with suitable initial conditions) stay in $D_k[\rho_0,\pi/(2n)]$, and these confined solutions satisfy the differential inequality Eq. \ref{eq:diffIneq}, allowing us to prove that they too are incomplete 
    .
    \item The existence of the confining regions $D_k[\rho_0,\pi/(2n)]$ for a homogeneous potential $V_n$ can be understood as follows: Along each $\gamma_k = (\rho(t),\hat{\theta}_k = 2\pi k/n)$, $V_n(\rho,\theta) = -\rho_n\cos(n\theta)$ is decreasing and concave. Furthermore, the harmonicity\footnote{Harmonicity implies that $V_n \sim \cos(n\theta)$ such that $\frac{\partial V_n}{\partial \theta} \sim \sin(n\theta)$ and evaluating at any $\hat{\theta}_k$ yields $0$. This is the easily shown to be a minimum by taking another derivative.} of $V_n$ implies that $\frac{\partial V_n}{\partial \theta}(\gamma _k(t)) = 0 $ and that this is in fact a \textit{minimum}. That is, the $\hat{\theta}_k$ are stable equilibria of trajectories close to the $\gamma_k$. This can be visualised by looking at the potential $V_n$ for some choice of $n$. In Figure \ref{fig:homV5} the case $n = 5$ is demonstrated\footnote{Note that a very similar Figure appears in \cite[Fig. 1]{Podolsky:chaoticmotion} in a slightly different but related context, as discussed further at the end of this section.}, in which one can see $n = 5$ different ``channels" with centers corresponding to the $\gamma_k, k\in\{0,\dots,4\}$.
    
    \item To prove the case in which $V$ is not homogeneous, it is first written as a linear combination of polynomials like $V_n$. Then the $\gamma_k$ are no longer solutions of the full dynamical system $\ddot{q} = -\nabla V(q)$, but it is shown that there still exists regions ``around" the $\gamma_k$ labelled $D[\rho_0,\theta_+]$ which have qualitatively the same behaviour as the $D_k[\rho_0,\pi/(2n)]$. This is achieved by showing that the radial component of a trajectory $\gamma$ grows sufficiently fast compared to the angular oscillation that $\gamma$ never escapes the $D[\rho_0,\theta_+]$.
    \item To prove the case when $V$ is \textit{non-autonomous} a similar procedure is followed to that of the autonomous case, with some technical complications. The first notable difference is that the polar expressions of a harmonic potential $V(u,q)$ Eq. \ref{eq:harmonicPoly} and Eq. \ref{eq:harmonicPolyFull} become valid only on an interval in $u$, that is 
    \begin{equation}
        p_m(u,\rho,\theta) = \lambda_m(u) \rho^m \cos(m(\theta + \alpha_m(u))),~~~u\in (u_0 - c, u_0 + c) \subset \reals
    \end{equation}    
    for some $0 < c \in \reals$. Here we can only choose $\alpha(u_0) = 0$, and in general $\alpha(u) \neq 0$. As a result, in the non-autonomous case we have that the $\hat{\theta}_k$ are no longer constant:
    \begin{equation}
        \hat{\theta}_k(u) = \frac{2\pi k - \alpha(u)}{n},~~~k = 0,\dots,n-1.
    \end{equation}
    The remaining differences follow a similar pattern, whereby objects become $u$-dependent and are defined on intervals. However since the rough details are the same as the autonomous case, these details will be omitted here.
\end{enumerate}

\begin{figure}
    \centering
    \makebox[0.6\textwidth][c]{%
    \begin{subfigure}{.6\textwidth}
      \centering
      \includegraphics[width=1\linewidth]{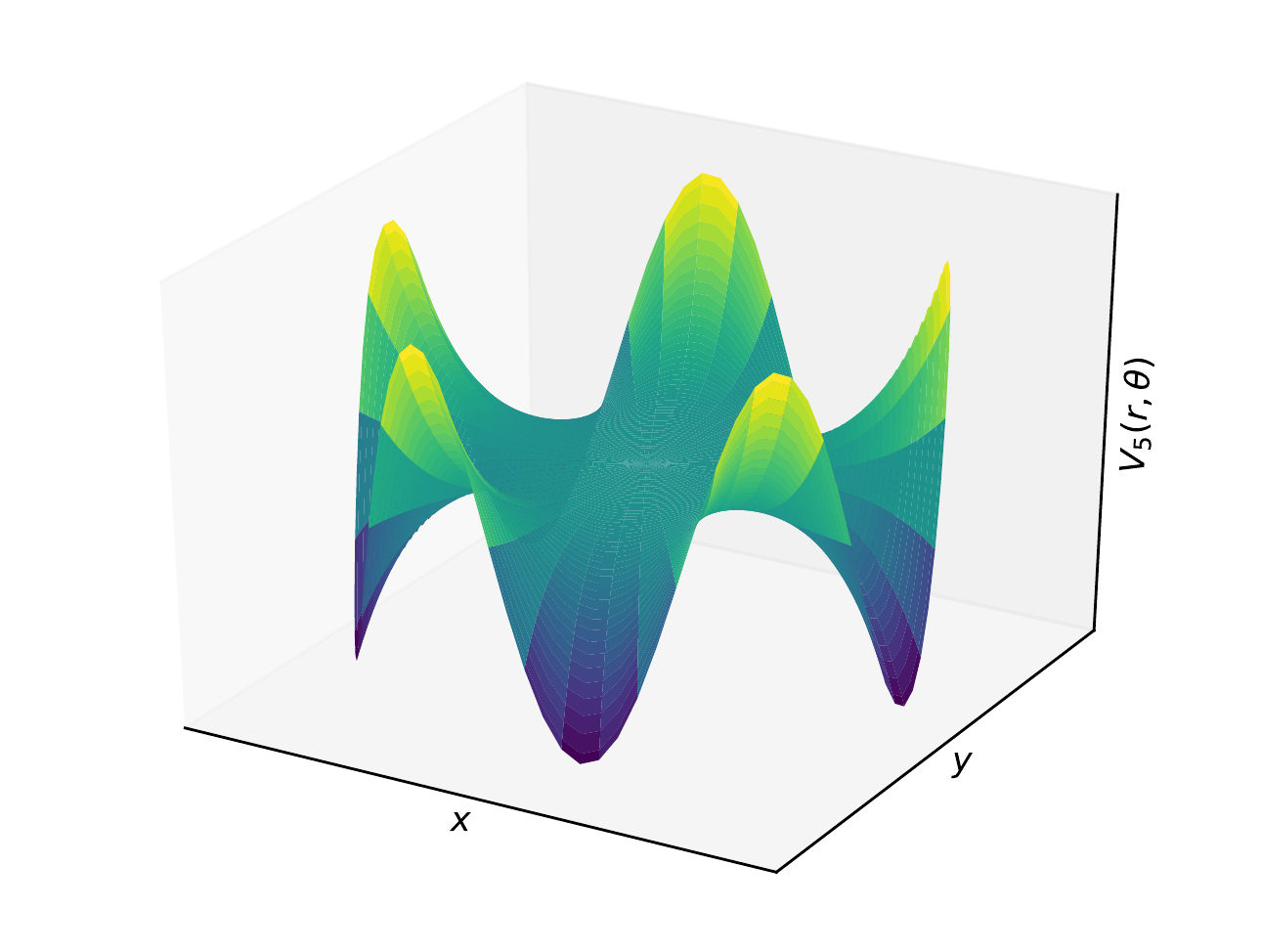}
    \end{subfigure}%
    \begin{subfigure}{.6\textwidth}
      \centering
      \includegraphics[width=1\linewidth]{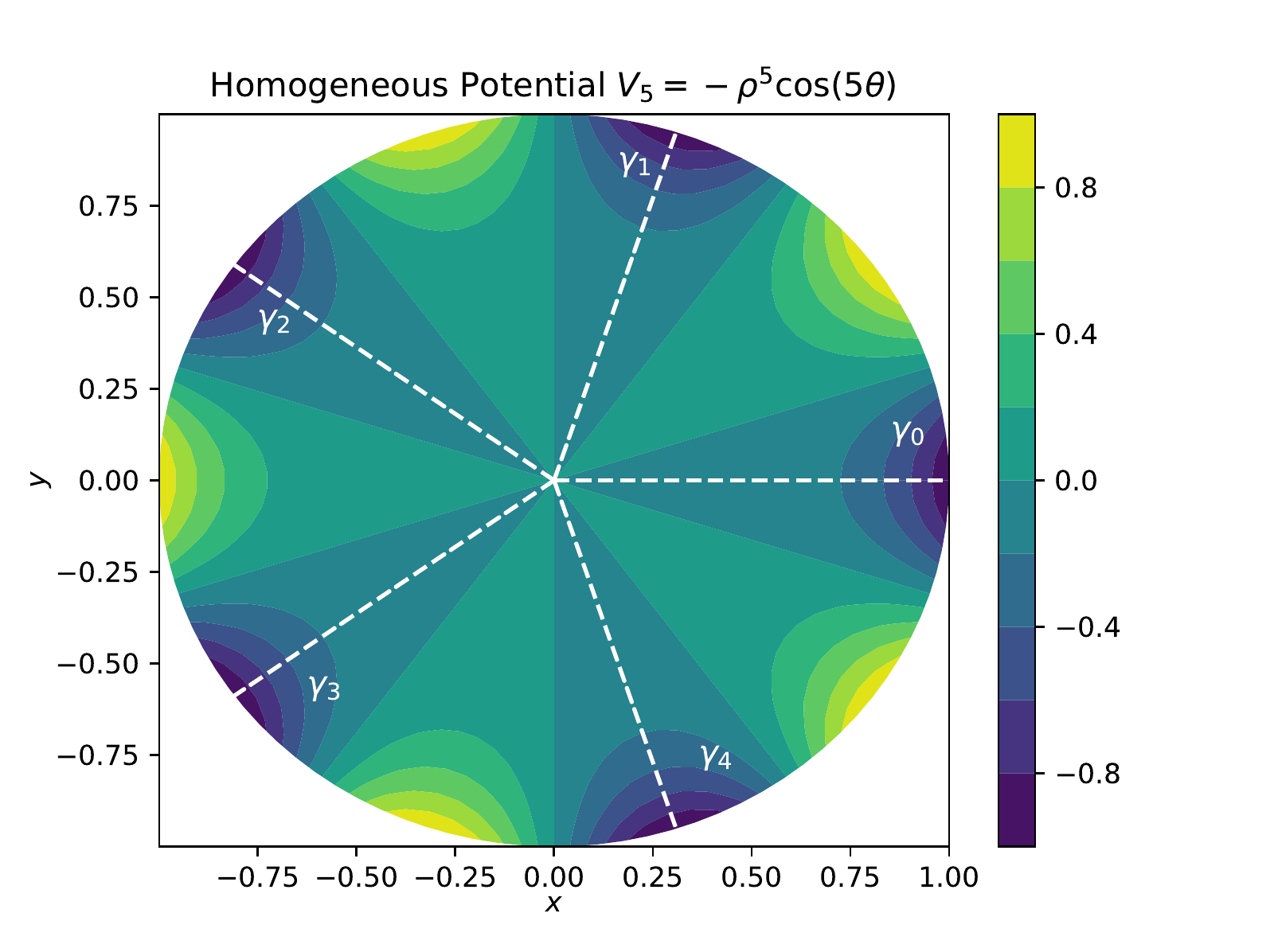}
    \end{subfigure}
    }%
    \caption{Homogeneous degree-5 potential $V_5$ as a surface (left) and contour plot (right). These images make clear the $n$ stable trajectories $\gamma_k$ for $k\in \{0,\dots,n-1\}$. The same features can be seen for any natural number $n>2$.}
    \label{fig:homV5}
\end{figure}

\textbf{Summary -- Polynomial EK-conjecture}\\
We first saw the Ehlers--Kundt conjecture, stated as:
\begin{quotation}
    \textit{``Prove the plane waves to be the only complete (gravitational) pp-waves."}
\end{quotation}
This was a statement about the completeness of the solutions of the geodesic equation for a metric $g = 2 d u d v - V(u, x,y) d u^{2}+ dx^2 + dy^2$ where $V$ is harmonic in $(x,y)$. The geodesic equations were reduced to a Hamiltonian system $\dot{p} = -\nabla V(q)$ with $q \vcentcolon= (x,y)$ and $p = \dot{q}$. In mathematical terms, the conjecture states:
\begin{equation*}
    \parbox{15em}{\centering The solutions of  $\dot{p} = -\nabla V(q)$\\exist for all times} \iff \text{$V(u,x,y)$  is quadratic  in $(x,y)$}.
\end{equation*}
The $\impliedby$ direction is already known to hold (see Sec. \ref{sec:plane}), and the $\implies$ direction is an open question. The fact that a quadratically-bounded and harmonic $V$ was proven to have complete trajectories motivated us to ask what happens if the harmonic $V$ is \textit{polynomially} bounded. This question was answered by \cite{EKConj} where it was proven that for such a $V$, all
the solutions to the dynamical system Eq. \ref{eq:dynamicalSys} are complete if and only if the function
$V (u,\cdot)$ is an at most quadratic polynomial for each $u \in \reals$. That is, the Ehlers--Kundt conjecture is proved to hold in the case that $V$ is polynomially bounded.\\

We may then ask ourselves if it is reasonable to expect that $V$ be polynomially bounded. In fact in the causal study, it was discovered that in the autonomous case unless $V$ were quadratically polynomially bounded, the pp-wave would not be strongly causal. For further evidence supporting such a bound see \cite[Sec. 13. (b)]{EKConj}. Therefore this is arguably the strongest known result addressing the EK conjecture. It is not, however, the only one; indeed, in the case of an autonomous potential, the EK conjecture has also been settled in the case when the spacetime is strongly causal, in~\cite{costa}.\\

It should also be mentioned that exactly the behavior of geodesics in geometries studied in this section (those for which $V$ is a harmonic polynomial that is even in $y$) have been studied extensively, wherein it was demonstrated via a fractal method that the geodesic flow is chaotic in nature. The geodesics escape to infinity along one of the channels which appear in Fig. \ref{fig:homV5} in this article (and in Fig. 1 of \cite{Podolsky:chaoticmotion}). For details see also \cite{Podolsky:chaos} and \cite{Vesely:chaos}. This phenomenon was further studied in the context of the sandwich waves in \cite{Podolsky:smearing}, wherein it was demonstrated that as the support of the curved region approaches zero (the so-called ``impulsive waves'') the geodesic motion becomes integrable.

\subsection{The Compact Case}
One may also wonder if the Ehlers--Kundt conjecture could be answered in the case that a pp-wave $(M,g)$ is a \textit{compact} Lorentzian manifold, since such manifolds are known to be complete under a wealth of circumstances\footnote{Though compact Lorentzian manifolds are not \textit{always} complete, in contrast to compact \textit{Riemannian} manifolds which are always complete (see Hopf–Rinow theorem \cite{hopf}).}. Some examples include when they are flat, have constant curvature, are homogeneous (and even locally homogeneous in the 3 dimensional case), or admit a time-like conformal Killing vector field \cite[pg.~2]{compact}. Unfortunately, general pp-waves do not satisfy any of these properties, and so some additional results are required to address the EK conjecture in this case. The question of completeness for compact pp-waves has indeed been answered by \cite{compact}, and that work is the subject of this section.\\

\textbf{Example}: Compact pp-wave.\\
Consider the flat metric $h$ on the n-torus $\mathbb{T}^n$, then the product manifold $M = \mathbb{T}^2 \times \mathbb{T}^n$ with the metric 
\[
    g = 2d\theta d\phi + 2H d\theta^2 + h
\]
with $H \in C^{\infty}(\mathbb{T}^n)$ is compact and is in fact a standard pp-wave with defining covariantly constant vector field represented as $\partial_{\phi}$. Note however that a ``wave" is not a very accurate name in the compact case, since as mentioned in Sec. \ref{sec:radiation} it is the (null) \textit{asymptotics} which signal the physical presence of radiation, and the compact case does not admit the same notion of ``null infinity" as was used to define the presence of radiation.\\

The principal results of \cite{compact} can be summarised as follows:

\begin{enumerate}[label=(\Alph*)]
    \item The universal cover of a compact pp-wave is globally isometric to a standard pp-wave (Eq. \ref{eq:standardPp})
    \item Every compact pp-wave $(M, g)$ is geodesically complete.
    \item Every compact Ricci-flat pp-wave is a plane wave.
\end{enumerate}

Point A is instrumental in proving point B. Point B appears to be in contradiction to the EK conjecture, but such an apparent problem is resolved by point C. That is, there are no non-plane compact vacuum pp-waves, so we need not wonder about their completeness on physical grounds. Thus these results solve the Ehlers--Kundt conjecture in the compact case. Or rather, the authors have proven that one need not conjecture about the incompleteness of non-plane vacuum compact pp-waves, as there are no such pp-waves. The remainder of this section will outline the methods by which these results are obtained. Let us begin with result (A) in more detail:
\begin{theorem}\label{thm:cover}
    The universal cover of an $n$-dimensional\footnote{Note that the authors of the original work \cite{compact} use $n$ as the dimension of only the wavefront, and in this article it is the dimension of the spacetime. Therefore $n_{\text{this article}} = n_{\text{Leistner et al.}} + 2$. Similarly, a different convention on $H$ is used, such that the $H_{\text{this article}} = 2H_{\text{Leistner et al}}$. This does not impact the methods used in any meaningful way.}
    compact pp-wave defined by a covariantly constant null vector field $Z$ is globally
    isometric to a standard pp-wave (Eq. \ref{eq:standardPp}) which can be written as
    \[
        (\reals^{n}, g^H = 2 d u d v+H\left(u, \mathbf{x}\right) d u^{2}+\delta _{a b}d x^{a} d x^{b} )
    \]
    and under this isometry, the lift of $Z$ is mapped to the coordinate vector field $\frac{\partial}{\partial v}$
\end{theorem}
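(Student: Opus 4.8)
The strategy is to upgrade, one ingredient at a time, the \emph{local} construction of adapted coordinates (Theorem~\ref{thm:adapted} and Proposition~\ref{prop:metricProperties}) to a \emph{global} one on the simply connected universal cover, using compactness of $M$ to supply the completeness that is missing in the local statement. Let $\pi\colon \hat M \to M$ be the universal covering. Since $\pi$ is a local isometry, the horizontal lift $\hat Z$ of $Z$ is again null and parallel for $\hat g \defeq \pi^*g$, and, the curvature condition $R|_{Z_\perp\wedge Z_\perp}=0$ of Definition~\ref{def:pp} being local, $(\hat M,\hat g)$ is again a pp-wave. Because $\hat M$ is simply connected, the parallel (hence closed) $1$-form $\hat Z^\flat$ is exact: there is $u\in C^\infty(\hat M)$, unique up to an additive constant, with $\hat Z^\flat = du$ and $\nabla u = \hat Z$; normalise $u(\hat p_0)=0$. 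This is the global replacement for the local function $u$ produced in Theorem~\ref{thm:adapted}.

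Next, since $M$ is compact every vector field on $M$ is complete; in particular $Z$ is, so the flow of $\hat Z$ on $\hat M$ is complete, and moreover one may fix (by a partition of unity on $M$) a vector field $W$ with $g(Z,W)\equiv 1$, complete on $M$, whose lift $\hat W$ satisfies $\hat W(u)=\hat g(\nabla u,\hat W)=1$ and is therefore everywhere transverse to the level hypersurfaces $\hat\Sigma_c\defeq u^{-1}(c)$. Flowing along $\hat W$ identifies $\hat M\cong\reals\times\hat\Sigma_0$, so $u$ is onto $\reals$ and $\hat\Sigma_0$ is simply connected. On $\hat\Sigma_0$ the degenerate form $\hat g|_{\hat\Sigma_0}$ has kernel $\reals\,\hat Z$, and the quotient by the $\hat Z$-flow is the wavefront, which by Theorem~\ref{thm:QB} carries a flat linear connection (this is precisely the pp-wave hypothesis), hence a flat Riemannian metric; it is complete (inherited from compactness of $M$) and, granting the non-closedness of $\hat Z$-orbits discussed below, simply connected. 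By the Killing--Hopf theorem the wavefront is globally $(\reals^{n-2},\delta_{ab})$; pulling back Cartesian coordinates gives functions $x^1,\dots,x^{n-2}$ on $\hat\Sigma_0$, extended to $\hat M$ by being constant along $\hat W$. Together with $u$ and the flow parameter $v$ of $\hat Z$ measured from $\hat\Sigma_0$, this yields a global chart $(u,v,\mathbf x)\colon\hat M\to\reals^n$ in which $\hat Z=\partial_v$; Proposition~\ref{prop:metricProperties}, being local, then forces $\hat g$ into the form of Eq.~\ref{eq:generalPp}. A final application of the gauge freedom described around Eq.~\ref{eq:PpNoA} removes the cross terms $A_a\,dx^a du$, landing on the standard pp-wave Eq.~\ref{eq:standardPp} with $\hat Z=\partial_v$.

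The main obstacle is the interplay of the last two points: one must know that the $\hat Z$-orbits are \emph{non-closed}, so that $v$ genuinely ranges over all of $\reals$ and the chart $(u,v,\mathbf x)$ is injective rather than, say, exhibiting $\hat M$ as a mapping torus; and one must convert ``local gauge freedom'' into a \emph{global} coordinate change killing the $A_a$. Both are assertions that the would-be global data carry no monodromy. Non-closedness can be extracted from the fact that $\hat W^\flat$ (or a closed $1$-form cohomologous to the harmonic representative of $Z^\flat$ pushed down to $M$) integrates to a positive number over any putative closed orbit while vanishing there by Stokes on the simply connected $\hat M$; and the vanishing of the ``gyratonic'' obstruction, i.e.\ of the contour integral $\oint_C A_a\,dx^a$ of Section~\ref{sec:gyratonic}, follows once the fibres are known to be simply connected. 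Making these two monodromy-vanishing arguments precise\,---\,and in the right order, since completeness of the $\hat Z$-flow is needed even to speak of $v$ as a global coordinate\,---\,is where the real work lies; everything else is a transcription of the local proofs of Section~\ref{sec:coordDesc}.
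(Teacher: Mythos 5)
First, a point of reference: the paper does not actually prove Theorem \ref{thm:cover}. It explicitly states that the proof is omitted and defers entirely to \cite{compact}, remarking only that the argument there makes essential use of the flat ``screen bundle'' $Z_\perp/Z$ (i.e.\ parallel sections of the wavefront bundle over the simply connected cover). So your proposal cannot be matched against an in-paper proof; it can only be judged on its own merits and against the strategy of \cite{compact}, which is genuinely different from yours: rather than quotienting a level set of $u$ by the $\hat Z$-flow and invoking Killing--Hopf, the authors of \cite{compact} trivialise the flat screen connection of Theorem \ref{thm:QB} over the universal cover and assemble the global chart from the resulting parallel data. Your overall architecture (lift, globalise $u$ by simple connectivity, split off an $\reals$-factor via a transversal $W$ with $g(Z,W)=1$ built from a partition of unity, identify the transverse geometry, then gauge away the $A_a$) is sensible, and you are right that completeness of the lifted flows comes for free from compactness of $M$.

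However, the two steps you yourself flag as ``the real work'' contain genuine gaps as written. (i) The non-closedness argument does not close: Stokes kills $\oint_\gamma \alpha$ only for \emph{closed} $\alpha$, and $\hat W^\flat$ is not closed; meanwhile any $1$-form cohomologous to $Z^\flat$ integrates over a $\hat Z$-orbit to $\int g(Z,Z)\,dt = 0$, which yields no contradiction. Worse, even granting non-closed orbits, the quotient $\hat\Sigma_0/\hat Z$ is a Hausdorff manifold only if the $\reals$-action is \emph{proper} -- freeness is not enough (an irrational flow on a torus has free, non-closed orbits and a non-manifold quotient) -- and properness is never addressed. (ii) Identifying the flat bundle connection $\overline{\nabla}$ of Theorem \ref{thm:QB}, which lives on the vector bundle $Z_\perp/Z$ over $\hat M$, with the Levi-Civita connection of an induced Riemannian metric on the orbit space requires an argument, as does Killing--Hopf, which needs completeness of that quotient metric rather than of $M$. (iii) Eliminating the $A_a$ by $v \mapsto v + f(u,\mathbf{x})$ requires $A_a\,dx^a$ to be \emph{closed} in the transverse variables, not merely to have vanishing periods; on a simply connected wavefront the period obstruction is vacuous, so the entire content of this step is the closedness, which you do not establish. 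These are precisely the points where \cite{compact} expends most of its effort, so the proposal should be regarded as a plausible outline rather than a proof.
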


Though we don't present the proof of this theorem here, we remark that it makes significant use of the ``screen bundle" which is closely related to the ``wavefront" of our Definition \ref{def:wavefront}. However, as remarked in \cite[footnote 2]{compact} in the compact case this nomenclature is perhaps inappropriate. Using Theorem \ref{thm:cover}, it is then proven that:
\begin{theorem}\label{thm:compactComplete}
    Every compact pp-wave $(M, g)$ is geodesically complete.
\end{theorem}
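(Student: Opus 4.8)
The plan is to reduce the statement to a completeness assertion about a single, explicit model metric, using Theorem~\ref{thm:cover}. That theorem identifies the universal cover $\pi\colon\widetilde{M}\to M$ with a standard pp-wave $(\reals^n,g^H)$ of the form \eqref{eq:standardPp}, carrying the lifted parallel field to $\partial_v$. Since $\pi$ is a semi-Riemannian covering, hence a local isometry, geodesics of $\widetilde{M}$ project to geodesics of $M$ and geodesics of $M$ lift; consequently a geodesic is defined on all of $\reals$ upstairs if and only if its projection is, so $M$ is geodesically complete precisely when $(\reals^n,g^H)$ is. Thus it suffices to prove that this particular standard pp-wave is geodesically complete. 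This is the subtle point: not every standard pp-wave is complete — indeed the general question of which ones are is exactly the kind of problem embodied by the Ehlers--Kundt conjecture — so the proof must genuinely use that $H$ descends from the compact quotient $M$.

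Next I would set up the geodesic equations of $g^H$ as in Section~\ref{sec:EKConj}: $\ddot u=0$, whence $u(s)=\alpha s+\beta$; $\ddot x^{a}=\tfrac12\dot u^{2}\,\partial_a H(u,\mathbf x)$; and $\ddot v$ equal to a fixed polynomial in $\dot u$ and $\dot{\mathbf x}$ with coefficients $\partial_u H$ and $\partial_a H$ evaluated along the curve. If $\alpha=0$ the geodesic is an affine line inside a flat slice $u=\mathrm{const}$ and is trivially complete, so I may assume $\alpha\neq0$ and reparametrise by $u$. Exactly as in the reduction leading to \eqref{eq:dynamicalSys}, completeness then comes down to showing that the solutions of $\ddot{\mathbf x}(u)=\tfrac12\nabla_{\mathbf x}H(u,\mathbf x)$ cannot escape to infinity in finite $u$-time; once $\mathbf x(u)$ exists on every compact $u$-interval, $v$ is recovered by two integrations of a function that is itself only polynomially large there.

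The core of the argument is therefore a growth estimate on $H$ extracted from compactness of $M$. I would examine the deck group $\Gamma$ of $\pi$: each $\gamma\in\Gamma$ is an isometry of $g^H$ that fixes $\partial_v$ (being a lift of the identity, it preserves the lifted parallel field). By the description of the residual gauge freedom of the standard form given in Section~\ref{sec:generalCoords}, together with the constraint $\gamma_{*}\partial_v=\partial_v$ — which forces the scaling constant to be $1$ — every such $\gamma$ must act by $u\mapsto u+b_\gamma$, $\mathbf x\mapsto O_\gamma(u)\mathbf x+c_\gamma(u)$ with $O_\gamma(u)\in\mathrm{O}(n-2)$, and $v\mapsto v+(\text{a term affine in }\mathbf x)$. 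Hence $H\circ\gamma$ differs from $H$ only by a function affine in $\mathbf x$ (with $u$-dependent coefficients), so the wavefront Hessian $\mathrm{Hess}_{\mathbf x}H$ is genuinely $\Gamma$-equivariant and descends to a symmetric $2$-tensor field on the compact manifold $M$. This yields $\sup_{\reals^{n}}\lvert\mathrm{Hess}_{\mathbf x}H\rvert<\infty$, and therefore $\lvert\nabla_{\mathbf x}H(u,\mathbf x)\rvert\le a(u)+b\lvert\mathbf x\rvert$ with $b$ constant and $a$ at most linear in $u$, the residual $u$-dependence being controlled because these quantities are periodic modulo the $u$-translations in $\Gamma$.

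With this bound, completeness of $\ddot{\mathbf x}=\tfrac12\nabla_{\mathbf x}H(u,\mathbf x)$ follows verbatim from the Gr\"onwall/Picard--Lindel\"of argument reproduced above for the Candela--Romero--S\'anchez completeness theorem: on any finite $u$-interval one gets $\lvert\mathbf x(u)\rvert\le(\lvert\mathbf x(u_0)\rvert+At)e^{Bt}$, ruling out finite-time blow-up, after which $v(u)$ is finite on the same interval. Pulling this back through $\pi$, every geodesic of $M$ is complete, proving the theorem. I expect the only genuine obstacle to be the deck-group step — establishing precisely which isometries of a standard pp-wave fix $\partial_v$ and verifying that their effect on $H$ is harmless at the level of the Hessian, and that the $u$-dependence they introduce grows no faster than linearly; everything downstream is routine ODE theory.
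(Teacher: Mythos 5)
Your proposal follows the same skeleton as the paper's proof: pass to the universal cover via Theorem~\ref{thm:cover}, note that geodesic completeness is preserved under semi-Riemannian coverings, reduce the geodesic equations of the standard pp-wave (Eq.~\ref{eq:standardPp}) to an ODE for $\mathbf{x}(u)$ (with the $\dot u=0$ and $v$-component cases handled exactly as you do), and conclude from a global bound on the second wavefront derivatives of $H$. The one genuinely different ingredient is where that bound comes from. The paper simply imports it as Proposition~\ref{prop:boundedDerivs}, citing \cite{compact}, whose (omitted) proof observes that the $\partial_i\partial_jH$ are the components, in a parallel orthonormal frame, of an invariantly defined symmetric form on the wavefront bundle $Z_\perp/Z$ over the \emph{compact} quotient, measured by a positive-definite bundle metric and hence bounded. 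You instead obtain the bound by deck-group equivariance; this is a legitimate alternative and is really the coordinate face of the same invariance statement, with the advantage of being self-contained and the cost of having to classify the isometries of $(\reals^n,g^H)$ that fix $\partial_v$.

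Two corrections are needed in that deck-group step. First, the rotation part $O_\gamma$ cannot depend on $u$: writing out the condition that such a transformation preserve $g^H$, the cross terms $dx^a\,du$ force $\partial_a f_1=-(O^TO'\mathbf{x})_a-(O^Tc')_a$, and the integrability condition $\partial_a\partial_b f_1=\partial_b\partial_a f_1$ makes the antisymmetric matrix $O^TO'$ symmetric, hence zero. This is not cosmetic: were $O_\gamma$ genuinely $u$-dependent, the correction to $H$ would be quadratic rather than affine in $\mathbf{x}$, and $\mathrm{Hess}_{\mathbf{x}}H$ would \emph{not} descend; with $O_\gamma$ constant your equivariance claim is exactly right and the Hessian is bounded by its supremum over a compact fundamental domain. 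Second, your assertion that $a(u)$ grows at most linearly in $u$ is neither obviously true (it would require control of $c_\gamma''$) nor needed: to rule out blow-up in finite $u$-time it suffices that $\lvert\nabla_{\mathbf{x}}H(u,\mathbf{x})\rvert\le a(u)+b\lvert\mathbf{x}\rvert$ with $a$ continuous, after which the Gr\"onwall estimate on each compact $u$-interval (equivalently, the Lipschitz/Picard--Lindel\"of argument the paper uses) finishes the proof.
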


To prove this statement, let us first examine the completeness of a standard pp-wave (Eq. \ref{eq:standardPp}). Then via Theorem \ref{thm:cover} we can make statements about the completeness of compact pp-waves. Recall that a standard pp-wave may be written in the global coordinate chart $\{u,v,x^1,\dots,x^{n-2}\}$ as
\begin{equation}
g = 2 d u d v+H\left(u, \mathbf{x}\right) d u^{2}+\delta _{a b}d x^{a} d x^{b}.
\end{equation}

\begin{proposition}\cite[lemma 8]{compact}
The standard pp-wave metric is geodesically complete if
\[
    \left|\frac{\partial^{2} H}{\partial x^{i} \partial x^{j}}\right| \leq c
\]
for $0 < c \in \reals$ for all $i,j \in \{1\dots,n-2\}$
\end{proposition}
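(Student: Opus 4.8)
The plan is to read off the geodesic equations in the global Brinkmann chart $\{u,v,\mathbf{x}\}$ and to exploit the triangular structure forced by the metric Eq.~\ref{eq:standardPp}. The nonzero Christoffel symbols are the obvious $(n-2)$-dimensional analogues of those displayed above for the classical case, namely $\nabla_{\partial_{x^a}}\partial_u = \tfrac12 H_a\,\partial_v$ and $\nabla_{\partial_u}\partial_u = \tfrac12 H_u\,\partial_v - \tfrac12\sum_a H_a\,\partial_{x^a}$, with $H_a \vcentcolon= \partial H/\partial x^a$. Hence a geodesic $\gamma(t)=(u(t),v(t),\mathbf{x}(t))$ satisfies
\[
\ddot u = 0,\qquad \ddot x^a = \tfrac12\,\dot u^{\,2}\,H_a(u,\mathbf{x}),\qquad \ddot v = -\tfrac12\dot u^{\,2}\,H_u(u,\mathbf{x}) - \dot u\!\sum_a H_a(u,\mathbf{x})\,\dot x^a.
\]
The first equation gives $u(t) = \lambda t + \mu$, defined for all $t$. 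If $\lambda = 0$ the remaining equations collapse to $\ddot x^a = \ddot v = 0$ and $\gamma$ is manifestly complete; so I would assume $\lambda \neq 0$, reparametrise $\gamma$ by the affine parameter $u$ itself, and work with the decoupled transverse system $\tfrac{d^2 x^a}{du^2} = \tfrac12 H_a(u,\mathbf{x})$.

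Second --- and this is the heart of the argument --- I would run a Gr\"onwall blow-up estimate on the transverse system, using the Hessian bound precisely to pin down the borderline growth rate. Fix a compact interval $[0,T]$ in the $u$-variable (the backward direction is symmetric). Since $H$ is smooth, $M_T \vcentcolon= \sup\{\,|H_a(u,\mathbf{0})| : u\in[0,T],\ 1\le a\le n-2\,\}$ is finite, and integrating the bound $|\partial^2 H/\partial x^i\partial x^j|\le c$ along the segment from $\mathbf{0}$ to $\mathbf{x}$ yields $|H_a(u,\mathbf{x})| \le M_T + c\sqrt{n-2}\,|\mathbf{x}|$ for all $u\in[0,T]$. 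Writing the system in first-order form with $q = \mathbf{x}$ and $p = dq/du$, this gives $|p'| \le a_0 + b_0|q|$ with $a_0,b_0$ depending only on $c,n,T,M_T$. Then $E(u) \vcentcolon= |q(u)|^2 + |p(u)|^2$ obeys $E' \le 2|q||p| + 2|p|(a_0 + b_0|q|) \le C(1+E)$ for a suitable constant $C$, so Gr\"onwall's inequality bounds $E$ on $[0,T]$; in particular $(q,p)$ stays in a fixed compact subset of $\reals^{n-2}\times\reals^{n-2}$ there, and the standard escape lemma for ODEs forces the maximal solution interval to contain $[0,T]$. Since $T$ is arbitrary, $\mathbf{x}(u)$ is defined on all of $\reals$.

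Third, I would recover $v$ by two quadratures. With $u$ affine and $\mathbf{x}(\cdot)$ now known to be complete, along any compact $t$-interval the pair $(u(t),\mathbf{x}(t))$ stays in a compact set, so $H_u$, $H_a$ and $\dot x^a$ are bounded along $\gamma$, whence $\ddot v$ is bounded and $v$ is finite on every compact interval. Therefore every inextendible geodesic is defined on all of $\reals$, i.e.\ $(\reals^n, g)$ is geodesically complete.

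The step I expect to be delicate is the Gr\"onwall estimate: one must be sure the Hessian bound really does force $H_a(u,\cdot)$ to grow at most linearly in $\mathbf{x}$, with a $u$-locally bounded intercept $M_T$ (this is exactly where smoothness of $H$ enters, to control $H_a(u,\mathbf{0})$), and that linear growth of the ``force'' is precisely the threshold below which finite-time blow-up cannot occur. This is the same mechanism underlying the Ehlers--Kundt discussion: for $H$ with genuinely superquadratic $\mathbf{x}$-growth --- hence unbounded Hessian --- the transverse geodesics can reach infinity in finite affine parameter, so the hypothesis $|\partial^2 H/\partial x^i\partial x^j|\le c$ is sharp for this line of proof.
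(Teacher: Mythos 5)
Your proof is correct and follows the same overall decomposition as the paper's: use $\ddot u=0$ to reduce everything to the transverse system in $\mathbf{x}$, note that $v$ then comes for free by quadrature because the $\ddot v$ equation does not involve $v$, and use the Hessian bound to rule out finite-time blow-up of $\mathbf{x}(u)$. The only genuine difference lies in how that last step is executed. The paper observes that the bounded Hessian makes the first derivatives $H_a$ globally Lipschitz in $\mathbf{x}$ (via the mean value theorem) and then cites the global form of Picard--Lindel\"of; you instead derive the linear growth bound $|H_a(u,\mathbf{x})|\le M_T+c\sqrt{n-2}\,|\mathbf{x}|$ and run an explicit Gr\"onwall estimate on $E=|q|^2+|p|^2$ followed by the escape lemma. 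These are two packagings of the same mechanism---your Gr\"onwall argument is essentially the proof of the global existence theorem the paper invokes---but yours is more self-contained and makes explicit a point the paper leaves implicit, namely that one also needs $H_a(u,\mathbf{0})$ to be locally bounded in $u$ (supplied by smoothness of $H$), since the Hessian bound alone controls only the $\mathbf{x}$-dependence. One caveat on your closing remark: the hypothesis is sharp only for this line of proof, not as a characterization of completeness; a bounded Hessian is sufficient but not necessary, as the Candela--Romero--S\'anchez result quoted later in the paper gives completeness for any harmonic $V=-H$ bounded below by $-b|q|^2$, which does not require a bounded Hessian.
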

\begin{proof}
    Let us examine the geodesic equations of the standard pp-wave metric: For a curve $\gamma$ with components $(u(s), v(s), x^1(s), \dots, x^{n-2}(s))$, the geodesic equation for the $u$-component is given by:
    \[
        \ddot{u}(s)  =0  \Longrightarrow u(s)=a s+b~\text{for some }a,b \in \reals\\
    \]
    that is, the $u$ component is defined on all of $\reals$. The remaining components of the geodesic equations are given by
    \begin{align}
    \ddot{v}(s)  &=-2 a \dot{x}^{k}(s) \frac{\partial H}{\partial x^{k}}-a^{2} \frac{\partial H}{\partial u},\\
    \ddot{x}^{k}(s)  &=a^{2} \frac{\partial H}{\partial x^{k}}. 
    \end{align}
    Since the $v$ equation only depends on the $x^k$ and not on $v$, then the solution is defined on $\reals$ provided that the $x^{k}$ are defined on $\reals$. Unfortunately the $x^k$ equation does not in general admit solutions on all of $\reals$. An example (as in \cite{LeistnerTalk})
    is found when $H = \frac{1}{2}(x^j)^4$ for some $j\in\{1,\dots,n-2\}$. In this case, the only nontrivial equation (when $a\neq 0$) for the $x^k(s)$ is
    \[
        \ddot{x}^j(s) = 2a^2(x^j)^3
    \]
    which has solution 
    \[
        x^j(s) = \frac{1}{1-as}, ~~s\in (-\infty,1/a).
    \]
    Since this solution develops a singularity, so too does the solution for $v$, and we conclude that the standard pp-wave is geodesically incomplete in this case. So then when are the solutions of the $\ddot{x}^k$ equations defined on all of $\reals$ (thus making the pp-wave geodesically complete)? This is guaranteed when the second derivatives of $H$ are bounded; as then by the mean value theorem the first derivatives are Lipschitz continuous which suffices in view of the Picard--Lindel\"of theorem.
    
\end{proof}

One may think that this result yields many examples of complete pp-waves which are non-plane (and are instead just bounded in second derivative of $H$) but in fact we have not imposed that the pp-wave is \textit{gravitational}. For a gravitational pp-wave $H$ is harmonic, and a harmonic function can only have bounded second derivatives (corresponding to a complete pp-wave by the previous proposition) if it is quadratic and thus a plane wave\footnote{Note that this is the content of Remark 5 of the original work \cite{compact}. Their Remark 5 concludes with ``thus a pp-wave", but this should in fact read ``thus a plane wave". The correct conclusion is reached in this article, and we thank Prof. Leistner for confirming.}.\\

In order to apply this result to our case, that is to prove that a compact pp-wave is geodesically complete (Theorem \ref{thm:compactComplete}), we must prove that the second derivatives of $H$ are bounded in the compact case. The following proposition resolves this question:
\begin{proposition}\label{prop:boundedDerivs}
    Consider a compact pp-wave. By Theorem \ref{thm:cover}, its universal cover is a standard pp-wave $(\reals^{n}, g = 2 d u d v+H\left(u, \mathbf{x}\right) d u^{2}+\delta _{a b}d x^{a} d x^{b} )$. Then the second derivatives of $H$ are bounded
    \[
        0 \leq \frac{\partial^{2} H}{\partial x^{i} \partial x^{j}} \leq c ~~~\forall ~i,j = 1,\dots,n-2.
    \]
\end{proposition}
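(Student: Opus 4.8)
Proof proposal (for Proposition~\ref{prop:boundedDerivs}).

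The plan is to work on the universal cover furnished by Theorem~\ref{thm:cover} and to exploit the following idea: although \emph{no} scalar curvature invariant of a pp-wave can detect the size of $H$ (they all vanish, Section~\ref{sec:invariants}), the \emph{eigenvalues} of the transverse Hessian of $H$ are changed only by an orthogonal conjugation under the deck transformations, hence are constant along deck orbits and therefore descend to a continuous function on the compact base $M$, where they are automatically bounded. Concretely, write $(\tilde M, g^H) = \big(\reals^n,\ 2\,du\,dv + H(u,\mathbf{x})\,du^2 + \delta_{ab}\,dx^a dx^b\big)$ for the universal cover, with deck group $\Gamma$, $M = \tilde M/\Gamma$, and $\tilde Z = \partial_v$ the lift of the defining covariantly constant null field; since $M$ is compact, $\Gamma$ acts cocompactly and by isometries of $g^H$. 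Put $\mathcal H(p) \defeq \big(\partial^2 H/\partial x^i\partial x^j\big)(p)$, a smooth field of symmetric $(n-2)\times(n-2)$ matrices on $\tilde M$, independent of $v$; a short computation identifies each entry of $\mathcal H$, up to sign, with a curvature component $\mathrm{Rm}(\partial_i,\partial_u,\partial_u,\partial_j)$, so $\|\mathcal H\|$ measures precisely the size of the curvature along $Z$.

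The first substantive step is to pin down the shape of a deck transformation $\psi \in \Gamma$. Because $\tilde Z$ is the lift of a globally defined vector field, $\psi_*\tilde Z = \tilde Z$; and because $\tilde Z^\flat = g^H(\tilde Z,\cdot) = du$ (Proposition~\ref{prop:metricProperties}(ii)) is parallel and invariant under isometries, $\psi^*(du) = du$, so that $u\circ\psi = u + c$ for a constant $c$. Together these force $\psi(u,v,\mathbf{x}) = \big(u+c,\ v + V_1(u,\mathbf{x}),\ \mathbf X(u,\mathbf{x})\big)$ with $V_1,\mathbf X$ independent of $v$. Imposing $\psi^*g^H = g^H$ and matching coefficients of $dx^a dx^b$, of $du\,dx^a$, and of $du^2$ in turn gives: (i) the transverse Jacobian $\partial\mathbf X/\partial\mathbf x$ is orthogonal everywhere, so $\mathbf X(u,\mathbf x) = O(u)\mathbf x + \mathbf b(u)$ with $O(u)\in O(n-2)$; (ii) $\nabla_{\!\mathbf x}V_1$ must be curl-free while the $du\,dx^a$-matching makes it equal to $-O(u)^{\top}\big(\dot O(u)\mathbf x + \dot{\mathbf b}(u)\big)$, and since $O^{\top}\dot O$ is antisymmetric this forces $\dot O\equiv 0$ --- so $O$ is a \emph{constant} orthogonal matrix and $V_1$ is affine in $\mathbf x$; (iii) the $du^2$-matching then reads $H(u,\mathbf x) = H\big(u+c,\ O\mathbf x + \mathbf b(u)\big) + (\text{affine function of }\mathbf x)$.

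Differentiating (iii) twice in $\mathbf x$ annihilates the affine remainder and yields the equivariance $\mathcal H(\psi p) = O\,\mathcal H(p)\,O^{\top}$ for every $\psi\in\Gamma$. Hence the unordered tuple of eigenvalues of $\mathcal H(p)$ --- equivalently the operator norm $\|\mathcal H(p)\|_{\mathrm{op}}$ --- is $\Gamma$-invariant, so it descends to a continuous function on the compact manifold $M$ and is therefore bounded by some $c>0$. Since $\big|\partial^2 H/\partial x^i\partial x^j\big| = \big|\mathbf e_i^{\top}\mathcal H \mathbf e_j\big| \le \|\mathcal H\|_{\mathrm{op}}$, this gives $\big|\partial^2 H/\partial x^i\partial x^j\big| \le c$ throughout $\tilde M$; this two-sided bound on the transverse Hessian is exactly what the completeness argument of Theorem~\ref{thm:compactComplete} consumes, via Lipschitz control of $\nabla H$ and Picard--Lindelöf (and, in the Ricci-flat case, it upgrades at once to a Liouville statement forcing $\mathcal H$ to be constant in $\mathbf x$, i.e.\ $H$ quadratic).

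The main obstacle is step (ii): one must run the $\psi^*g^H = g^H$ bookkeeping carefully enough to see that the orthogonal part $O$ is forced to be $u$-independent. This is essential, because if $O$ were allowed to vary with $u$ then the discrepancy $H - H\circ\psi$ would acquire a genuinely quadratic term $|\dot O\mathbf x + \dot{\mathbf b}|^2$, and the clean conjugation-equivariance of $\mathcal H$ --- the engine of the whole proof --- would be lost. Everything else (the shape of $\psi$ coming from the preservation of $\partial_v$ and of $du$, the identification of $\mathcal H$ with a curvature block, and the descent of a $\Gamma$-invariant continuous function to the compact quotient) is routine.
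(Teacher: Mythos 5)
Your argument is correct and essentially complete, and it supplies a proof where the paper gives none: the paper's ``proof'' of Proposition \ref{prop:boundedDerivs} is simply a citation of \cite[Lemma 9]{compact}. Your route differs from that reference's. You classify the deck transformations explicitly: preservation of $\partial_v$ and of $du=Z^\flat$ pins $\psi$ down to $(u+c,\,v+V_1(u,\mathbf{x}),\,\mathbf{X}(u,\mathbf{x}))$, the $dx^adx^b$- and $du\,dx^a$-matching force $\mathbf{X}=O\mathbf{x}+\mathbf{b}(u)$ with $O$ a \emph{constant} orthogonal matrix (the antisymmetry of $O^{\top}\dot O$ against the required symmetry of $\partial_a\partial_bV_1$ is exactly the right observation), and the $du^2$-matching then yields the conjugation-equivariance $\mathcal{H}(\psi p)=O\,\mathcal{H}(p)\,O^{\top}$, so that $\|\mathcal{H}\|_{\mathrm{op}}$ descends to the compact quotient and is bounded. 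The reference instead avoids touching the deck group at all: it packages the transverse Hessian invariantly as a symmetric bilinear form on the screen bundle $Z_\perp/Z$ over $M$ itself, namely $\rho([X],[Y])\defeq \mathrm{Rm}(X,U,U,Y)$ for any local $U$ with $g(U,Z)=1$ (well defined because the pp-wave condition $R|_{Z_\perp\wedge Z_\perp}=0$ kills all the ambiguity), whose norm with respect to the parallel screen metric $\bar g$ of Theorem \ref{thm:QB} is a continuous function on compact $M$. Both proofs run on the same engine --- descend a $\Gamma$-invariant scalar to the compact base --- and you even note the curvature identification $\mathcal{H}\sim\mathrm{Rm}(\partial_i,\partial_u,\partial_u,\partial_j)$ without exploiting it; had you done so, you would have recovered the reference's shorter argument and skipped the deck-transformation bookkeeping entirely. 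Your version buys a concrete description of the isometry group of a standard pp-wave preserving the Brinkmann form (consistent with the gauge freedoms listed after Eq. \ref{eq:PpNoA}), which is of independent use. One small point of agreement with reality rather than with the statement as printed: you prove $|\partial^2H/\partial x^i\partial x^j|\le c$, not $0\le\partial^2H/\partial x^i\partial x^j\le c$; the stated lower bound must be a misprint, since it already fails for the paper's own example $H\in C^\infty(\mathbb{T}^n)$ on the torus, and the two-sided bound is all that the completeness argument of Theorem \ref{thm:compactComplete} requires.
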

\begin{proof}
    We again omit the proof in favour of brevity. See \cite[lemma 9]{compact}. 
\end{proof}

Thus one arrives at a \textbf{proof of theorem (B)}:
\begin{proof}
    Let $(M,g)$ be a compact pp-wave. By theorem (A) the universal cover is isometric to a standard pp-wave, and by the above proposition such a standard pp-wave is complete. Therefore $(M,g)$ itself is complete.
\end{proof}

We finally arrive at the statement which resolves the EK conjecture in the case of compact pp-waves.
\begin{theorem} \cite[Corollary 1]{compact}
    Every compact Ricci-flat pp-wave is a plane wave\footnote{Note that there are examples of compact non-plane pp-waves, but they are not Ricci-flat.}.
\end{theorem}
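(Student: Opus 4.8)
The plan is to pass to the universal cover, use Ricci-flatness to make the wavefront profile harmonic, and then apply a Liouville-type argument to the second derivatives of that profile. First I would invoke Theorem~\ref{thm:cover}: the universal cover $(\widetilde{M},\widetilde{g})$ of the compact Ricci-flat pp-wave $(M,g)$ is globally isometric to a standard pp-wave
\[
\big(\reals^{n},\ 2\,du\,dv + H(u,\mathbf{x})\,du^{2} + \delta_{ab}\,dx^{a}dx^{b}\big),
\]
with the lift of the defining covariantly constant null field mapped to $\partial_v$. Since covering maps are local isometries, $\widetilde{g}$ is again Ricci-flat, and for the standard pp-wave normal form this is the usual vacuum condition that $H(u,\cdot)$ be harmonic on $\reals^{n-2}$ for each fixed $u$, i.e. $\sum_{k}\partial^{2}H/\partial(x^{k})^{2}=0$. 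In addition, because $\widetilde{M}$ covers the compact manifold $M$, Proposition~\ref{prop:boundedDerivs} supplies a constant $c$ with $0\le \partial^{2}H/\partial x^{i}\partial x^{j}\le c$ for all $i,j$.

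Next I would differentiate the harmonicity relation twice to see that, for each fixed $u$, every function $\mathbf{x}\mapsto \partial^{2}H/\partial x^{i}\partial x^{j}(u,\mathbf{x})$ is itself harmonic on $\reals^{n-2}$; by the bound from Proposition~\ref{prop:boundedDerivs} it is also bounded, hence constant by Liouville's theorem. Therefore all second $\mathbf{x}$-derivatives of $H$ depend on $u$ alone, which forces $H(u,\mathbf{x})$ to be, for each $u$, a polynomial of degree at most two in $\mathbf{x}$ (with coefficients depending on $u$, the purely quadratic part being trace-free by harmonicity). Equivalently all third $\mathbf{x}$-derivatives of $H$ vanish identically, which by the Lemma in Section~\ref{sec:plane} (the characterisation $\nabla_{X}R=0$ for $X\in Z_\perp$) is exactly the curvature definition of a plane wave, Definition~\ref{def:planecurvature}; the residual $\mathbf{x}$-linear and $\mathbf{x}$-constant terms can be stripped off by the admissible, $v$-independent change of Brinkmann chart recorded in the Gauge Freedom remark (cf.\ \cite[Sec.~2.2]{blau}), bringing $\widetilde{g}$ to the form $2\,du\,dv+\sum_{i,j}h_{ij}(u)x^{i}x^{j}\,du^{2}+\delta_{ab}\,dx^{a}dx^{b}$ of Eq.~\ref{eq:plane}. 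Hence $(\widetilde{M},\widetilde{g})$ is a plane wave.

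Finally, since being a plane wave is the purely local curvature condition of Definition~\ref{def:planecurvature}, it descends along the covering projection $\widetilde{M}\to M$: the identity $\nabla_{X}R=0$ for $X\in Z_\perp$ holds on $M$ as well, so $(M,g)$ is a plane wave. (Note that the geodesic completeness of $(M,g)$ from Theorem~\ref{thm:compactComplete} is not used in this argument; on the contrary, it is precisely this corollary that reconciles that completeness with the Ehlers--Kundt picture, by showing there are no compact non-plane vacuum pp-waves whose completeness one would otherwise have to explain.)

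I expect the genuine content to be concentrated entirely in Proposition~\ref{prop:boundedDerivs}, i.e.\ in the fact that compactness of the quotient forces the Hessian of $H$ on the cover to be \emph{globally} bounded --- that is where the compactness hypothesis really bites, and once it is granted the remainder is a one-line Liouville estimate together with a harmless normalisation of coordinates. The only other point requiring a little care is that the relevant notion of ``plane wave'' in dimension $n>4$ is the natural higher-dimensional analogue (quadratic $H$ in the $(n-2)$ wavefront variables, equivalently the curvature condition of Definition~\ref{def:planecurvature}), and that the coordinate changes used to remove the lower-order terms in $\mathbf{x}$ both preserve the standard-pp-wave normal form and do not disturb the harmonicity of $H$.
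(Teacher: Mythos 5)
Your proposal is correct and follows essentially the same route as the paper: lift to the universal cover via Theorem~\ref{thm:cover}, use Ricci-flatness to get harmonicity of $H$ in $\mathbf{x}$, combine the boundedness of the Hessian from Proposition~\ref{prop:boundedDerivs} with the harmonicity of $\partial_i\partial_j H$ to conclude these are constant in $\mathbf{x}$ (the paper phrases this via the maximum principle, you via Liouville's theorem --- the same estimate), and hence $H$ is at most quadratic. Your added remarks on descending the plane-wave condition to $M$ and on stripping the lower-order terms by an admissible Brinkmann change of chart are consistent with, and slightly more explicit than, the paper's treatment.
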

\begin{proof}
    Let $(M, g)$ be a compact pp-wave and let $(\reals^{n+2}, g^H)$ be the standard pp-wave that is globally isometric to the universal cover of $(M, g)$. As in Proposition \ref{prop:boundedDerivs}, we have that the second derivatives of $H$ are bounded. If $g$ is Ricci-flat, so too is $g^H$ , and thus $H$ is harmonic with respect to the $x^i$ directions 
    \[
        \sum_{i=1}^{n-2} \partial_i^2 H = 0.
    \]
    But this implies that also $\partial_i \partial_j H$ is
    harmonic in the same sense, and thus, by the maximum principle for harmonic functions \cite[page 7]{harmonic}, independent of the $x^i$ components. Hence,
    \[
        H(u,\mathbf{x})=\sum_{i, j=1}^{n-2} a_{i j}(u) x^{i} x^{j}+b_{i}(u) x^{i}+c(u),
    \]
    where $a_{ij}$, $b_i$ and $c$ depend only on $u$ and not the $x^i$, and thus since $H$ is quadratic in $x^i$, $(M,g)$ is a plane wave.
\end{proof}

Therefore as stated, one need not conjecture about the incompleteness of non-plane vacuum compact pp-waves, as there are no such pp-waves. As a result, the Ehlers--Kundt conjecture has been resolved in the compact case.

\subsection{Case of Failure}
Let us outline very briefly the following case in which the Ehlers--Kundt conjecture is known \textit{not} to hold:\\

\textbf{Impulsive case}:\\
Though usually omitted for brevity in this article, the continuity of the characteristic function $H$ of a pp-wave in $u$ of the adapted coordinates is in fact vital. To quote from \cite[Sec. 1.3 (d)]{EKConj}:
\begin{quote}
    \textit{Impulsive waves have a non-continuous profile type
$H(u,z=(x,y)) = f(z)\delta(u)$ for some (generalized) delta-function $\delta$ and smooth $f$. Thus, the function $H$ can be regarded as $z$-harmonic when $\Delta f = 0$. The mentioned results of completeness yield counterexamples to the EK conjecture in the impulsive setting, showing the necessity of continuity in $u$ as well as the appropriate smoothness of $H$.}
\end{quote}

This necessary smoothness and continuity in the non-autonomous case ($H$ not independent of $u$) amounts to 
\begin{itemize}
    \item $H$ should be $C^1$ in $u$ (for constructing Levi-Civita Connection)
    \item $H$ should be $C^2$ in $z$ (to impose harmonicity, i.e. vacuum condition)
\end{itemize}

(Note that, in the second condition, being $C^2$ in $z$ is equivalent to being analytic in $z$, a well known property of harmonic functions (see, e.g., \cite[Theorem~1.28]{Axler}).  For the relevant references in the study of such impulsive waves, consult \cite[Sec. 1.3 (d)]{EKConj}.

\section*{Acknowledgements}
The authors wish to thank Prof. Miguel S\'anchez for numerous helpful discussions, and Prof. Paweł Nurowski for valuable clarifications. We also thank Luke Timmons and John Walker for their continued support and assistance. Finally, we thank the anonymous reviewers who provided numerous invaluable suggestions.

\section*{Data Availability}
Data sharing not applicable to this article as no datasets were generated or analysed during the current study.

\printbibliography
\appendix

\section{Proof of Theorem \ref{thm:adapted}}\label{app:proof}
Note the same notation as described at the beginning of Section \ref{sec:coordDesc} will be used throughout this proof. We also reproduce the statement of the theorem here for completeness.

\begin{theorem}\textup{Coordinates adapted to a covariantly constant\footnote{Note that for this particular result, one may relax the condition that $Z$ be covariantly constant. For details see Sec. \ref{sec:penroseLimit}. In this context however, $Z$ is always assumed to be covariantly constant.} null vector field.}\\
    If a Lorentzian manifold $(M,g)$ admits a covariantly constant, null vector field $Z$, then in a neighbourhood $U$ of each $p \in M$ there exists a local coordinate chart $\varphi = \{u,v,\mathbf{x}\}$ on $U$ which is ``adapted to $Z$" such that 
     $$Z\rvert_U = \partial_v = \nabla u.$$
    \end{theorem}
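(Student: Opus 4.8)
The plan is to construct the two coordinates $u$ and $v$ separately and then verify that they assemble into a single chart around $p$. First I would show that the metric-dual one-form $Z^{\flat} \defeq g(Z,\cdot)$ is closed: since the Levi-Civita connection is torsion-free, $dZ^{\flat}(X,Y) = (\cd{X}{Z^{\flat}})(Y) - (\cd{Y}{Z^{\flat}})(X) = g(\cd{X}{Z},Y) - g(\cd{Y}{Z},X) = 0$, using $\nabla Z = 0$. By the Poincaré lemma, on a contractible neighbourhood $U_0 \ni p$ there is a smooth function $u$ with $du = Z^{\flat}$, and raising the index gives the coordinate-free identity $\nabla u = Z$ on $U_0$. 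From this I would record two facts: (a) $Z$ is nowhere zero — a parallel vector field vanishing at one point vanishes identically, and $Z$ is nontrivial — so, $g$ being nondegenerate, $Z^{\flat} = du$ is nowhere zero and $u$ has no critical points; and (b) $Z(u) = du(Z) = g(Z,Z) = 0$, so $u$ is constant along the integral curves of $Z$, with each level set $\Sigma_c = u^{-1}(c)$ a smooth null hypersurface to which $Z$ is tangent.

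Next I would straighten $Z$: since $Z$ is nowhere zero, the flow-box theorem furnishes, on a (possibly smaller) neighbourhood of $p$, coordinates $(v, w^1, \dots, w^{n-1})$ with $Z = \partial/\partial v$. By (b), $\partial u/\partial v = 0$, so $u = u(w^1,\dots,w^{n-1})$, and by (a) some $\partial u/\partial w^{j} \neq 0$; relabelling, take $j = 1$. Then, by the inverse function theorem, $(v, w^1, \dots, w^{n-1}) \mapsto (v, u, w^2, \dots, w^{n-1})$ is a diffeomorphism near $p$, and setting $\mathbf{x} = (x^1,\dots,x^{n-2}) \defeq (w^2,\dots,w^{n-1})$ yields a chart $\varphi = \{u,v,\mathbf{x}\}$ on a neighbourhood $U$ of $p$. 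The last thing to check is that $\partial/\partial v$ is not disturbed by this change: since $u, x^1,\dots,x^{n-2}$ depend on $(w^1,\dots,w^{n-1})$ only, the inverse substitution expresses each $w^i$ as a function of $(u,\mathbf{x})$ only, so moving in the new $v$-direction with $(u,\mathbf{x})$ fixed is the same as moving in the old $v$-direction with all $w^i$ fixed; hence $\partial/\partial v$ in $\varphi$ equals $\partial/\partial v$ in the flow-box chart, which is $Z$. Combined with $\nabla u = Z$, this is precisely $Z|_U = \partial_v = \nabla u$; as a bonus the coordinate slices $u = \mathrm{const}$ are the null hypersurfaces $\Sigma_c$, matching Proposition \ref{prop:metricProperties}.

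The argument has no deep step; the part that needs care is the \emph{compatibility} of the two constructions — the function $u$ comes from the Poincaré lemma applied to $Z^{\flat}$, the coordinate $v$ from straightening $Z$ — namely verifying both that $u$ may legitimately replace one of the transverse flow-box coordinates (this is exactly where nonvanishing of $du$, hence of $Z$, is used) and that installing $u$ as a coordinate leaves the coordinate field $\partial_v = Z$ intact. One should also note that $U$ is only a local chart; globality of $u$ would require, e.g., $M$ simply connected so that $Z^{\flat}$ is exact on all of $M$, but the statement only asserts a local chart, so this is not an obstruction here.
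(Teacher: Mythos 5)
Your proposal is correct and follows essentially the same route as the paper's proof in Appendix \ref{app:proof}: closedness of $Z^{\flat}$ via $\nabla Z=0$ plus the Poincar\'e lemma to get $u$ with $\nabla u = Z$, the flow-box theorem to straighten $Z$ into $\partial_v$, and the substitution of $u$ for one transverse flow-box coordinate justified by nonvanishing of $du$ together with $du(Z)=g(Z,Z)=0$. The only differences are cosmetic — you apply the Poincar\'e lemma before straightening rather than after, and you verify that $\partial_v$ survives the coordinate change by a functional-dependence argument where the paper reads the same facts off the second row of the Jacobian in Eq.~\ref{eq:Jacobian}.
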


\begin{proof}
    We perform this proof in three steps: first we construct local coordinates in which $Z$ is a coordinate vector field, then we show there exists a function $u:M\mapsto \reals$ such that $Z = \text{grad}(u) = \nabla u$. Finally we show that such a function $u$ may replace one of the coordinate functions in the initially constructed system, while maintaining the property $Z = \partial_v$, giving the desired result.\\
    
    \textbf{Step 1: Local coordinates including $\mathbf{v}$}\\
        First note that $Z$ is nowhere $0$. This is because the zero vector is by convention spacelike, but by definition $Z$ is everywhere null.  This can also be seen from the fact that $Z$ is Killing, since the Killing condition is trivially satisfied. A Killing field is uniquely determined by $Z\rvert _p$ and $\nabla Z \rvert_p$ for some $p \in M$. Therefore since $\nabla Z = 0$ everywhere, if for some $p$ we had $Z\rvert _p = 0$ then $Z$ would be identically 0.\\
            
        By the straightening (or ``flow-box", or ``canonical form") theorem \cite[Theorem 2.4.3]{flowbox} for vector fields, since $Z$ is everywhere regular (i.e. nonzero), we can always construct a local coordinate system such that $Z$ is a coordinate vector field. We label this coordinate $v$, such that we have $Z = \tilde{\partial}_v$ in the local coordinates $\{\tilde{x}^0,v,\tilde{x}^1,\ldots,\tilde{x}^{n-2}\}$. We define these coordinates with a tilde because we are interested in constructing a new coordinate system from these coordinates, and $v$ is the second coordinate to agree with usual pp-wave notational conventions. Let us also write the coordinate vector fields for this coordinate system with a tilde as $\tilde{\partial}_v$ and $\tilde{\partial}_i$.\\
        
    \textbf{Step 2: Introducing the coordinate $\mathbf{u}$}\\
        To obtain the function $u$, let us consider the one-form $Z^\flat$ dual to $Z$ via the metric $g$. As shown in Appendix \ref{app:extderiv}, the exterior derivative $d\omega$ of a one-form $\omega$ is proportional to $Alt(\nabla \omega)$, the antisymmetric part of the two-form $\nabla \omega$. Since $Z$ is covariantly constant, by the compatibility of the metric with $\nabla$ we have that $\nabla Z^\flat = 0$  and thus $Alt(\nabla Z^\flat) = 0$. Therefore $d(Z^\flat) = 0$, that is $Z^\flat$ is \textit{closed}, and via the Poincar\'{e} lemma for covector fields\footnote{For details see Lee, Smooth manifolds (2nd edition) Theorem 11.49 and Corollary 11.50.}, any closed one-form can locally be written as $Z^\flat = du$ for some function $u: M \rightarrow \reals$. Then by definition of the gradient, we have $Z = grad(u) = \nabla u$.\\
        
    \textbf{Step 3: Constructing the local coordinate system $\{u,v,\mathbf{x}\}$}\\
        We now transform the coordinate system $\{\tilde{x}^0,v,\tilde{x}^1,\ldots,\tilde{x}^{n-2}\}$ into a new coordinate system $\{u,v,x^1,\ldots,x^{n-2}\}$, and show that the property $Z = \partial_v$ also holds in the new coordinates
        \footnote{It is necessary to show this even though the function $v$ is used in both coordinate systems, as given $n$ functions $\tilde{f}^i$ with linearly independent differentials $\tilde{df}^i$, we may form a local coordinate system $\{\tilde{f}^1,\ldots,\tilde{f}^n\}$ in which the coordinate vector fields $\tilde{\partial_i}$ are determined by the $n^2$ equations $\tilde{df}^j(\tilde{\partial_i})=\delta^j_i$. That is, the coordinate vector field $\tilde{\partial_{i}}$ depends on all the coordinate functions. If we transform to a new coordinate system $\{f^1,\ldots,f^{i-1},\tilde{f}^i,f^{i+1},\ldots,f^n\}$ which contains $\tilde{f}^i$, we have $\partial_i = \tilde{\partial_i} \iff df^j(\tilde{\partial_i}) = \delta^j_i$ for all $j$.}.
        To find the appropriate transformation, first consider $du$ acting on the coordinate vector fields $\tilde{\partial_i}$. We have $du(\tilde{\partial_v}) = Z^\flat(Z) = g(Z,Z) = 0$ and we define $du(\tilde{\partial_i}) =\vcentcolon c_i$ where the $c_i$ are smooth functions on $M$. Since the coordinate vector fields form a frame, at any $p \in M$ we cannot have $c_i = 0$ for all $i$ and $c_0=0$, as if this were true we would have $du(X) = 0$ for all vector fields $X$, which in turn implies $du = Z^\flat = 0$. But since $Z$ is nonzero, so too is $Z^\flat$. Without loss of generality, assume that $c_0 \neq 0$ (can always be done by reordering/relabelling the coordinate system).\\
        
        We now claim that by replacing $\tilde{x}^0$ by $u$ and taking $x^i = \tilde{x}^i$ for $i \in \{1,\ldots,n-2\}$, the set of functions $\{u,v,x^1,\ldots,x^{n-2}\}$ form a valid coordinate system. This can be achieved by verifying that the Jacobian $J$ of the coordinate transform is invertible. This is easily seen from the fact that $J$ in Eq. \ref{eq:Jacobian} (where $\mathds{1}_{n-1}$ is the identity matrix in $n-1$ dimensions) has linearly independent columns for $c_0 \neq 0$.
        
        \begin{equation}\label{eq:Jacobian}
        J = 
        \begin{pmatrix}
            du(\tilde{\partial_0})  & dv(\tilde{\partial_0})    & dx^1(\tilde{\partial_0})  & \dots  & dx^{n-2}(\tilde{\partial_0})        \\
            du(\tilde{\partial_v})  & dv(\tilde{\partial_v})    & dx^1(\tilde{\partial_v})  & \dots  & dx^{n-2}(\tilde{\partial_v})        \\
            du(\tilde{\partial_1})  & dv(\tilde{\partial_1})    & dx^1(\tilde{\partial_1})  & \dots  & dx^{n-2}(\tilde{\partial_1})        \\
            \vdots                  & \vdots                    & \vdots                    & \ddots & \vdots         \\
            du(\tilde{\partial}_{n-2})  & dv(\tilde{\partial}_{n-2})    & dx^1(\tilde{\partial}_{n-2})  & \dots  & dx^{n-2}(\tilde{\partial}_{n-2})
        \end{pmatrix}
        =
        \begin{pNiceMatrix}[columns-width = 4mm]
            c_0     & 0         & \Cdots        &           & 0        \\
            0       & \Block{4-4}<\Large>{\mathds{1}_{n-1}}         &             &     &         \\
            c_1     &          &         &           &         \\
            \vdots  &     &               &           &          \\
            c_{n-2}     &          &        &         & 
        \end{pNiceMatrix}
        \end{equation}
        
        It remains to show that $Z = \partial_v$ in the new coordinates. That is, we must have 
        $du(Z) = dx^i(Z) = 0$ for all $i \in \{1,\dots,n-2\}$ and $dv(Z) = 1$. This however can be read directly from the second row of the Jacobian (Equation \ref{eq:Jacobian}). We therefore have that $\{u,v,\mathbf{x}\} \vcentcolon = \{u,v,x^1,\dots,x^{n-2}\}$ is a valid local coordinate system on $M$.
    \end{proof}
    
\section{Exterior derivative of $k$-forms}\label{app:extderiv}
The following is based on \cite{jackLee}.
There are two primary conventions for defining a wedge product, which are in fact proportional to each other. The first is that which is used in \cite{spivak}: for $\alpha$ a $k$-form and $\beta$ an $l$-form
\begin{equation}
    \alpha \wedge \beta=\frac{(k+l) !}{k ! l !} \operatorname{Alt}(\alpha \otimes \beta).
\end{equation}
The second is that of \cite{nomizu}, and is given by 
\begin{equation}
    \alpha \wedge \beta=\operatorname{Alt}(\alpha \otimes \beta)
\end{equation}
In both cases, the wedge product is proportional to $\operatorname{Alt}(\alpha \otimes \beta)$. Let us choose convention 1 and write explicitly $\operatorname{Alt}(\nabla \omega)$ for a $k$-form $\omega$
\begin{equation}
    \operatorname{Alt}(\nabla \omega)\left(X_{1}, \cdots, X_{k+1}\right)=\frac{1}{(k+1) !} \sum_{\sigma \in S_{k+1}}(\operatorname{sgn} \sigma) \nabla \omega\left(X_{\sigma(1)}, \cdots, X_{\sigma(k+1)}\right)
\end{equation}
for smooth vector fields $X_j$. The exterior derivative of a $k$-form $\omega$ is given by
\begin{equation}
    \begin{array}{c}
\mathrm{d} \omega\left(X_{1}, \cdots, X_{k+1}\right)=\sum_{i=1}^{k}(-1)^{i+1} X_{i}\left(\omega\left(X_{1}, \cdots, \widehat{X}_{i}, \cdots, X_{k+1}\right)\right)+ \\
\sum_{i<j}(-1)^{i+j} \omega\left(\left[X_{i}, X_{j}\right], X_{1}, \cdots, \widehat{X}_{i}, \cdots, \widehat{X}_{j}, \cdots, X_{k+1}\right),
\end{array}
\end{equation}
    where the $\widehat{X}_j$ denotes that the argument $X_j$ is to be omitted. 
By taking an alternating product proportional to the wedge product with any constant of proportionality (i.e. convention) and comparing both the $d\omega$ and $Alt(\nabla \omega)$ in Riemann normal coordinates (as they are both tensors and thus can be compared pointwise), one finds that the expressions simplify greatly and are indeed proportional to each other.

\end{document}